\documentclass[peerreview]{IEEEtran}

\pdfminorversion=7

% Bibliography
\usepackage[square,numbers,sort&compress]{natbib}
\bibliographystyle{IEEEtran}

\usepackage{authblk}
\usepackage{amsthm}
\usepackage{amsmath}
\usepackage{mathrsfs}
\usepackage{amssymb} 
\usepackage{physics}

\usepackage{enumerate}
\usepackage{bbm} 
\usepackage{graphicx}
\usepackage{verbatim}

% Grant footnote
\newcommand\blfootnote[1]{%
  \begingroup
  \renewcommand\thefootnote{}\footnote{#1}%
  \addtocounter{footnote}{-1}%
  \endgroup
}

\theoremstyle{remark}	\newtheorem{theorem}{Theorem}
\theoremstyle{remark}	\newtheorem{lemma}[theorem]{Lemma}
\theoremstyle{remark}	\newtheorem{corollary}[theorem]{Corollary}
\theoremstyle{remark}	
\theoremstyle{remark} \newtheorem{definition}{Definition}
\theoremstyle{remark} \newtheorem{remark}{Remark}
\theoremstyle{remark} \newtheorem{example}{Example}

\usepackage{xcolor}
\newcommand{\ie}{\emph{i.e.} }
\newcommand{\eg}{\emph{e.g.} }
\newcommand{\etal}{\emph{et al.} }
\newcommand{\cf}{\emph{cf.} }

\usepackage{hyperref} %
\usepackage%
{hyperref} %
\hypersetup{
    colorlinks,
    linkcolor={blue!80!black},%
    citecolor={green!50!black},
    urlcolor={blue!80!black}
}
\usepackage{accents}
\newlength{\dhatheight}
\newcommand{\doublehat}[1]{%
    \settoheight{\dhatheight}{\ensuremath{\hat{#1}}}%
    \addtolength{\dhatheight}{-0.35ex}%
    \hat{\vphantom{\rule{2pt}{\dhatheight}}%
    \smash{\hat{#1}}}}

\newcommand{\bieee}{\begin{IEEEeqnarray}{rCl}}
\newcommand{\eieee}{\end{IEEEeqnarray}}
\newcommand{\prob}[1]{\Pr\left(#1\right)}
\newcommand{\given}{\!\!\,\mid\,\!\!}
\newcommand{\cprob}[2]{\Pr\left(#1\given #2\right)}

\renewcommand{\mathbbm}[1]{\text{\usefont{U}{bbm}{m}{n}#1}} %
\newcommand{\eps}{\varepsilon}

\renewcommand{\trace}{\mathrm{Tr}}

\newcommand{\identity}{\mathbbm{1}}
\newcommand{\kb}[1]{ \ketbra{#1} } %

\newcommand{\hm}{\hat{m}}

\newcommand{\hM}{\hat{M}}

\newcommand{\tM}{\widetilde{M}}

\newcommand{\Aset}{\mathcal{A}}

\newcommand{\Dset}{\mathcal{D}}
\newcommand{\Eset}{\mathcal{E}}
\newcommand{\Fset}{\mathcal{F}}

\newcommand{\Hset}{\mathcal{H}}

\newcommand{\Pset}{\mathcal{P}}
\newcommand{\Uset}{\mathcal{U}}

\newcommand{\Tset}{\mathcal{T}}

\newcommand{\Xset}{\mathcal{X}}
\newcommand{\Yset}{\mathcal{Y}}
\newcommand{\Zset}{\mathcal{Z}}

\newcommand{\channel}{\mathcal{N}}

\newcommand{\inR}{\mathcal{R}}
\newcommand{\inQ}{\mathscr{L}}

\newcommand{\opC}{\mathcal{C}}
\newcommand{\opQ}{\mathcal{Q}}

\newcommand{\tset}{\Aset^{\delta}}													%

\begin{document}

\title{Communication with Unreliable Entanglement Assistance}
 
\author{\IEEEauthorblockN{Uzi Pereg,~\IEEEmembership{Member, IEEE}, Christian Deppe,~\IEEEmembership{Member, IEEE}, and 
Holger Boche,~\IEEEmembership{Fellow, IEEE}}
}

\maketitle

\begin{abstract}

Entanglement resources %
can increase transmission rates substantially. %
Unfortunately, entanglement is a fragile resource  that is quickly degraded by decoherence effects. 
In order to generate entanglement for optical communication, the transmitter and the receiver first prepare entangled spin-photon pairs locally, and then the photon at the transmitter is sent to the receiver through an optical fiber or free space. Without feedback, the transmitter does not know whether the entangled photon has reached the receiver.
The present work introduces a new model of unreliable entanglement assistance, whereby the communication system operates whether entanglement assistance is present or not. While the sender is ignorant, the receiver knows whether the entanglement generation was successful. %
In the case of a failure, the receiver decodes less information.  In this manner, the effective transmission rate is adapted according to the assistance status. 
Regularized formulas are derived for the classical and quantum capacity regions with unreliable entanglement assistance, characterizing the tradeoff between the unassisted rate and the excess rate that can be obtained from entanglement assistance.
It is further established that time division between entanglement-assisted and unassisted coding strategies is optimal for the noiseless qubit channel, but can be strictly suboptimal for a noisy channel.

\end{abstract}

\begin{IEEEkeywords}
Channel capacity, entanglement assistance, quantum communication, Shannon theory.
\end{IEEEkeywords}

\blfootnote{%
Uzi Pereg is with the Faculty of Electrical and Computer Engineering  and the Hellen Diller Quantum Center, 
Technion -- Israel Institute of Technology, 3200003 Haifa, Israel (email: uzipereg@technion.ac.il).
Christian Deppe is with the Institute for Communications Engineering,
Technische Universit\"at M\"unchen, 80333 Munich, Germany
(email: christian.deppe@tum.de).
Holger Boche is with the Institute of Theoretical Information Technology,
Technische Universit\"at M\"unchen, 80290 Munich, Germany, the Munich Center
for Quantum Science and Technology (MCQST), Schellingstr. 4, 80799
Munich, Germany, and the CASA – Cyber Security in the Age of Large-Scale
Adversaries – Excellenzcluster, Ruhr Universit\"at Bochum, Bochum, Germany
(email: boche@tum.de).}

\blfootnote{
Communicated by S. Mancini, Associate Editor for Quantum Information
Theory.
}

\blfootnote{
This paper was presented in part at  the 2022 IEEE International
Symposium on Information Theory, Espoo, Finland, June 26--July 1, 2022.
}

\section{Introduction}

Quantum channels represent the physical evolution of %
a non-isolated system %
and provide a mathematical description for  a noisy transmission medium, such as an optical fiber. %
The channel capacity is the ultimate characteristic for communication throughput, \ie the optimal transmission rate with an asymptotically vanishing error for a given noisy channel.
Generally speaking, quantum communication and security protocols can be categorized as either entanglement-assisted or unassisted.
Entanglement resources are instrumental %
in a wide variety of quantum network frameworks, such as physical-layer security \cite{YLLYCZRCLL:20p}, %
  interferometry \cite{KhabiboullineBorregaardDeGreveLukin:19p}, %
sensor networks \cite{ZhuangZhang:19p,XiaLiZhuangZhang:21p},   %
and communication complexity \cite{Adhikary:21p}. 
Furthermore, the data rate can be significantly higher when the communicating parties are provided with entangled particles \cite{BennettShorSmolin:99p,BennettShorSmolin:02p}, as has recently been demonstrated in experiments \cite{HaoShiLiShapiroZhuangZhang:21p}.
Unfortunately, entanglement is a fragile resource that is quickly degraded by decoherence effects \cite{JaegerSergienko:14p}. %

In order to generate entanglement in an optical communication system, the transmitter may %
prepare an entangled pair of photons locally, and then send one of them to the receiver \cite{YCLRLZCLLD:17p}. Such generation protocols are not always successful, as  photons are easily absorbed before reaching the destination. Therefore, practical systems require a back channel, to inform the transmitter whether the entanglement has been established to a satisfying degree of quality. In the case of a failure, the protocol is to be repeated. The backward transmission may result in a delay, which in turn leads to a further degradation of the entanglement resources. In this work, we propose a new principle of operation: Communication with unreliable entanglement assistance. In our model, the communication system operates on a  rate that is adapted to the status of the entanglement assistance, whether the assistance exists or not. Hence, feedback and repetition are not required.

Driven by new applications such as Industry 4.0, Vehicle-to-Everything  (V2X), and the Tactile Internet \cite{FBW14}, future communication systems such as those 
beyond the fifth generation of mobile networks (5G) will significantly differ from both existing wireless and wired networks. Quantum communication networks are expected to play an important role in the communication 
infrastructure of the modern digital society \cite{BBDFFJS:21b}. %
Such systems will have a more involved network structure and will impose more diverse and challenging quality-of-service (QoS) requirements on the network resilience and reliability, service availability, delay, security, privacy, and many others. 
Some of these
new requirements %
can only be met by using quantum communication \cite{FB21}. %
The deployment requirements will go beyond those of the traditional systems, \eg %
the Tactile Internet will allow not only the control of data, but also of physical and virtual objects. With such critical applications comes the need to address the trustworthiness of the system and its services.

Resilience and reliability are core elements of trustworthiness and have been identified as  key challenges for future communication systems 
\cite{FB22}.
Furthermore, resilience and reliability cannot necessarily be verified automatically on digital hardware, i.e. on Turing machines  \cite{BSP21b}. It is not Turing decidable whether an attacker can perform a denial-of-service attack or not. Thus, it is also not Turing decidable whether a communication system is trustworthy or not \cite{FB22}. Therefore, it is fundamentally important to achieve entirely new approaches for resilience by design and for reliability by design. Here, %
 we develop the theory for reliability by design for entanglement-assisted point-to-point quantum communication systems.

Communication through quantum channels can carry %
 classical or quantum information.
 For classical communication, the Holevo-Schumacher-Westmoreland (HSW) Theorem provides a regularized %
formula for the classical capacity of a quantum channel without assistance \cite{Holevo:98p,SchumacherWestmoreland:97p}. %
Although calculation of such a formula is intractable in general, it provides computable lower bounds, and there are special cases where the capacity can be computed exactly. 
The reason for this difficulty is that the Holevo information is super-additive \cite{Holevo:12b}. 
A similar difficulty occurs with transmission of quantum information. %
The regularized formula for the quantum capacity is given 
in terms of the coherent information \cite{%
Devetak:05p}. 
 The entanglement-assisted classical capacity and quantum capacity of %
a noisy quantum channel were fully characterized by Bennett \etal 
\cite{BennettShorSmolin:99p,BennettShorSmolin:02p} in terms of the quantum mutual information, in analogy  
to Shannon's  capacity formula for a classical channel \cite{Shannon:48p}. 
The tradeoff between transmission, leakage, key,   and entanglement rates is studied extensively in the literature as well \cite{Shor:04p,DevetakHarrowWinter:08p,HsiehWilde:10p,HsiehWilde:10p1,WildeHsieh:12p1,WangHayashi:20c,ZhuangZhuShor:17p,NoetzelDiAdamo:20c,PeregDeppeBoche:21p}.

The theory of uncertain cooperation was first introduced to classical information theory 
in 2014
by Steinberg \cite{Steinberg:14c}, and further investigated by Huleihel and Steinberg \cite{HuleihelSteinberg:17p}. The motivation is based on the 
engineering aspects of modern communication networks.
In a dynamic ad-hoc communication setup, the availability of resources, such as bandwidth, time slots, and energy, is not guaranteed a priori, since their availability depends on parameters that the network designer does not control. For example, cooperation can depend on the battery status of intermediate users (relays or repeaters), on weather, or simply the willingness of peers in the network to help. A typical situation, therefore, is that the users are aware of the possibility that cooperation will take place, but it cannot be assured before transmission begins.  
Our framework is inspired by Steinberg's model \cite{Steinberg:14c}.
The classical models of unreliable cooperation mainly focus on dynamic resources in multi-user settings, such as the 
multiple-access channel \cite{HuleihelSteinberg:16c} and the broadcast channel \cite{ItzhakSteinberg:17c,ItzhakSteinberg:21p,PeregSteinberg:21c4}.
Other approaches for unreliable communication links include the outage analysis \cite{OzarowShamaiWyner:94p,KarasikSimeoneShamai:13p}, automatic repeat request (ARQ) \cite{CaireTuninetti:01p,SteinerShamai:08p}, and cognitive radios \cite{GJMS:09p}.
Our focus here, however,  is on a point-to-point quantum channel and the reliability of 
static resources.

In this paper, we consider communication of either classical or quantum information over a quantum channel, while Alice and Bob are provided with \emph{unreliable} entanglement resources, as the communicating parties are uncertain about the availability of entanglement assistance.
Specifically,  Alice wishes to send two messages, at rates $R$ and $R'$. She encodes both messages using her share of the entanglement resources, as she does not know whether Bob will have access to the entangled resources.
Bob has two decoding procedures. If the entanglement assistance has failed to reach Bob's location, he performs a decoding operation to recover the first message alone. Hence, the communication system operates on a rate $R$.
Whereas if Bob has entanglement assistance, he decodes both messages, hence the overall transmission rate is $R+R'$.
In other words, $R$ is a \emph{guaranteed rate}, and $R'$ is the \emph{excess rate} of information that entanglement assistance provides.
We define the capacity region as the set of all rate pairs $(R,R')$ that can be achieved 
with asymptotically vanishing decoding errors. We establish a regularized characterization for the classical and quantum capacity regions. 
We are developing reliability by design. 
If the entanglement resource is unreliable, then  the rate $R$ can be guaranteed regardless. %
For the applications mentioned above, this is of central importance, because %
absolutely critical data can be transmitted at rate $R$ 
and the communication does not break down.

The communication design makes a compromise. 
In general terms, the extreme options are to use the entanglement resources to the fullest extent, or ignore them completely.
Those extreme strategies attain the corner points of the capacity region.
The former strategy achieves the point
$(0,R')$, and the latter achieves 
$(R,0)$.
A communication protocol that relies heavily on the entanglement resources reaps the benefits of entanglement to a high extent, if the assistance is present. However, if the entanglement generation fails, then the transmission rate will be very low.
That is, the excess rate $R'$ will be close to optimal, while the guaranteed rate $R$ will be  low.
 If the designer decides to sacrifice excess rate and reduce $R'$, \ie reduce %
the gain from the entanglement resources, then %
we can guarantee a higher transmission rate. Our results characterize the optimal tradeoff.

\begin{figure}[tb]
\center
\includegraphics[scale=0.5,trim={3.5cm 9.5cm 0 9cm},clip]{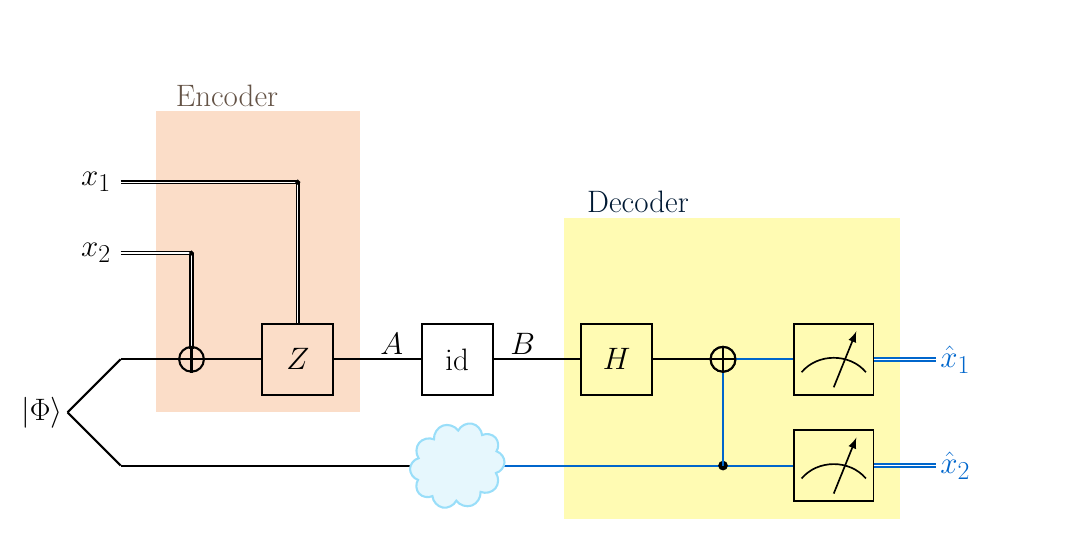} %left bottom right top
\caption{Super-dense coding with unreliable entanglement assistance. The blue lines indicate the bits and qubits that are affected when the  entanglement resources fail to reach Bob's location.}
\label{fig:DenseUncertain}
\end{figure}

Consider the simple scenario of a noiseless qubit channel $\text{id}_{A\to B}$, for which we have two elementary communication methods:
\begin{enumerate}
\item
Send one classical bit of information.
\item
Employ the super-dense coding protocol in order to send two classical bits, as illustrated in Figure~\ref{fig:DenseUncertain}.
\end{enumerate}
The first method is optimal without assistance, while the second is optimal when entanglement assistance is present.
If Alice follows the super-dense coding protocol, but the entanglement resources do not reach Bob's location, then Bob measures a qubit that has no correlation with the information bits. In the framework of unreliable entanglement assistance, 
Method 2 achieves a zero guaranteed rate and an excess rate of two information bits per transmission. %
Suppose that Alice employs time division: She sends $(1-\lambda) n$ transmissions using Method 1, and $\lambda n$ transmissions following Method 2, where $0\leq\lambda\leq 1$.
Hence, the communication system operates on a guaranteed rate of $R=1-\lambda$ information bits per transmission, and an excess rate of $R'=2\lambda$ information bits per transmission. We show that the time division region is optimal for the noiseless qubit channel. Nevertheless, we demonstrate that time division can be strictly suboptimal for a very simple noisy channel.

The analysis is based on a novel method that is inspired by the classical network technique of superposition coding (SPC) \cite{CoverThomas:06b}. 
Originally,
%In the classical broadcast setting, 
the classical SPC scheme consists of a collection of sequences $u^n(m)$ and 
$v^n(m,m')$ of length $n$, where $m$ and $m'$ are messages that are associated with different users in a multi-user network. 
The sequences $u^n(m)$ are called cloud centers, while $v^n(m,m')$ are displacement vectors, and the codewords $c^n(m,m')=u^n(m)+v^n(m,m')$ are thought of as satellites.
In analogy, we use conditional quantum operations that map quantum cloud centers to quantum satellite states.
Suppose that Alice and Bob share an entangled state $\varphi$ a priori.
Each cloud center is associated with a classical sequence $x^n(m)$, and
at the center of each cloud there is a state, $ \sigma_m=\Fset^{(x^n(m))}(\varphi)$, where $\Fset^{(x^n(m))}$ is a quantum encoding map that is conditioned on $x^n(m)$.  Applying random Pauli operators that 
encode the message $m'$  takes us from the cloud center to a satellite $\rho_{m,m'}$ on the cloud that depends on both messages, $m$ and $m'$. The channel input is the satellite state. 
See Figure~\ref{fig:Superposition}.
Bob decodes in two steps. First, Bob recovers the cloud, \ie he estimates $m$.
If the entanglement assistance is absent, then Bob quits after the first step.
Otherwise, if Bob has entanglement assistance, then he continues to decode the satellite $m'$.
We show 
that even in our fundamental point-to-point setting, quantum SPC can outperform time division.

\begin{figure}[tb]
\center
\includegraphics[scale=0.85,trim={0.5cm 0.5cm 0 -0.1cm},clip]{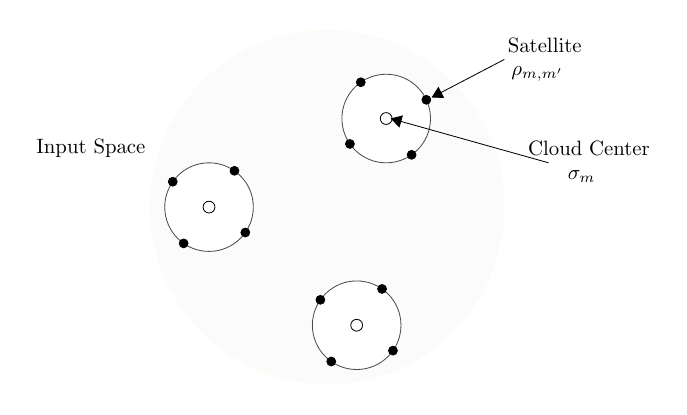} %left bottom right top
\caption{Superposition coding (SPC): The quantum version.}
\label{fig:Superposition}
\end{figure}

\section{Definitions and Related Work}
\subsection{Notation %
and Information Measures}
The quantum state of a system $A$ is a density operator $\rho$ on the Hilbert space $\Hset_A$.
The set of all such density operators is denoted by $\mathscr{S}(\Hset_A)$. 
A measurement of a quantum system is a set of operators $\{\Lambda_j \}$ that forms a positive operator-valued measure (POVM), \ie
$\Lambda_j\succeq 0$ and 
$\sum_j \Lambda_j=\identity$, where $\identity$ is the identity operator. According to the Born rule, if the system is in state $\rho$, then the probability of the measurement outcome $j$ is given by $%
\trace(\Lambda_j \rho)$.
The trace distance between two density operators $\rho$ and $\sigma$ is $\norm{\rho-\sigma}_1$, where $\norm{F}_1=\trace(\sqrt{F^\dagger F})$.

Given a bipartite state $\rho_{AB}$ on $\Hset_A\otimes \Hset_B$, 
define the quantum mutual information as
\begin{align}
I(A;B)_\rho=H(\rho_A)+H(\rho_B)-H(\rho_{AB}) \,,
\end{align} 
where $H(\rho) \equiv -\trace[ \rho\log(\rho) ]$ is the von Neumann entropy of the state $\rho$.
Furthermore, conditional quantum entropy and mutual information are defined by
$H(A| B)_{\rho}=H(\rho_{AB})-H(\rho_B)$ and
$I(A;B|  C)_{\rho}=H(A|  C)_\rho+H(B|  C)_\rho-H(A,B|  C)_\rho$, respectively.
The coherent information is then defined as
\begin{align}
I(A\rangle B)_\rho=-H(A|  B)_\rho  \,.
\end{align}
The maximally entangled state %
between two systems %
of dimension $d$ %
is denoted by
$%
\ket{ \Phi_{AB} } = \frac{1}{\sqrt{d}} \sum_{j=0}^{d-1} \ket{ j}_A\otimes \ket{j}_B %
$, where $\{ \ket{j}_A \}$ and $\{ \ket{j}_B \}$  %
are respective orthonormal bases.

 We also use the following notation conventions. %
Calligraphic letters $\Xset$, $\Yset$, $\Zset$, $\ldots$ are used for finite sets.
Lowercase letters $x,y,z,\ldots$  represent constants and values of classical random variables, and uppercase letters $X,Y,Z,\ldots$ represent 
random variables.  
 We use $x^j=(x_1,x_{2},\ldots,x_j)$ to denote  a sequence of letters from $\Xset$, and $[i:j]$ for the index set
$\{i,i+1,\ldots,j\}$, where $j>i$. 

\subsection{Quantum Channel}
\label{subsec:Qchannel}
A quantum channel maps a state at the sender system to a  state at the receiver system. 
Formally, a quantum channel $\channel_{A\to B}:\mathscr{S}(\Hset_A)\to \mathscr{S}(\Hset_B)$ is defined by a   linear, completely positive, trace preserving map 
$%
\channel_{A\to B}  %
$. In the Stinespring representation, a quantum channel is specified by $%
{\channel}_{A\to B}(\rho_{A})=\trace_E( U\rho_{A} U^\dagger )
$, %
where the operator $U$ is an isometry, \ie $ U^\dagger U=\identity$. 
We assume that the quantum channel has a product form:  If  $A^n=(A_1,\ldots,A_n)$ are sent through $n$ channel uses, then the input state $\rho_{A^n}$ undergoes the tensor product mapping $\channel_{A^n\to B^n}\equiv  \channel_{A\to B}^{\otimes n} $. 
The sender and the receiver are often referred to as Alice and Bob.

\begin{figure}[bt]
 \centering
\hspace{-1.5cm}
\begin{minipage}{.4\textwidth}
\includegraphics[scale=0.5,trim={1cm 10.5cm 4cm 11cm},clip]{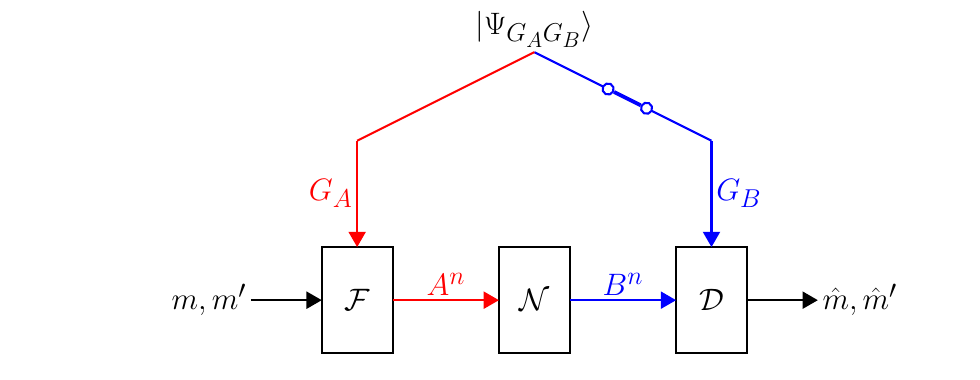} %left bottom right top

 \centering
\hspace{1.75cm}(a)
\end{minipage}
\hspace{1.5cm}
\begin{minipage}{0.4\textwidth}
\includegraphics[scale=0.5,trim={1.25cm 10.5cm 4cm 11cm},clip]{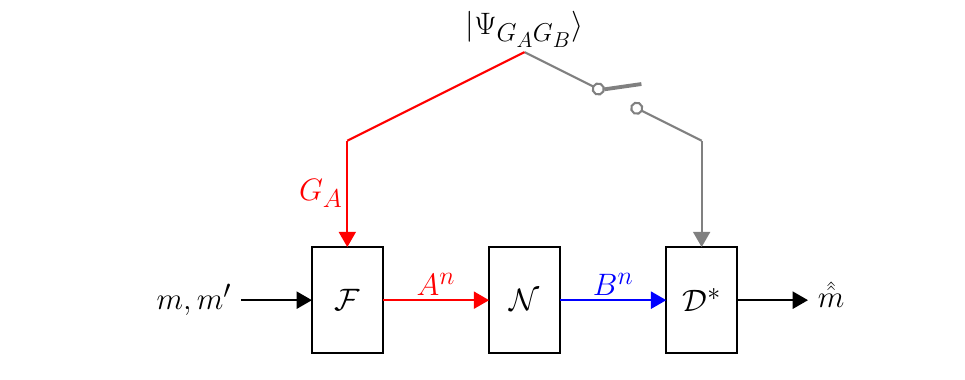} %left bottom right top

 \centering
\hspace{1.75cm}(b) 
\end{minipage}
\caption{Illustration of unreliable entanglement assistance that is controlled by an imaginary switch. The quantum systems of Alice and Bob are marked in red and blue, respectively.
 Alice encodes the messages $m$ and $m'$ by applying the encoding map $\Fset^{m,m'}_{  G_A \to A^n}$ to the system $G_A$, which is entangled with $G_B$. The model assumes that entanglement assistance may fail to reach Bob. Thus, there are two scenarios: 
(a) ``On": Bob performs a measurement $\Dset_{B^n G_B}$ in order to estimate $m$ and $m'$. (b) ``Off": Bob performs a measurement 
$\Dset^*_{B^n}$ to estimate $m$, and does not recover $m'$, as he cannot access $G_B$.  }
\label{fig:EAsiCode}
\end{figure}

\subsection{Coding with Unreliable Assistance}
\label{subsec:Mcoding}
We give coding definitions for communication with 
unreliable entanglement resources. We denote Alice and Bob's entangled systems by
$G_A$ and $G_B$, respectively.

\subsubsection{Classical Codes}
\begin{definition} %
\label{def:EAcapacity}
A $(2^{nR},2^{nR'},n)$  classical code with unreliable entanglement assistance consists of the following:   
Two message sets $[1:2^{nR}]$ and $[1:2^{nR'}]$, where $2^{nR}$, $2^{nR'}$ are assumed to be integers, a pure entangled state $\Psi_{G_{A},G_{B}}$, 
  a collection of encoding maps $\Fset^{m,m'}_{G_{A}\rightarrow A^n}:\mathscr{S}(\Hset_{G_A})\to \mathscr{S}(\Hset_{A}^{\otimes n})$ for $m\in [1:2^{nR}]$ and $m'\in [1:2^{nR'}]$,  and two decoding POVMs 
	$\Dset_{B^n G_B}
	=\{ D_{m,m'} \}$ and $\Dset^*%
	_{B^n}=\{ D^*_{m} \}$.
We denote the code by $(\Fset,\Psi,\Dset,\Dset^*)$.

The communication scheme is depicted in Figure~\ref{fig:EAsiCode}. The sender Alice has the systems $G_A,A^n$ and the receiver Bob has the system $B^n$, and \emph{possibly} $G_B$ as well, where $G_A$ and $G_B$ are entangled.   The model captures two scenarios, \ie when entanglement assistance is present or absent. This is illustrated in Figure~\ref{fig:EAsiCode} by an imaginary switch that controls the assistance. Without assistance, Bob is only required to decode one message, and given entanglement assistance, he should recover both messages. 

Specifically,  Alice chooses two classical messages, $m\in [1:2^{nR}]$ and $m'\in [1:2^{nR'}]$. %
She applies the encoding channel $\Fset^{m,m'}_{G_{A}\to A^n}$ to her share of the entangled state 
$\Psi_{G_{A},G_{B}}$, and then transmits %
$A^n$ over $n$ channel uses of $\channel_{A\rightarrow B}$. 
 Bob receives the channel output %
$B^n$. 
If the entanglement assistance is present,  \ie Bob has access to the entanglement resource $G_B$, then he should recover both messages. He combines the output with the entangled system $G_B$, and performs the POVM
 $\Dset_{B^n G_B}=\{ D_{m,m'}  \}$ to obtain an estimate $(\hm,\hm')$.

Otherwise, if entanglement assistance is absent, then Bob does not have $G_B$, and he is only required to recover $m$.   Hence,
 he performs the measurement $\Dset_{B^n}=\{ D_m^* \}$ to obtain an estimate $\doublehat{m}$ of the first message alone.
In the presence of entanglement assistance, the conditional probability of error given that the messages $m$ and $m'$ were sent, is  
\begin{align}
&P_{e|  m,m'}^{(n)}(\Fset,\Psi,\Dset)= 1- %
\trace\big[ D_{m,m'}
(\channel^{\otimes n}_{A\rightarrow B}\otimes\text{id})
(\Fset^{m,m'}\otimes\text{id}) (\Psi_{G_A,G_B})
 \big] %
\intertext{and without assistance,}
&P_{e|  m,m'}^{*(n)}(\Fset,\Psi,\Dset^*)= 
1-
\trace\Big[ D_{m}^*\,
\channel^{\otimes n}_{A\rightarrow B}\,
\Fset^{m,m'}(\Psi_{G_A})  \Big] 
\,.
\end{align}
Notice that the encoded input remains the same,  since Alice does not know whether entanglement assistance is present or not. Therefore, the  error depends on $m$ and $m'$ in both cases.

Given $\eps>0$, we say that the code %
is
a $(2^{nR},2^{nR'},n,\eps)$ classical 
code %
if the error probabilities are bounded by $\eps$. That is, %
$P_{e|  m,m'}^{(n)}(\Fset,\Psi,\Dset)\leq\eps $ and $P_{e|  m,m'}^{*(n)}(\Fset,\Psi,\Dset^*)\leq\eps $
for all $m\in [1:2^{nR}]$ and $m'\in [1:2^{nR'}]$. %
A rate pair $(R,R')$ is called achievable  if for every $\eps>0$ and sufficiently large $n$, there exists a $(2^{nR},2^{nR'},n,\eps)$ 
code with unreliable entanglement assistance. 

The classical capacity region $\opC_{\text{EA*}}(\channel)$ with unreliable entanglement assistance is defined as the set of  achievable rate pairs.
\end{definition}

\begin{remark}
Our model accounts for two extreme cases, \ie either the \emph{entire} entanglement resources are available or not at all.
In digital communications, this approach is referred to as a
hard decision \cite{Proakis:01b}.
Here, the decoder performs a hard decision on whether the entanglement resources are usable or not.
In analogy to the classical cooperation model 
\cite{HuleihelSteinberg:17p},
our model is based on the 
engineering aspects and the architecture of modern communication networks.
%Based on the architecture of current systems, 
We expect that quantum communication networks in the future will follow
similar reliability guidelines. 
In particular, we envision that in a large quantum communication
network, the availability of entanglement resources will  not be guaranteed a priori. For example, entanglement resources will depend on physical conditions such as the weather, on the  status of quantum repeaters, or  the willingness of peers %in the network 
to help. In such a network, the transmitter and the receiver are aware of the \emph{possibility} that entanglement assistance will be available, but it cannot be assured before transmission begins. 
\end{remark}

\begin{remark}
In practical systems, heralded entanglement generation guarantees that Bob knows whether the procedure was successful or not. 
Thus, our assumption that Bob knows whether the entangled resource is present or absent is a practical one.
Specifically, in optical communication, both Alice and Bob prepare an entangled photon pair or spin-photon pair locally, see Figure~\ref{fig:HeraldedEnt}. Let us denote the pairs by $|\Phi_{G_A P_A}\rangle$ and $|\Phi_{G_B P_B}\rangle$, respectively, where  $P_A$ and $P_B$ represent photons. In order to generate entanglement with Bob, Alice transmits the photon $P_A$. 
If the photon transmission was successful, then Bob has the two photons $P_A$ and $P_B$ in his lab, as well as the quantum system $G_B$. 
In this case,  a Bell measurement on $P_A$ and $P_B$ eliminates the photons, but the remaining systems of Alice and Bob, $G_A$ and $G_B$, become entangled. If the photon has not reached Bob, then 
 the measurement outcome indicates so.
\end{remark}

\begin{remark}
It is important to note that unreliable assistance is \emph{not} equivalent to noisy assistance, which was considered by Zhuang \etal \cite{ZhuangZhuShor:17p}.
In particular, we do not associate a statistical model to the availability of the entanglement resources.
Instead, we consider a rate region that reflects the tradeoff between the guaranteed rate and the excess rate.
The guaranteed rate $R$ corresponds to information that Bob recovers whether the entanglement assistance is present or not, while the excess rate $R'$ represents the additional information that is sent if entanglement assistance is present. 
In other words, the rate $R$ represents the worst-case scenario, whereas $R'$ is associated with the best-case scenario.
As opposed to the average performance that is considered in the statistical model \cite{ZhuangZhuShor:17p}, we provide a worst-case best-case performance analysis. In the next remark, we give an illustration. %(see Remark~\ref{remark:Ship_Illustration} below).
\end{remark}

\begin{remark}
\label{remark:Ship_Illustration}
To illustrate the reliability approach,
consider the following metaphor.
$N$ travelers are
embarking on a long journey on a ship that may have a varying number of lifeboats. Overall, the lifeboats can accommodate $L$ travelers. 
In the event that the ship sinks, $(N-L)$ travelers will be rescued and brought back to their starting point, and the journey will continue with the remaining travelers in the lifeboats. The speed of the ship is $V=V(N,L)$, and the speed of the lifeboats is $v_0$. If the ship does not sink, the speed of each traveler will be $V$. However, if the ship sinks, the travel speed will be calculated as the average speed of the lifeboats, $R=(L/N)v_0$. In this case, $R$ represents the guaranteed speed of travel for the remaining travelers, and $R'=V-R$ represents the excess speed that the ship would have provided. Using more lifeboats will increase the guaranteed speed of travel, but decrease the excess speed, while using fewer lifeboats will have the opposite effect. It is important to consider both speeds, $R$ and $R'$, rather than just the average speed, when planning for the worst-case scenario of the ship sinking.
In the result section, we will discuss the option of dividing the travelers between different ships, and the advantage that may arise if a traveler can be in a quantum superposition between two ships.
\end{remark}

\begin{figure}[tb]
\centering
\includegraphics[scale=0.3,angle=90,trim={7cm 1cm 5cm 5cm},clip
]{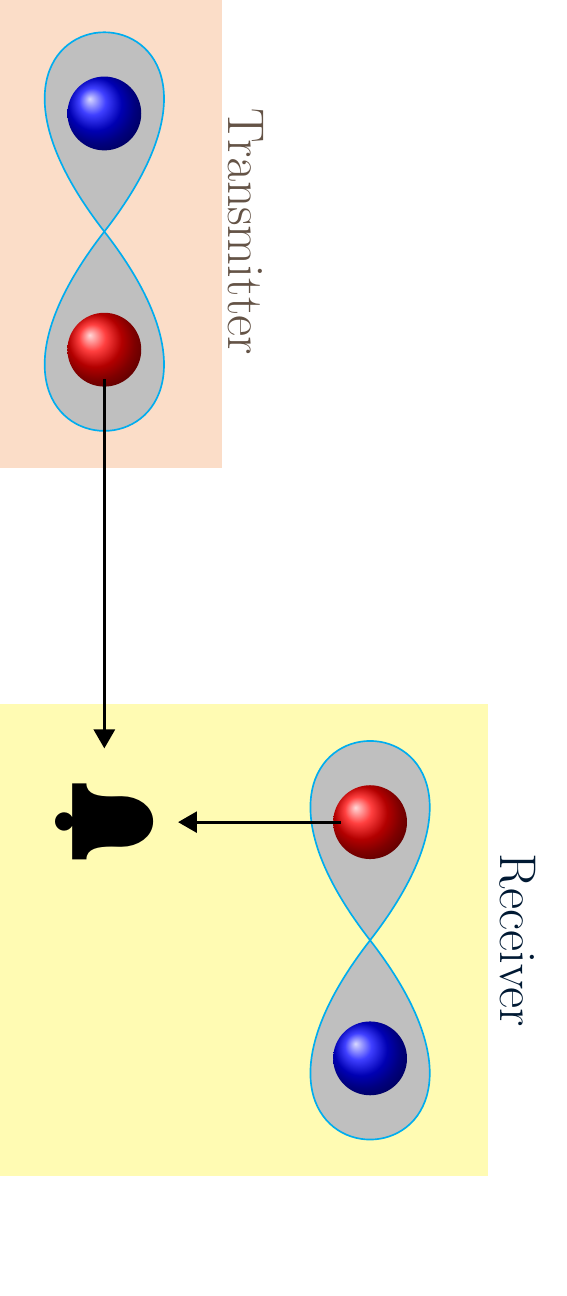} %
\caption{Heralded entanglement generation.}
\label{fig:HeraldedEnt}
\end{figure}

\subsubsection{Quantum Codes}
Next, we give a definition of a quantum code with unreliable entanglement assistance.
\begin{definition} 
\label{def:QEAcapacity}
A $(2^{nQ},2^{nQ'},n)$ quantum code with unreliable entanglement assistance consists of the following:  
A product Hilbert space $\Hset_M\otimes \Hset_{\bar{M}}$ %
with dimensions $\abs{\Hset_M}=2^{nQ}$ and $\abs{\Hset_{\bar{M}}}=2^{n(Q+Q')}$, %
 a pure entangled state $\Psi_{G_A,G_B}$,
 an encoding channel $\Fset_{G_{A} M \bar{M}\rightarrow A^n}:\mathscr{S}(\Hset_{G_A}\otimes \Hset_M\otimes
 \Hset_{\bar{M}})\to
\mathscr{S}(\Hset_A^{\otimes n})$, and two decoding channels 
$ \Dset_{B^n G_B\rightarrow \tM } :\mathscr{S}(\Hset_B^{\otimes n}\otimes \Hset_{G_B})\to\mathscr{S}(\Hset_{\bar{M}}) $ and $ \Dset^*_{B^n\rightarrow \hM}:\mathscr{S}(\Hset_{B}^{\otimes n})\to\mathscr{S}(\Hset_M)$.

The sender Alice has the systems $G_A,M,\bar{M},A^n$ and the receiver Bob has the systems $B^n,\hM,\tM$, and possibly $G_B$, where $G_A$ and $G_B$ are entangled. We think of  $M$ and $\bar{M}$ as quantum message systems.
Alice has a product state $\theta_M\otimes \xi_{\bar{M}}$. Let $\ket{\theta_{MK}}\otimes \ket{\xi_{\bar{M}\bar{K}}}$ be a purification of Alice's state, while $K$, $\bar{K}$ are arbitrary purifying systems.
Alice encodes the input state by applying the encoding channel 
$\Fset_{G_A M \bar{M} \to A^n}$ to $M$, $\bar{M}$, and to her share of the entangled state $\Psi_{G_{A},G_{B}}$,  and transmits the system $A^n$ over $n$ channel uses of 
$\channel_{A\to B}$. Bob receives the channel output systems $B^n$.
 If the entanglement assistance is present, then he combines the output
with the entangled system $G_B$, and applies the decoding channel
 $\Dset_{B^n G_B\to \tM}$. 
Otherwise, if entanglement assistance is absent, then he performs $\Dset^*_{B^n\to\hM}$.

Given $\eps>0$, the code %
is said to be a $(2^{nQ},2^{nQ'},n,\eps)$ quantum code with unreliable entanglement assistance if the trace distance between the original state and the resulting state at the receiver is bounded by $\eps$ in each scenario, \ie
\begin{align}
\frac{1}{2}\norm{\xi_{\bar{M}\bar{K}} -\Dset \channel^{\otimes n}_{A\to B} \Fset  \left( \theta_{MK}\otimes\xi_{\bar{M}\bar{K}}\otimes \Psi_{G_A,G_B} \right)  }_1
&\leq \eps \,,
\intertext{and}
\frac{1}{2}\norm{\theta_{MK} -\Dset^* \channel^{\otimes n}_{A\to B} \Fset  \left( \theta_{MK}\otimes\xi_{\bar{M}\bar{K}}\otimes \Psi_{G_A} \right)  }_1
&\leq \eps \,,
\end{align}
where $\norm{\cdot}_1$ denotes the trace norm. Observe that the second error depends on the entangled state only through the reduced state of $G_A$, since the receiver does not have access to $G_B$ in the scenario of absent assistance.
A rate pair $(Q,Q')$ is said to be  achievable if for every $\eps>0$ and sufficiently large $n$, there exists a $(2^{nQ},2^{nQ'},n,\eps)$
code with unreliable entanglement assistance. The quantum capacity region $\opQ_{\text{EA*}}(\channel)$ is defined in a similar manner as before. %
\end{definition}

In the following remark, we discuss the relation between the classical and quantum formulations above.
In many communication models in the literature, it does not matter whether the messages are chosen by the sender Alice, or given to her by an external source.
However, in the quantum model, there is a fundamental distinction between the general task of sub-space transmission
and remote state preparation, as we explain below.

\begin{remark}
In the classical code in Definition~\ref{def:EAcapacity}, if entanglement assistance is present, then Bob decodes the composite message $\bar{m}=(m,m')$. Hence, the overall transmission rate with entanglement assistance is $R_{EA}=R+R'$.
In the quantum code in Definition~\ref{def:QEAcapacity}, $M$ and $\bar{M}$ are two independent systems of dimensions $2^{nQ}$ and $2^{n(Q+Q')}$. %
Hence, the overall quantum rate with entanglement assistance is $Q_{EA}=Q+Q'$.
In some applications of quantum error correction, Alice receives the system $M$ from another source, and does not prepare it herself. 
While Alice can perform any operation on this system, she does not necessarily know its state in this case. 
Due to the no-cloning theorem, Alice cannot duplicate a general state of $M$ either. 
Thus, 
our definition of quantum transmission with unreliable entanglement %
describes a more restricted problem. 
\end{remark}

\subsection{Related Work}
\label{subsec:Previous}
We briefly review known results without %
assistance and with reliable entanglement assistance.
We denote the corresponding classical and quantum capacities with reliable entanglement assistance by $C_{\text{EA}}(\channel)$ and
  $Q_{\text{EA}}(\channel)$, and without assistance by $C(\channel)$ and
 $Q(\channel)$, respectively. %

Define the following information measures: The channel Holevo information
\begin{align}
&\chi(\channel)\equiv   \max_{p_X(x), \ket{\psi_A^x} } I(X;B)_\omega  \,,
\nonumber\\
&\omega_{XB}\equiv \sum_{x\in\Xset} p_X(x) \kb{x}\otimes \channel( \psi_A^x ) \,,\; \abs{\Xset}\leq \abs{\Hset_A}^2 \,,
\label{eq:HolevoChan}
\end{align}
and the channel coherent information
\begin{align}
&I_c(\channel)\equiv \max_{\ket{\phi_{A_1 A}} } I(A_1\rangle B)_\omega \,,
\nonumber\\
& \omega_{A_1 B}\equiv  (\text{id}\otimes\channel)( \kb{ \phi_{A_1 A} }) \,,\; \abs{\Hset_{A_1}}\leq \abs{\Hset_A} \,.
\label{eq:CoherentChan}
\end{align}
Observe that the Holevo information is maximized over ensembles of pure states, while the quantum capacity is maximized over entangled states. 
We will see the implications of those properties in the results section. 
The classical capacity theorem and the quantum capacity theorem are given below.
\begin{theorem} %
\label{theo:CclNoEA}
The \emph{classical capacity} of a quantum channel $\channel_{A\rightarrow B}$ without assistance is given by \cite{Holevo:98p,SchumacherWestmoreland:97p%
}
\begin{align}
C(\channel)= \lim_{k\rightarrow \infty} \frac{1}{k} \chi \left( \channel^{\otimes k} \right) \,.
\end{align}
\end{theorem}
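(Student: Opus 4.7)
The plan is to prove Theorem~\ref{theo:CclNoEA} by combining a direct (achievability) bound with a converse, and to establish existence of the limit via super-additivity of $\chi$ together with Fekete's lemma. Because $\channel^{\otimes k}$ is itself a quantum channel, it will suffice to prove the single-letter achievability $C(\channel)\geq \chi(\channel)$ and then apply it to the blocked channel $\channel^{\otimes k}$, giving $C(\channel)=\frac{1}{k}C(\channel^{\otimes k}) \geq \frac{1}{k}\chi(\channel^{\otimes k})$ for every~$k$.

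For achievability, I would fix an ensemble $\{p_X(x), \ket{\psi^x_A}\}$ that attains $\chi(\channel)$ in \eqref{eq:HolevoChan}, and generate a random codebook $\{X^n(m)\}_{m\in [1:2^{nR}]}$ with i.i.d.\ entries $\sim p_X$. Alice encodes $m$ as the product state $\psi^{X^n(m)}_{A^n}=\bigotimes_{i=1}^n \psi^{X_i(m)}_{A}$, so that Bob receives $\rho^{X^n(m)}_{B^n}=\channel^{\otimes n}(\psi^{X^n(m)}_{A^n})$. Bob will decode using the square-root (pretty-good) measurement built from the projector $\Pi$ onto the $\delta$-typical subspace of $\rho_B=\sum_x p_X(x)\channel(\psi^x_A)$ together with the conditionally typical projectors $\Pi_{X^n(m)}$ attached to each codeword. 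The HSW packing lemma combined with the Hayashi--Nagaoka operator inequality then bounds the expected error probability by $2^{-n[I(X;B)_\omega - R - \delta]}$, so every rate $R<\chi(\channel)$ is achievable after a standard expurgation.

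For the converse, I would take any sequence of $(2^{nR},n,\eps_n)$ codes with $\eps_n\to 0$, let $M$ be uniform on $[1:2^{nR}]$, and let $\sigma_{MB^n}$ be the classical--quantum state induced by the uniform message and the encoder. Denoting Bob's estimate by $\hm$, Fano's inequality yields $H(M|\hm)\leq n\eps_n R + h(\eps_n)\triangleq n\delta_n$, and the Holevo bound combined with data processing through the decoding POVM gives
\begin{align}
nR \;=\; H(M) \;\leq\; I(M;\hm)+n\delta_n \;\leq\; I(M;B^n)_\sigma + n\delta_n \;\leq\; \chi(\channel^{\otimes n})+n\delta_n,
\end{align}
where the last step uses that $\sigma_{MB^n}$ is a feasible candidate in the maximization \eqref{eq:HolevoChan} applied to $\channel^{\otimes n}$. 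Dividing by $n$ and letting $n\to\infty$ gives $R\leq \liminf_n \tfrac{1}{n}\chi(\channel^{\otimes n})$, matching the achievability in the limit thanks to super-additivity of $\chi$ under tensor products and Fekete's lemma, which also guarantees that the limit exists.

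The main obstacle will be the achievability step: unlike classical Shannon theory, the output states $\{\rho^x_B\}$ generally do not commute, so the receiver's Hilbert space cannot be partitioned into orthogonal decoding regions one codeword at a time. The remedy is the interaction between the average and conditional typical-subspace projectors with the pretty-good measurement, where cross-interference is controlled via the operator inequality $\Pi\rho^{x^n}\Pi\preceq 2^{-n[H(\rho_B)-\delta]}\Pi$ and the gentle measurement lemma applied after projecting onto the typical subspace. Conceptually, a key point is that the super-additivity of $\chi$ (which can be strict) is precisely what forces the regularization in Theorem~\ref{theo:CclNoEA}: one cannot in general replace $\tfrac{1}{k}\chi(\channel^{\otimes k})$ by the single-letter quantity $\chi(\channel)$, a phenomenon that is explicitly acknowledged earlier in the introduction.
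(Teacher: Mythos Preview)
The paper does not actually prove Theorem~\ref{theo:CclNoEA}; it is stated in the ``Related Work'' subsection as a known result from the literature, with attribution to Holevo and to Schumacher--Westmoreland, and is then used as a black box (for instance, in the achievability proof of Theorem~\ref{theo:ClEA} via \eqref{eq:B2a}). So there is no in-paper proof to compare against.

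Your sketch is the standard HSW argument and is essentially correct. Two small points are worth tightening. First, in the converse you invoke the maximization \eqref{eq:HolevoChan}, which is stated over \emph{pure}-state ensembles with the cardinality bound $|\Xset|\leq |\Hset_A|^2$; an arbitrary code may encode $m$ into mixed states on $A^n$, and $2^{nR}$ may exceed $|\Hset_{A}|^{2n}$. Both are handled by the usual convexity/Carath\'eodory reduction (decompose mixed inputs into pure ones and observe that the Holevo information is attained on a small pure-state ensemble), but you should say so explicitly rather than asserting that $\sigma_{MB^n}$ is directly feasible in \eqref{eq:HolevoChan}. Second, your achievability paragraph mixes two mechanisms: the square-root measurement analyzed via Hayashi--Nagaoka, and the packing-lemma route (Lemma~\ref{lemm:Qpacking} in the paper). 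Either works on its own; pick one and carry it through cleanly. Neither of these is a genuine gap, just places where the write-up would need a sentence of justification.
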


\begin{theorem} 
\label{theo:CNoSI}
The \emph{quantum capacity} of a quantum channel $\channel_{A\rightarrow B}$  without assistance is given by \cite{BarnumNielsenSchumacher:98p,Loyd:97p,Shor:02l,Devetak:05p%
} 
\begin{align}
Q(\channel)= \lim_{k\rightarrow\infty}\frac{1}{k} I_c(\channel^{\otimes k}) \,.
\label{eq:CqNosi}
\end{align}
\end{theorem}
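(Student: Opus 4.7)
The proof splits naturally into achievability (the direct part) and a converse bound. For achievability, the plan is to fix a purification $\ket{\phi_{A_1 A}}$ that achieves the maximum in the definition of $I_c(\channel^{\otimes k})$, and construct a random code on the $nk$-fold product channel $\channel^{\otimes nk}$ by selecting a $2^{nkQ}$-dimensional subspace of $\Hset_{A_1}^{\otimes n}$ uniformly at random (Haar measure) for some $Q<\tfrac{1}{k} I_c(\channel^{\otimes k})$. Alice encodes the quantum message system $M$ by an isometry into this subspace and sends the $A$-part through the channel. Let $\channel^c$ be the complementary channel that outputs the environment $E$. The key step is to invoke the decoupling theorem: for Haar-random encodings, if the rate is below $\tfrac{1}{k}I_c(\channel^{\otimes k})$, then the joint state of the reference purifying system $K$ and the environment $E^{nk}$ becomes close to a product state in expectation. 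Then by Uhlmann's theorem the channel output on Bob's side contains a purification of $K$, so there exists a decoding isometry that recovers $\theta_{MK}$ up to small trace distance. Derandomizing and taking $n\to\infty$ gives achievability of $\tfrac{1}{k} I_c(\channel^{\otimes k})$; letting $k\to\infty$ yields the claimed formula.

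For the converse, I would assume a sequence of $(2^{nQ},n,\eps_n)$ codes with $\eps_n \to 0$. Let Alice purify the message input into a maximally entangled state $\Phi_{MK}$, so that $I(M\rangle K)_{\Phi}=nQ$. Let $\rho_{\tM K}$ denote the state after encoding, the $n$ channel uses, and decoding. The reliability criterion forces $\tfrac12\norm{\rho_{\tM K}-\Phi_{MK}}_1 \le \eps_n$. By the Alicki-Fannes-Winter continuity inequality for conditional entropy, this implies $|I(\tM\rangle K)_\rho - nQ| \le n\,\delta(\eps_n)$ with $\delta(\eps_n)\to 0$. Now I would apply the quantum data-processing inequality: the coherent information can only decrease under Bob's decoding map, and when pulled back through the encoding map and the channel, it is upper bounded by $I_c(\channel^{\otimes n})$. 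This yields $nQ \le I_c(\channel^{\otimes n}) + n\,\delta(\eps_n)$. Dividing by $n$ and taking $n\to\infty$ along the sequence gives $Q \le \lim_{n\to\infty}\tfrac{1}{n}I_c(\channel^{\otimes n})$, completing the converse.

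The main obstacle lies in the achievability step. Establishing that a Haar-random encoding decouples the environment from the reference is nontrivial and relies on concentration estimates for random unitaries (equivalently, the one-shot decoupling theorem). An equivalent but historically earlier route uses random stabilizer/CSS codes and the hashing bound; either way, the technical heart is showing that the environment's knowledge of the message vanishes on average. A secondary subtlety is that, unlike in the classical setting, the single-letter quantity $I_c(\channel)$ is \emph{super-additive}, so the single-letter expression $\max_{\phi}I(A_1\rangle B)$ is not itself the capacity; this is precisely why the regularization $\lim_{k\to\infty}\tfrac{1}{k}I_c(\channel^{\otimes k})$ cannot be removed and why achievability must be proved at blocklength $nk$ for arbitrary $k$.
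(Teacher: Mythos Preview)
The paper does not prove Theorem~\ref{theo:CNoSI}: it appears in Subsection~\ref{subsec:Previous} (Related Work) as a cited result, and no proof is given anywhere in the text or appendices. There is therefore no proof in the paper to compare your proposal against.

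Your sketch is nonetheless a correct outline of the standard modern argument, and it uses precisely the toolkit the paper itself relies on for its own theorems: the decoupling theorem and Uhlmann's theorem (Appendix~\ref{app:decoupling}) for achievability, and the Alicki--Fannes--Winter inequality together with quantum data processing for the converse (as in Appendices~\ref{app:ClEA} and~\ref{app:qEA}). One small slip: in the converse you track $I(\tM\rangle K)_\rho$ and then invoke data processing under Bob's decoder, but the data-processing inequality for coherent information applies when the \emph{second} (conditioning) argument is processed, and Bob's decoder acts $B^n\to\tM$. The chain should read
\begin{align*}
nQ = I(K\rangle M)_\Phi \le I(K\rangle \tM)_\rho + n\delta(\eps_n) \le I(K\rangle B^n)_\omega + n\delta(\eps_n) \le I_c(\channel^{\otimes n}) + n\delta(\eps_n),
\end{align*}
with the reference $K$ in the first slot throughout. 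For the maximally entangled state $\Phi_{MK}$ the two orderings coincide, but after decoding they need not, so the direction matters for the data-processing step.
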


Next, consider communication with reliable entanglement assistance. The entanglement-assisted capacity formula turns out to be the quantum analog of Shannon's classical formula \cite{BennettShorSmolin:99p,BennettShorSmolin:02p}. Define
\begin{align}
&I(\channel)= \max_{\ket{\phi_{A_1 A}} } I(A_1; B)_\omega \,,
\label{eq:inCea}
\nonumber\\
& \omega_{A_1 B}\equiv  (\text{id}\otimes\channel)( \kb{ \phi_{A_1 A} }) \,,\; \abs{\Hset_{A_1}}\leq \abs{\Hset_A} \,.
\end{align}
\begin{theorem} 
\label{theo:CeaNoSI}
The classical capacity and the quantum capacity of a quantum channel $\channel_{A\to B}$ with reliable entanglement assistance are given by \cite{BennettShorSmolin:99p,BennettShorSmolin:02p}
\begin{align}
C_{\text{EA}}(\channel) &= I(\channel) \,, %
\\
Q_{\text{EA}}(\channel) &= \frac{1}{2}I(\channel) \,.
\label{eq:QeaNoMsk}
\end{align}
\end{theorem}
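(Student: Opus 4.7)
The statement bundles four claims: achievability and converse for each of $C_{\text{EA}}$ and $Q_{\text{EA}}$. My plan is to establish the classical result $C_{\text{EA}}(\channel)=I(\channel)$ directly, and then derive the quantum result from it via the teleportation / super-dense coding duality, which will give $Q_{\text{EA}}(\channel)=\tfrac12 I(\channel)$ almost for free. Throughout, fix a pure state $\ket{\phi_{A_1 A}}$ achieving the maximum in \eqref{eq:inCea}, set $\omega_{A_1 B}=(\text{id}\otimes\channel)(\kb{\phi_{A_1 A}})$, and let $\ket{\psi_{A_1 BE}}$ be its Stinespring purification.

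For the achievability $C_{\text{EA}}\geq I(\channel)$, the plan is a random coding argument over unitaries. Alice and Bob pre-share $n$ copies of $\ket{\phi_{A_1 A}}$, with Alice holding $A^n$ and Bob holding $A_1^n=G_B$. To encode $m\in[1{:}2^{nR}]$, Alice applies a unitary $U_m$ drawn uniformly from a Haar-random codebook acting on $A^n$, then transmits through $\channel^{\otimes n}$. Bob, after the channel, holds $(A_1^n,B^n)$ whose joint state is (approximately) $(U_m\otimes I)\omega_{A_1 B}^{\otimes n}(U_m^\dagger\otimes I)$. I would invoke a one-shot HSW/pretty-good-measurement bound applied to this induced ensemble, or equivalently the Hayashi--Nagaoka inequality combined with the quantum typical projector onto the $\omega_{A_1 B}^{\otimes n}$-typical subspace, to conclude that Bob can decode $m$ with vanishing error whenever $R<I(A_1;B)_\omega-\delta$. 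The main technical point is that the reduced state on Bob's side, averaged over $m$, converges to $\omega_B^{\otimes n}$ while the individual conditional states concentrate on the typical subspace of $\omega_{A_1 B}^{\otimes n}$; the gap in the exponents of the projector dimensions is exactly $I(A_1;B)_\omega$.

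For the converse $C_{\text{EA}}\leq I(\channel)$, take any $(2^{nR},n,\eps)$ code with shared pure state $\Psi_{G_A G_B}$. Let $M$ be uniform and let $\rho_{M B^n G_B}$ be the joint classical--quantum state at the output. Fano's inequality plus Holevo's bound give $nR\leq I(M;B^n G_B)_\rho+n\eps_n$. Since $G_B$ is independent of $M$ and purifies $G_A$, chain-rule and subadditivity together with the fact that the encoding is a CPTP map applied to $G_A$ alone yield $I(M;B^n G_B)_\rho\leq I(A'^n;B^n)_\sigma$ for a pure state $\sigma_{A'^n B^n}=(\text{id}\otimes\channel^{\otimes n})(\kb{\varphi_{A'^n A^n}})$. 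Subadditivity of quantum mutual information on tensor-product channels (which crucially is \emph{additive} unlike Holevo information, a fact that requires a short convexity/purification argument) then gives $I(A'^n;B^n)_\sigma\leq n\,I(\channel)$. This additivity step is the hard part: one uses that the maximization in \eqref{eq:inCea} is over pure input states and that $I(A_1;B)_\omega$ is concave in the input and additive under tensor products, which in turn follows from strong subadditivity.

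For the quantum capacity, achievability follows by concatenating an entanglement-assisted classical code of rate $I(\channel)-\delta$ with the teleportation protocol, which converts $2$ classical bits and $1$ ebit into $1$ qubit, giving $Q_{\text{EA}}\geq \tfrac12 I(\channel)$. Conversely, any entanglement-assisted quantum code of rate $Q$ can be combined with super-dense coding to give an entanglement-assisted classical code of rate $2Q$, hence $2Q_{\text{EA}}\leq C_{\text{EA}}=I(\channel)$. Both reductions are operational and use only the independently established resource-conversion inequalities, so no further quantum-information-theoretic work is needed beyond the classical case. The main obstacle in the whole proof is the converse additivity step for $I(\channel)$; once that is in hand, everything else is either a standard typicality/random-coding argument or a resource-conversion reduction.
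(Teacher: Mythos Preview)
The paper does not actually prove Theorem~\ref{theo:CeaNoSI}: it is stated in Subsection~\ref{subsec:Previous} (``Related Work'') as a known result due to Bennett, Shor, Smolin, and Thapliyal, with citations \cite{BennettShorSmolin:99p,BennettShorSmolin:02p}. The only argument the paper offers is the informal remark after the theorem that $Q_{\text{EA}}(\channel)=\tfrac12 C_{\text{EA}}(\channel)$ can be read off from super-dense coding and teleportation, which is exactly the reduction in your last paragraph. There is therefore no proof in the paper to compare against.

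Judged on its own, your overall plan is the standard one, and both the converse outline and the teleportation/super-dense reduction are fine (the substantive step, additivity of $I(\channel)$ under tensor products, is indeed what is established in \cite{BennettShorSmolin:02p}). But your achievability sketch contains a real slip. You assert that after Alice applies a Haar-random $U_m$ to $A^n$ and transmits, Bob's joint state on $A_1^n B^n$ is ``(approximately) $(U_m\otimes I)\omega_{A_1 B}^{\otimes n}(U_m^\dagger\otimes I)$''. That is false: $U_m$ acted on the $A^n$ register, which has now passed through the channel; nothing places a unitary on Bob's ancilla $A_1^n$, and $\channel^{\otimes n}$ does not commute with $U_m$ on the output side either. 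The ricochet identity $(I\otimes U)\ket{\Phi}=(U^{T}\otimes I)\ket{\Phi}$ that would move the unitary over to Bob's ancilla holds only for \emph{maximally} entangled $\ket{\Phi}$, which the optimizer $\ket{\phi_{A_1 A}}$ of \eqref{eq:inCea} need not be. The actual BSST achievability fixes this by restricting the encoding operators to Heisenberg--Weyl shifts acting block-diagonally on conditional type classes of the Schmidt decomposition, so that ricochet applies type-class by type-class; this is precisely the construction the present paper itself uses in Appendix~\ref{app:ClEA} (see \eqref{eq:Vt}--\eqref{eq:Ugamma} and \eqref{eq:RicoCons}) for the entanglement-assisted layer of its own coding scheme. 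A bare Haar-random $U_m$ on $A^n$ does not directly yield an ensemble to which the packing lemma applies with exponents $H(B|X)$ versus $H(A_1 B|X)$; you would need either the type-class structure or a genuinely different decoupling-style argument.
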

The classical capacity and the quantum capacity have different units, \ie $C_{\text{EA}}(\channel)$ is measured in classical information bits per channel use, whereas $Q_{\text{EA}}(\channel)$ in information \emph{qubits} per channel use.
Nonetheless,  the capacity values satisfy $Q_{\text{EA}}(\channel)=\frac{1}{2} C_{\text{EA}}(\channel) $, given reliable entanglement assistance. This relation can be inferred from the fundamental single-unit protocols.
Specifically, super-dense coding \cite{BennetWiesner:92p} is a well known communication protocol whereby two classical bits are transmitted using a single use of a noiseless qubit channel and a maximally entangled pair. In the other direction,  by employing the teleportation protocol \cite{BennettBrassardJozsaPeres:93p}, qubits can be sent %
at half the rate of classical bits given entanglement resources.

\section{Results}
We establish a regularized characterization for the capacity region with unreliable entanglement assistance, for the transmission of either classical information or quantum information.

\subsection{Classical Communication}
Let $\channel_{ A\rightarrow B}$ be a given channel, and
define
\begin{align}
\mathcal{R}_\text{EA*}(\channel)
&=
\bigcup_{ p_X \,,\; \varphi_{A_0 A_1} \,,\; \Fset^{(x)} 
}
\left\{ \begin{array}{rl}
  (R,R') \,:\;
	R \leq& I(X;B)_\omega  \\
  R'   \leq& I(A_1;B|  X)_\omega
	\end{array}
\right\}
\label{eq:calRClea}
\intertext{with}
\omega_{XA_1  A}
&=
\sum_{x\in\Xset} p_X(x) \kb{x}\otimes ( \text{id}\otimes \Fset^{(x)}_{A_0\to A})
(\varphi_{ A_1 A_0}) \,,  \label{eq:calRClea1}\\
\omega_{XA_1  B}
&=
(\text{id}\otimes\channel_{A\rightarrow B})(\omega_{XA_1  A})
\,.
\label{eq:calRClea2}
\end{align}
Intuitively, the classical variable $X$  is associated with the classical message $m$, which Bob decodes whether there is entanglement assistance or not. The reference system $A_0$ can be thought of as Alice's share of the entanglement resources. Since the resources are pre-shared before communication takes place, the entangled state $\varphi$ is non-correlated with the messages.
 Alice encodes the message $m'$  using the encoding operator $ \Fset^{(x)} $. 
Before we state the capacity theorem, we give the following lemma. The property below simplifies the computation of the above region and the achievability proof as well. %
\begin{lemma}
\label{lemm:pureCea}
The union in (\ref{eq:calRClea}) is exhausted by pure states $\ket{\phi_{A_0 A_1}}$ and with the cardinality of %
$\abs{\Xset}\leq \abs{\Hset_A}^2+1$. 
\end{lemma}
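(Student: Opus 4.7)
The plan is to handle the two assertions in sequence: first purification of the reference state, then the cardinality bound on $\Xset$ via a Fenchel--Eggleston--Carath\'eodory argument. Both steps leave the other feature untouched, so they can be applied consecutively.

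For the purification part, let $(R,R')$ be attained by a triple $(p_X,\varphi_{A_0 A_1},\{\Fset^{(x)}\})$ with $\varphi_{A_0 A_1}$ possibly mixed, and let $\ket{\phi_{A_0 A_1 R}}$ be a purification of $\varphi_{A_0 A_1}$. Relabel $A_1'\equiv A_1 R$ and retain the same distribution $p_X$ and encoders $\Fset^{(x)}_{A_0\to A}$ on the enlarged reference. Since $\Fset^{(x)}$ and $\channel_{A\to B}$ act only on $A_0$ and $A$, tracing out $R$ from the new state gives back the old one, so $\omega'_{X A_1 B}=\omega_{X A_1 B}$ and in particular $I(X;B)_{\omega'}=I(X;B)_\omega$. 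For the second rate, the chain rule and nonnegativity of conditional mutual information give
\begin{align*}
I(A_1';B|X)_{\omega'} = I(A_1;B|X)_{\omega'} + I(R;B|A_1,X)_{\omega'} \ \geq\ I(A_1;B|X)_\omega,
\end{align*}
so the pure triple $(p_X,\phi_{A_0 A_1'},\{\Fset^{(x)}\})$ still witnesses $(R,R')$.

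For the cardinality bound, fix a pure $\varphi_{A_0 A_1}$ and decompose
\begin{align*}
I(X;B)_\omega &= H(\omega_B) - \sum_{x} p_X(x)\, H(\omega_B^x),\\
I(A_1;B|X)_\omega &= \sum_{x} p_X(x)\, I(A_1;B)_{\omega^x_{A_1 B}},
\end{align*}
where $\omega^x_{A_1 B} = (\mathrm{id}_{A_1}\otimes\channel_{A\to B})(\mathrm{id}_{A_1}\otimes\Fset^{(x)}_{A_0\to A})(\varphi_{A_0 A_1})$. It suffices to preserve simultaneously (i) the input marginal $\omega_A=\sum_x p_X(x)\Fset^{(x)}(\varphi_{A_0})$, which carries $|\Hset_A|^2-1$ real parameters (Hermitian, unit trace), (ii) the scalar $\sum_x p_X(x) H(\omega_B^x)$, and (iii) the scalar $\sum_x p_X(x) I(A_1;B)_{\omega^x_{A_1 B}}$. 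Parameterizing each letter by the induced vector in $\mathbb{R}^{|\Hset_A|^2+1}$ and noting that the image of the (convex and connected) set of encoders is connected and compact, the Fenchel--Eggleston--Carath\'eodory theorem produces a distribution $\tp_X$ supported on at most $|\Hset_A|^2+1$ letters that reproduces the same expected vector. This preserves $\omega_A$, hence $\omega_B=\channel(\omega_A)$ and $H(\omega_B)$; together with (ii) and (iii) this preserves $I(X;B)_\omega$ and $I(A_1;B|X)_\omega$ exactly.

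The only non-bookkeeping step is isolating which families of scalars fully control the two rate functionals while keeping the total parameter count to $|\Hset_A|^2+1$. Once $\omega_A$, $\sum_x p_X(x) H(\omega_B^x)$, and $\sum_x p_X(x) I(A_1;B)_{\omega^x_{A_1 B}}$ are identified as the sole dependencies, purification is a one-line chain-rule estimate and the alphabet reduction is a standard Fenchel--Eggleston--Carath\'eodory application.
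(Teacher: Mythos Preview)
Your proof is correct and follows essentially the same strategy as the paper: purification followed by a Fenchel--Eggleston--Carath\'eodory reduction that preserves the input marginal $\omega_A$ (giving $|\Hset_A|^2-1$ real parameters), the averaged conditional output entropy $H(B|X)_\omega$, and the averaged $I(A_1;B|X=x)_\omega$.

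The one noteworthy difference is in the purification step. The paper absorbs the purifying system into $A_0$ rather than $A_1$: it writes a purification $\ket{\phi_{A_0 J_0 A_1}}$, sets $\bar A_0\equiv(A_0,J_0)$, and precomposes each encoder with $\trace_{J_0}$, so that the joint state $\omega_{X A_1 B}$ is reproduced \emph{exactly} and both mutual informations are preserved with equality. Your route instead enlarges the reference to $A_1'\equiv A_1 R$ and uses the chain rule $I(A_1 R;B|X)\geq I(A_1;B|X)$ to show the second rate can only increase. Both are valid; the paper's choice avoids the inequality and keeps the region pointwise identical, while yours is arguably more in the spirit of ``purification never hurts'' arguments. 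Either way the cardinality step goes through unchanged.
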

The restriction to pure states %
is based on %
 state purification, %
while %
the alphabet bound follows 
from the Fenchel-Eggleston-Carath\'eodory lemma \cite{Eggleston:66p}, using similar arguments that Yard \etal \cite{YardHaydenDevetak:08p} use. 
 The details are given in 
Appendix~\ref{app:pureCea}. Our main result on classical communication with unreliable entanglement assistance is stated below.
\begin{theorem}
\label{theo:ClEA}
The classical capacity region of a quantum channel $\channel_{A\to B}$ with unreliable entanglement assistance satisfies
\begin{align}
\mathcal{C}_\text{EA*}(\channel)=& \bigcup_{n=1}^\infty \frac{1}{n} \mathcal{R}_\text{EA*}(\channel^{\otimes n}) \,.
\end{align}
\end{theorem}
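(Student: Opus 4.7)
The plan is to prove matching inner and outer bounds at the $n$-letter level, so that regularization of the single-letter region $\mathcal{R}_\text{EA*}$ emerges naturally on both sides. Since $\channel^{\otimes n}$ is itself a quantum channel, proving single-letter achievability of $\mathcal{R}_\text{EA*}(\channel)$ immediately implies that $\frac{1}{n}\mathcal{R}_\text{EA*}(\channel^{\otimes n})$ is achievable for every $n$. Conversely, for any $(n,\eps)$-code I will exhibit a point in $\frac{1}{n}\mathcal{R}_\text{EA*}(\channel^{\otimes n})$ that dominates its rate pair, which is tautologically in the union. Lemma~\ref{lemm:pureCea} lets me restrict to pure $\varphi_{A_0A_1}$ throughout.

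For achievability I use a superposition-style random code. First, draw $x^n(m)$ i.i.d.\ from $p_X$ for each guaranteed message $m\in[1:2^{nR}]$. Second, take $n$ independent copies of the pure entangled resource $\varphi_{A_0A_1}$ shared between Alice and Bob, and for each pair $(m,m')$ let Alice encode by applying $\bigotimes_{i=1}^n \Fset^{(x_i(m))}_{A_0\to A}$ to her share of $\varphi^{\otimes n}$, followed by a Heisenberg--Weyl conditionally random unitary on the $A_1^n$ system indexed by $m'$ in the style of the Bennett--Shor--Smolin--Thapliyal proof of the entanglement-assisted capacity theorem. Without assistance, Bob ignores $G_B$ and applies the HSW decoder for the induced classical ensemble $\{p_X(x), \omega_B^x\}$ with $\omega_B^x=\channel\bigl(\Fset^{(x)}(\varphi_{A_0})\bigr)$; this succeeds whenever $R<I(X;B)_\omega$. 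Given assistance, Bob first runs the same HSW measurement to recover $\hm$ with high probability, then conditions on $x^n(\hm)$ and invokes the entanglement-assisted decoder on $B^n G_B$; conditional on $X=x$ the effective channel from $A_1$ to $B$ is $\channel\circ\Fset^{(x)}$ acting on $\varphi_{A_0A_1}$, with mutual information $I(A_1;B\mid X=x)_\omega$, so averaging over $p_X$ yields the target $R'<I(A_1;B\mid X)_\omega$. A standard expectation-then-expurgation argument produces a deterministic code meeting both error criteria simultaneously.

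For the converse, consider any $(2^{nR},2^{nR'},n,\eps)$-code with uniform messages $M,M'$. Applying the quantum Fano inequality to the unassisted decoder gives $nR\le I(M;B^n)_\sigma+n\epsilon_n$, and to the assisted decoder $n(R+R')\le I(M M';B^n G_B)_\sigma+n\epsilon_n$ with $\epsilon_n\to 0$. Using the chain rule and data processing, the second bound yields $nR'\le I(M';B^n G_B\mid M)_\sigma+n\epsilon_n$. I then identify the auxiliaries: set $X\leftarrow M$, take $A_0^n\leftarrow G_A$, let $\Fset^{(m)}$ be Alice's encoding map indexed by $m$, and identify $A_1^n$ with the purifying system of $\varphi^{(n)}_{A_0^n A_1^n}\equiv\Psi_{G_AG_B}$ on Bob's side (so that $A_1^n$ is part of $G_B$). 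With this identification, the resulting state has exactly the structure of (\ref{eq:calRClea1})--(\ref{eq:calRClea2}) evaluated on $\channel^{\otimes n}$, and the two Fano bounds place $(R-\epsilon_n,R'-\epsilon_n)$ in $\frac{1}{n}\mathcal{R}_\text{EA*}(\channel^{\otimes n})$.

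The main obstacle is the identification step in the converse: the single-letter region (\ref{eq:calRClea}) demands that $\varphi_{A_0A_1}$ be independent of $X$, reflecting the fact that the pre-shared entanglement is prepared before the messages are chosen, yet a generic code intertwines the message $M$ with Alice's local system $G_A$ through the encoder $\Fset^{m,m'}$. Handling this requires splitting $G_A$ into the message-independent resource and the message-dependent encoding, which is precisely why the auxiliary $A_0$ and the family $\{\Fset^{(x)}\}$ appear separately in (\ref{eq:calRClea1}); the purification afforded by Lemma~\ref{lemm:pureCea} is what makes this decomposition canonical. On the achievability side, the secondary difficulty is showing that the conditional entanglement-assisted decoder for $m'$ retains vanishing error after averaging over the random classical codebook, which I handle by conditioning on the event that $m$ was decoded correctly and invoking the BSST decoding analysis on the typical $x^n$ sequences.
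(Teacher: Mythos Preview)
Your achievability plan is essentially the paper's: an HSW code for the cloud message $m$, a BSST-style Heisenberg--Weyl encoding for the satellite $m'$, and sequential decoding with gentle measurement. One imprecision: you write that Alice applies the Heisenberg--Weyl unitary ``on the $A_1^n$ system,'' but $A_1$ is Bob's share. Alice acts on $A^n$ (after $\Fset^{(x)}$), and the ricochet property $(U\otimes\identity)\ket{\Phi}=(\identity\otimes U^T)\ket{\Phi}$ is what makes the reduced state on $B^n$ independent of $m'$; without that step, the unassisted HSW decoder has no reason to work.

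The converse has a genuine gap in the auxiliary identification. You set $X\leftarrow M$, $A_0^n\leftarrow G_A$, $A_1^n\leftarrow G_B$, and declare $\Fset^{(m)}$ to be ``Alice's encoding map indexed by $m$.'' But Alice's encoder $\Fset^{m,m'}_{G_A\to A^n}$ depends on \emph{both} messages, so there is no single map $\Fset^{(m)}$ acting on $G_A$ alone that reproduces the code state; averaging over $m'$ destroys exactly the correlation you need. Correspondingly, your Fano bound gives $nR'\le I(M';B^n G_B\mid M)$, while your identification would require the target to be $I(A_1^n;B^n\mid X)=I(G_B;B^n\mid M)$, and these are not the same quantity.

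The paper's fix is to absorb the excess message into the auxiliaries: set $A_1^n\equiv(K',G_B)$ where $K'$ is the reference copy of $M'$, and correspondingly let $A_0^n$ contain a classical register $M'$ so that $\varphi_{A_0^nA_1^n}=\pi_{K'M'}\otimes\Psi_{G_AG_B}$ is genuinely message-independent and $\Fset^{(m)}_{A_0^n\to A^n}$ reads $m'$ off its input before applying $\Fset^{m,m'}$. One then needs the chain-rule step
\[
I(K';B^nG_B\mid K)=I(K';G_B\mid K)+I(K';B^n\mid G_BK)\le I(K'G_B;B^n\mid K),
\]
using $I(K';G_B\mid K)=0$ (pre-shared entanglement is independent of the messages), to move $G_B$ from the output side to the input side and match the region template. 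Your last paragraph correctly flags the independence of $\varphi$ from $X$ as the crux, but the issue you actually mishandle is the $m'$-dependence, not the $m$-dependence.
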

The proof of Theorem~\ref{theo:ClEA} is given in Appendix~\ref{app:ClEA}. 
In general, there is a tradeoff between the rates $R$ and $R'$, and we cannot necessarily achieve the maximum rate for both of them simultaneously. Intuitively, the excess rate $R'$ that is provided by entanglement assistance depends on the level of entanglement between the ancilla $A_1$ and the channel input $A$, or equivalently, on how entanglement-breaking the encoding map is. We give a more precise explanation below. 

In the region formula in (\ref{eq:calRClea}), we have a union over the probability distributions $p_X$, states  $\varphi_{A_0 A_1}$, and collections of mappings $\{\Fset_{A_0\to A}^{(x)}\}_{x\in\Xset}$. The boundary of this region is attained by optimizing over these objects.  Observe that in order for $R'$ to achieve the entanglement-assisted capacity, we may set $\varphi_{A_0 A_1}$ as the entangled state that attains the maximum in (\ref{eq:inCea}), and take $\Fset_{A_0\to A}^{(x)}$ to be the identity map. Since the output has no correlation with $X$, this assignment achieves the rate pair $(R,R')=(0,C_{\text{EA}}(\channel))$ (\cf (\ref{eq:inCea}) and (\ref{eq:calRClea})).

To maximize the unassisted rate, set an encoding channel $\Fset_{A_0\to A}^{(x)}$ that outputs the pure state $|\psi_A^x\rangle$ that is optimal in (\ref{eq:HolevoChan}), \ie
\begin{align}
\Fset^{(x)}(\varphi_{A_1 A_0})=\varphi_{A_1}\otimes \psi_A^x \,.
\end{align}
Such an assignment achieves $(R,R')=(\chi(\channel),0)$ (\cf (\ref{eq:HolevoChan}) and (\ref{eq:calRClea})).
In other words, the Holevo information is achieved for an entanglement-breaking encoder.

\begin{remark}
\label{remark:Broadcast_Relation}
Our model has a deep relation to the quantum broadcast channel. 
We point out a heuristic connection, while the precise formulation is delegated to the discussion section
(see Subsection~\ref{subsection:One_Side_Entnaglement}).
The characterization of the classical capacity region %
in Theorem~\ref{theo:ClEA} clearly  resembles the classical SPC\begin{footnote}{A superposition code should not be confused with quantum superposition. The two notions are unrelated.}\end{footnote} region  of the broadcast channel without assistance  \cite{YardHaydenDevetak:11p} (see Theorem 2 therein).
The difference is that here the encoding involves quantum operations.
Nevertheless, we can portray a similar metaphorical image:
Let %
 $m$ be an index over $2^{nR}$ clouds. Recall that the region formula in (\ref{eq:calRClea}) involves an ancillary state $\varphi_{A_0 A_1}$ and a collection of encoding mappings $\{ \Fset^{(x)}%_{A_0\to A} 
 \}_{x\in\Xset}$.
Each cloud center is associated with a classical codeword $x^n(m)$, and
at the center of each cloud there is the state $\otimes_{i=1}^n \Fset^{(x_i(m))}(\varphi)$.  Applying random Heisenberg-Weyl operators that 
encode the message $m'$  takes us from the cloud center to a satellite on the cloud that depends on both $m$ and $m'$. The channel input is the satellite state.
See Figure~\ref{fig:Superposition}.
Bob decodes in two steps. First, Bob recovers the cloud, \ie he estimates $m$.
If the entanglement assistance is absent, then Bob quits after the first step.
Otherwise, if Bob has entanglement assistance, then he continues to decode the satellite $m'$.
\end{remark}

\begin{remark}
The results are very surprising when compared with classical systems.
SPC is a sophisticated network technique that is mainly used for multi-user channels or other network configurations in order to achieve a tradeoff between different users or resources. 
For example, consider a transmitter $X$ that communicates two messages over a classical broadcast channel with two receivers, $Y_1$ and $Y_2$, while each message is intended for a different receiver.
Of course, it is not necessarily possible to maximize the transmission rates for both users simultaneously. 
%Let us denote the marginal channels to each receiver by $P_{Y_1|X}$ and $P_{Y_2|X}$, respectively.
However,
if the outputs are identical, \ie $Y_1=Y_2\equiv Y$, then
%
%the marginal channels are identical and equal, \ie $P_{Y_1|X}=P_{Y_2|X}$, then 
%time division is optimal: The 
the capacity region is given by the set of rate pairs 
$(R_1,R_2)$ such that $R_1+R_2\leq C_1$, where $C_1%=\max_{p_X} I(X;Y)
$ the Shannon capacity of the point-to-point channel $P_{Y |X}$.
In such a simple case, the elaborate SPC scheme is not needed, and the capacity region can be achieved by 
the much simpler \emph{time division} approach.
That is, a concatenation of two single-user codes is optimal.

In our  model, we consider a point-to-point quantum channel $\channel_{A\to B}$.
Thereby, it may appear at a first glance as if time division should be optimal, regardless of whether the channel is noisy or not.
In the example below, we show that time division can be sub-optimal.
To this end, we apply Theorem~\ref{theo:ClEA} with a quantum state $\ket{\phi_{A_0 A_1}}$ that is formed by the superposition of a product state and a maximally entangled Bell state.
Hence, despite the simplicity of this single-user point-to-point model, one can outperform time division by inserting a quantum superposition state into an SPC scheme.

\end{remark}

\begin{remark}
Returning to our metaphor of
 travelers  on a sea journey, we now consider a division plan.
In this scenario, $N$ travelers are divided between two ships: a light ship without lifeboats and a heavy ship with lifeboats that can accommodate all passengers. The light ship has a maximum excess speed of $R_{\text{light}}'=V(N,0)$, but no guaranteed speed since it does not have any lifeboats ($L=0$). The heavy ship, on the other hand, has a high guaranteed speed of $R_{\text{heavy}}=v_0$, but a low excess speed of $R_{\text{heavy}}'$. Although the heavy ship is less efficient, it is more reliable. By dividing the passengers between the two ships, we can achieve an average speed pair of $(R,R')=(1-\lambda)(R_{\text{light}},R_{\text{light}}')+\lambda (R_{\text{heavy}},R_{\text{heavy}}')$, where $\lambda$ represents the fraction of passengers on the heavy ship normalized by the total number of passengers. 
Figuratively,
our results show that if the journey is subject to %speeds are influenced by 
a quantum evolution, then we may outperform the division plan by
allowing travelers to be in a quantum superposition state between the two ships.
\end{remark}
 
To demonstrate our results, we give an example.
\begin{example}
\label{example:depolCl}
Consider the qubit depolarizing channel
\begin{align}
\channel(\rho)=(1-\eps)\rho+\eps \frac{\identity}{2} \,,
\end{align}
with $\eps\in [0,1]$. The classical capacity without assistance is given by $C(\channel)=%
1-H_2\left( \frac{\eps}{2} \right)$, and it is achieved with a symmetric distribution over the ensemble $\{ |0\rangle, |1\rangle \}$, where 
$H_2(t)\equiv -t\log(t)-(1-t)\log(1-t)$ is the binary entropy function \cite{King:03p}.
On the other hand, the classical capacity with reliable entanglement assistance is given by $C_{\text{EA}}(\channel)=2-H\left(1- \frac{3\eps}{4},\frac{\eps}{4},\frac{\eps}{4},\frac{\eps}{4} \right)$, and it is achieved with a maximally entangled input state \cite{BennettShorSmolin:99p}.

A natural compromise is to mix the strategies above. %
Let $Z$ be an independent random bit that chooses between the strategies, %
where $Z\sim \text{Bernoulli}(\lambda)$ for a given $\lambda\in [0,1]$.
That is, we define $\Fset^{(x,z)}$ by $\Fset^{(x,0)}(\rho_A)=\psi_A^x$ and $\Fset^{(x,1)}=\text{id}$. Plugging $\tilde{X}\equiv (X,Z)$, we obtain the time-division %
achievable region,
\begin{align}
\mathcal{R}_\text{EA*}(\channel)
&\supseteq
\bigcup_{0\leq \lambda\leq 1}
\left\{ \begin{array}{rl}
  (R,R') \,:\;
	R \leq& (1-\lambda)\, C(\channel)   \\
  R'   \leq& \lambda C_{\text{EA}}(\channel)
	\end{array}
\right\} \,.
\end{align}

Next, we numerically compute an achievable region that outperforms the time-division bound.
 Instead of using a classical mixture of the strategies, we use quantum superposition. Define a non-normalized vector,
\begin{align}
|u_\beta \rangle \equiv \sqrt{1-\beta} \ket{0}\otimes \ket{0}+\sqrt{\beta} \ket{\Phi} \,.
\label{eq:ubeta}
\end{align}
Then, set
\begin{align}
\ket{\phi_{A_0 A_1}}&\equiv \frac{1}{\norm{u_\beta}} \ket{u_\beta} \,, \\
p_X &= \left( \frac{1}{2},\frac{1}{2} \right) \,,\\
\Fset^{(x)}(\rho)&\equiv \mathsf{X}^x \rho \mathsf{X}^x \,,
\end{align}
where $\mathsf{X}$ is the bitflip Pauli operator.
Observe that for $\beta=0$, the input state is $\Fset^{(x)}(\ketbra{0})=\ketbra{x}$, which achieves the classical capacity without assistance. On the other hand, for $\beta=1$, the parameter $x$ chooses one of two Bell states.

Figure~\ref{fig:clEAur} depicts the resulting region for a depolarization probability of $\eps=\frac{1}{2}$.
The triangular region below the dashed red line is the time-division bound, which is obtained by a classical mixture, %
whereas  the solid blue line indicates the achievable region corresponding to the superposition state $\frac{\ket{u_\beta}}{\norm{u_\beta}}$, as in (\ref{eq:ubeta}).

\end{example}
%\vspace{-0.75cm}
%\begin{center}
\begin{figure}[tb]
\center
\includegraphics[scale=0.09,trim={6cm 0 0 0},clip]
{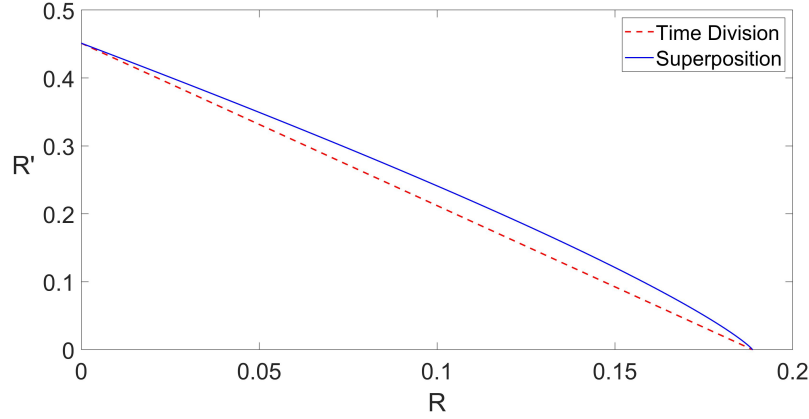} %
\caption{Achievable classical rate regions for the depolarizing channel with unreliable entanglement assistance, for a depolarization probability $\eps=\frac{1}{2}$.
}
\label{fig:clEAur}
\end{figure}
%\end{center}

For a noiseless qubit channel, time division is optimal.
\begin{corollary}
The classical capacity region of a noiseless qubit channel  with unreliable entanglement assistance is given by the time-division region, \ie
\begin{align}
\mathcal{C}_\text{EA*}(\text{id})
&=
\bigcup_{ 0\leq \lambda\leq 1} 
\left\{ \begin{array}{rl}
  (R,R') \,:\;
	R \leq & 1-\lambda  \\
  R'   \leq& 2\lambda
	\end{array}
\right\} \,.
\label{eq:calRClea0}
\end{align}
\end{corollary}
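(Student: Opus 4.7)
The plan is to establish the achievability and converse halves separately.

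\textbf{Achievability.} The region in~(\ref{eq:calRClea0}) is the triangle with vertices $(0,0)$, $(1,0)$, and $(0,2)$, so it suffices to attain the two non-trivial corners and fill in the rest by time-sharing over channel uses. The corner $(1,0)$ is elementary: Alice encodes a classical bit in the computational basis and sends it through $\text{id}_{A\to B}$, while ignoring the entangled ancilla. The corner $(0,2)$ is achieved by super-dense coding: with a pre-shared Bell pair $\ket{\Phi}_{G_A G_B}$, Alice encodes $m'\in\{0,1,2,3\}$ by applying the associated Pauli operator to $G_A$ and transmitting it. Given $G_B$, Bob recovers $m'$ exactly via a Bell-basis measurement; without $G_B$, the qubit he receives is $\identity/2$ independent of $m'$, forcing the guaranteed rate to vanish. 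Concatenating $(1-\lambda)n$ uses of the first protocol with $\lambda n$ uses of the second (with $\lambda$ rational and passing to the closure) then achieves every $(1-\lambda,2\lambda)$ with $\lambda\in[0,1]$.

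\textbf{Converse.} Let $(R,R')\in\mathcal{C}_{\text{EA*}}(\channel)$. By Theorem~\ref{theo:ClEA} there exist an integer $n$ and an admissible triple $(p_X,\varphi_{A_0^n A_1},\{\Fset^{(x)}_{A_0^n\to A^n}\})$ whose induced state $\omega_{XA_1 B^n}$, with $B^n=A^n$ since $\channel^{\otimes n}=\text{id}^{\otimes n}$, satisfies $nR\le I(X;B^n)_\omega$ and $nR'\le I(A_1;B^n|X)_\omega$. The key ingredient is the Araki--Lieb dimension bound $I(A;B)\le 2\min\{H(A),H(B)\}$, valid for any bipartite state (since $H(AB)\ge|H(A)-H(B)|$). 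Applied state-by-state in $X$ it gives $I(A_1;B^n|X)_\omega\le 2\,H(B^n|X)_\omega$, and therefore
\begin{align}
2nR+nR' &\le 2I(X;B^n)_\omega + I(A_1;B^n|X)_\omega \nonumber\\
 &= 2H(B^n)_\omega-2H(B^n|X)_\omega + I(A_1;B^n|X)_\omega \nonumber\\
 &\le 2H(B^n)_\omega \le 2n,
\end{align}
since $B^n$ is an $n$-qubit system. Dividing by $n$ yields $2R+R'\le 2$, which together with $R,R'\ge 0$ cuts out exactly the triangle in~(\ref{eq:calRClea0}).

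\textbf{Expected main obstacle.} The only real subtlety is that the auxiliary system $A_1$ carries no a priori dimension bound in the formula, so one cannot hope to bound $I(A_1;B^n|X)$ by $H(A_1|X)$. The Araki--Lieb inequality circumvents this by controlling the mutual information in terms of $H(B^n|X)$ instead, after which the qubit structure of the noiseless channel ($\log\dim B^n=n$) produces the tight slope $-2$ on $R'$ versus $R$. The remainder is direct bookkeeping through the regularization and the observation that the extremal rate pairs $(1,0)$ and $(0,2)$ are attained by the two elementary single-unit protocols.
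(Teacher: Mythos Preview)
Your proof is correct and essentially identical to the paper's. Both hinge on the same inequality $I(A_1;B^n|X)_\omega \le 2H(B^n|X)_\omega$ (you invoke it as Araki--Lieb applied state-wise in $X$; the paper writes it as $H(B^n|A_1X)_\omega \ge -H(B^n|X)_\omega$), and both combine it with $I(X;B^n)_\omega = H(B^n)_\omega - H(B^n|X)_\omega \le n - H(B^n|X)_\omega$. The only cosmetic difference is that you eliminate $H(B^n|X)_\omega$ algebraically to obtain the half-space $2R+R'\le 2$, whereas the paper sets $\lambda \equiv \tfrac{1}{n}H(B^n|X)_\omega$ and recovers the parametric description $R\le 1-\lambda$, $R'\le 2\lambda$; these describe the same triangle.
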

\begin{proof}
As explained in the introduction, to achieve the rate pair $(R,R')=(1-\lambda,2\lambda)$, Alice and Bob simply perform super-dense coding repeatedly, over a fraction of $\lambda$ of the block, and communicate over the  $0$-$1$ basis in the remaining part. 
To show the converse part, let $(R,R')\in \frac{1}{n}\mathcal{R}_\text{EA*}(\channel^{\otimes n})$
(see (\ref{eq:calRClea2})), hence 
\begin{align}
R&\leq \frac{1}{n}I(X;B^n)_\omega
=\frac{1}{n} [H(B^n)_\omega- H(B^n|X)_\omega]
\nonumber\\
&\leq 1-\frac{1}{n} H(B^n|X)_\omega
\intertext{and}
R'&\leq \frac{1}{n}I(A_1;B^n|  X)_\omega
\nonumber\\
&\leq \frac{1}{n}\cdot 2 H(B^n|X)_\omega \,.
\end{align}
The last inequality holds since the quantum conditional entropy satisfies  $|H(B|$ $A)_\rho|$ $\leq H(B)_\rho$ in general, hence
$H(B^n| A_1  X)_\omega=\sum_{x} p_X(x) H(B^n|A_1,X=x)_\omega \geq  -\sum_{x} p_X(x) H(B^n|X=x)_\omega =-H(B^n|X)_\omega
$.
The converse part for the corollary follows, as we define $\lambda\equiv \frac{1}{n} H(B^n|X)_\omega$. 
\end{proof}

\subsection{Quantum Communication}
Consider quantum communication over  $\channel_{ A\rightarrow B}$ with unreliable entanglement assistance.
Define
\begin{align}
\inQ_\text{EA*}(\channel)=
\bigcup_{  \varphi_{A_1 A_2 A} }
\left\{ \begin{array}{l}
  (Q,Q') \,:\;\\
	Q \leq \min\{ I(A_1\rangle B)_\omega \,,\; H(A_1|  A_2)_\omega \} \,,\\
  Q+Q'   \leq \frac{1}{2} I(A_2;B)_\omega
	\end{array}
\right\}
\label{eq:calRQea}
\end{align}
with
\begin{align}
\omega_{A_1 A_2  B}&=(\text{id}\otimes\channel_{A\rightarrow B})(\varphi_{A_1 A_2  A})
\,.
\end{align}
\begin{theorem}
\label{theo:qEA}
The quantum capacity region of a quantum channel $\channel_{A\to B}$ with unreliable entanglement assistance satisfies
\begin{align}
\mathcal{Q}_\text{EA*}(\channel)=& \bigcup_{n=1}^\infty \frac{1}{n} \inQ_\text{EA*}(\channel^{\otimes n}) \,.
\end{align}
\end{theorem}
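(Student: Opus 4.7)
The plan is to prove both achievability and converse for the single-letter region $\inQ_\text{EA*}(\channel)$; the regularized claim then follows by applying the same bound to the block channel $\channel^{\otimes n}$ and taking the union.

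For achievability, I would construct a layered quantum code built from two coherent-information coding primitives. Fix any tripartite state $\varphi_{A_1 A_2 A}$ together with a purification, and consider the i.i.d.\ state $\varphi^{\otimes n}$. The outer layer is a Devetak-style coherent information code on the $A_1$ marginal, transmitting quantum information at rate $Q \leq I(A_1\rangle B)_\omega$ decodable from $B^n$ alone; this secures the guaranteed rate in the no-assistance scenario. Superimposed on this, an entanglement-assisted coherent code on the $A_2$ marginal (the father/Hsieh-Wilde protocol) consumes the pre-shared state $\Psi_{G_A G_B}$ and achieves a total rate $Q+Q' \leq \frac{1}{2} I(A_2;B)_\omega$ when $G_B$ is available. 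When assistance is present, a joint coherent decoder on $B^n G_B$ extracts both layers; when it is absent, a decoupling-type decoder on $B^n$ recovers the outer layer only. The constraint $Q \leq H(A_1|A_2)_\omega$ enters as a structural bound on how much independent quantum information the subsystem $A_1$ can carry while remaining a consistent marginal of $\varphi_{A_1 A_2 A}$.

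For the converse, fix any $(2^{nQ},2^{nQ'},n,\eps_n)$ code sequence with $\eps_n \to 0$. Taking the quantum message states to be maximally entangled with references $K,\bar{K}$ of dimensions $2^{nQ}$ and $2^{n(Q+Q')}$, let $\omega_{K\bar{K} B^n G_B}$ denote the joint state after encoding and the channel. Applying the Alicki--Fannes--Winter continuity bound to the unassisted decoder $\Dset^*_{B^n \to \hat{M}}$ yields $nQ \leq I(K\rangle B^n)_\omega + n\delta(\eps_n)$, and applying it to the assisted decoder $\Dset_{B^n G_B \to \tilde{M}}$ yields $n(Q+Q') \leq \frac{1}{2} I(\bar{K}; B^n G_B)_\omega + n\delta(\eps_n)$. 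Identifying $A_1^{(n)} \equiv K$ and $A_2^{(n)} \equiv \bar{K} G_B$, the independence of the messages and of the entanglement resource gives $H(A_1|A_2)_\omega = H(K)_\omega = nQ$, so the third bound is tight under this identification. Together the three inequalities place $(Q,Q')$ in $\frac{1}{n}\inQ_\text{EA*}(\channel^{\otimes n})$.

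The main obstacle will be on the achievability side: producing a \emph{single} encoder whose output admits two distinct coherent decoders---one from $B^n$ and one from $B^n G_B$---and showing that both succeed simultaneously. Example~\ref{example:depolCl} already illustrates, in the classical case, that time division is strictly suboptimal, so the construction must genuinely superimpose the two codes in a coherent fashion, most likely via a nested codebook together with a two-step decoupling argument. An additional subtlety is ensuring that the failure mode for the inner EA layer (absence of $G_B$) does not corrupt the outer quantum message, which requires careful control over how the two codebooks interact and how the purifying environments are tracked through both decoding scenarios.
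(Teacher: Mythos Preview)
Your converse is essentially the paper's converse: maximally entangled references $K,\bar{K}$, Alicki--Fannes--Winter continuity, data processing, and the identification $A_1^{(n)}=K$, $A_2^{(n)}=\bar{K}G_B$. One small step you glossed over: your assisted bound gives $\tfrac{1}{2}I(\bar{K};B^n G_B)_\omega$, but with the identification $A_2=\bar{K}G_B$ the region requires $\tfrac{1}{2}I(\bar{K}G_B;B^n)_\omega$. The paper bridges this with the chain rule and $I(\bar{K};G_B)_\omega=0$; you should make that explicit.

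On the achievability side there is a real gap. Your description of $Q\le H(A_1|A_2)_\omega$ as ``a structural bound on how much independent quantum information $A_1$ can carry'' is not how it arises, and your plan to stack a Devetak code on $A_1$ with a father protocol on $A_2$ does not by itself produce a single encoder that works for both decoders. In the paper the construction follows Dupuis' broadcast decoupling machinery: one introduces a \emph{free} entanglement-rate parameter $R_e$ and proves four simultaneous decoupling bounds
\[
Q<H(A_1|A_2)_\omega,\quad Q<I(A_1\rangle B)_\omega,\quad Q+Q'+R_e<H(A_2)_\omega,\quad Q+Q'-R_e<I(A_2\rangle B)_\omega,
\]
by applying the decoupling theorem to channels built from the operators $\Pi_{A_1\to A_2 AJ}$ and $\Pi_{A_2\to A_1 AJ}$ (and their composites), and then averaging over independent Haar-random unitaries $U^{(1)}_{A_1^n}$, $U^{(2)}_{A_2^n}$. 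The first and third bounds are what guarantee, via Uhlmann, that a \emph{single} isometric encoder $F_{M\bar{M}G_A\to A^n J^n}$ exists; the second and fourth then yield (again via Uhlmann) the unassisted decoder $D^*_{B^n\to M}$ and the assisted decoder $D_{B^n G_B\to\bar{M}}$, respectively. Only after all four hold simultaneously does one set $R_e=\tfrac{1}{2}[H(A_2)_\omega+H(A_2|B)_\omega]$ to collapse the last two into $Q+Q'<\tfrac{1}{2}I(A_2;B)_\omega$. Without this intermediate $R_e$ and the four-fold decoupling, you have no mechanism to show that the same random encoding works in both scenarios; that is precisely the obstacle you flagged, and the resolution is not a ``nested codebook'' but the simultaneous-decoupling argument.
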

The proof of Theorem~\ref{theo:qEA} is given in Appendix~\ref{app:qEA}.

\section{Summary and Discussion}
\label{sec:summ}
We summarize our results and compare  the techniques in our work and in previous works.
We consider communication %
over a quantum channel $\channel_{A\to B}$, where Alice and Bob are provided with \emph{unreliable} entanglement resources. %
Suppose that  Alice wishes to send two messages, at rates $R$ and $R'$. She encodes both messages using her share of the entanglement resources, as she does not know whether Bob will have access to the entangled resources.
Bob has two decoding procedures. If the entanglement assistance has failed to reach Bob's location, he performs a decoding operation to recover the first message alone. Hence, the communication system operates on a rate $R$.
Whereas if Bob has entanglement assistance, he decodes both messages, hence the overall transmission rate is $R+R'$.
In other words, $R$ is a \emph{guaranteed rate}, and $R'$ is the \emph{excess rate} of information that entanglement assistance provides.
The communication setting is illustrated in Figure~\ref{fig:EAsiCode}, in which the resource uncertainty is represented by the unknown position of  a switch.

We define the capacity region as the set of all rate pairs $(R,R')$ that can be achieved %
with asymptotically vanishing decoding errors.
The characterization of the corner  points $(R,0)$ and $(0,R')$ already follows from previous results in the literature (see Subsection~\ref{subsec:Previous}).
However, our
interest goes beyond those cases, 
as we focus on the tradeoff, and not the extreme points. That is,
we  characterize the entire capacity region.

In the transmission of quantum information, Alice %
chooses a product state $\theta_M\otimes \xi_{\bar{M}}$ over Hilbert spaces of dimension 
$|\Hset_M|=2^{nQ}$ and $|\Hset_{\bar{M}}|=2^{n(Q+Q')}$. %
Alice encodes the input state by applying the encoding channel 
$\Fset_{G_A M \bar{M} \to A^n}$ to $M$, $\bar{M}$, and to her share of the entangled state $\Psi_{G_{A},G_{B}}$,  and transmits the system $A^n$ over $n$ channel uses of 
$\channel_{A\to B}$. Bob receives the channel output systems $B^n$.
 If the entanglement assistance is present, then he %
applies the decoding channel
 $\Dset%
$ to the joint output $B^n G_B$ in order to recover $\xi_{\bar{M}}$. 
Otherwise, if entanglement assistance is absent, then he performs $\Dset^*$ on $B^n$ in order to recover $\theta_M$.

 We have established a regularized characterization for the classical and quantum capacity regions. %
The communication design makes a compromise. 
We have seen that the unassisted rate is high when the channel input is in a pure state.
Such an assignment achieves $(R,R')=(\chi(\channel),0)$, where $\chi(\channel)$ is the Holevo information of the channel. %
On the other hand, high excess rates are achieved when the encoder preserves the entanglement between the input system and the ancilla (see (\ref{eq:calRClea})). %
Such an encoding operation achieves the rate pair $(R,R')=(0,C_{\text{EA}}(\channel))$, where $C_{\text{EA}}(\channel)$ is the entanglement-assisted capacity. %
Time division between these coding strategies achieves the rate pairs $(R,R')=((1-\lambda)\chi(\channel),\lambda C_{\text{EA}}(\channel))$, for $0\leq\lambda\leq 1$.

In the simple scenario of classical communication over a noiseless qubit channel, we have shown that the optimal strategy is to perform time division between super-dense coding and unassisted transmission over the  $0$-$1$ basis (see Figure~\ref{fig:DenseUncertain}). 
For the noiseless qubit channel,
 the classical capacity region  with unreliable entanglement assistance is thus $\mathcal{C}_{\text{EA}*}(\channel)=$   $\bigcup_{0\leq\lambda\leq 1}\{(R,R')\,:\; R\leq 1-\lambda \,,\; R'\leq 2\lambda \}$.
 Nevertheless, we established that time division is not optimal in general, even for a simple noisy quantum channel.
Specifically,  time division is strictly suboptimal for the depolarizing channel (see Figure~\ref{fig:clEAur}).

Thereby, more advanced coding techniques are necessary to obtain the full capacity region.
Indeed,
our characterization in Theorem~\ref{theo:ClEA} and the coding method in the achievability proof are much more sophisticated than time division.
The analysis is based on a novel method that is inspired by the classical network technique of superposition coding (SPC) \cite{CoverThomas:06b}.
Surprisingly, this network technique yields an advantage even for a simple point-to-point quantum channel. This advantage is obtained by exploiting  quantum superposition. 
That is, we combine superposition coding with superposition states.

Next, we discuss capacity computation, %the cloud SPC  interpretation of our results, 
the side-information interpretation, and the consequences on the quantum broadcast channel with one-sided entanglement assistance.

\subsection{Computing Channel Capacities}
For communication system design nowadays, %
it is crucial to evaluate the current performance and how close it is to the optimum %
\cite{AroaraSinghRandahawa:19p,CostelloForney:07c}. %
Classical commercial systems today already employ sophisticated error correction codes with near-Shannon limit performance \cite{NisiotiThomos:20a,RichardsonKudekar:18p}.
At the time of writing, a realization of a full-scale quantum communication system that approaches the Shannon-theoretic limits does not exist, and we can only hope that future systems of quantum communication will reach this level of maturity.
Given a specific quantum channel, e.g. an optical fiber channel with specific parameters, a practitioner is usually interested in computing the channel capacity as a number. %
	For such practical purposes, %
 a regularized characterization as in Theorems \ref{theo:CclNoEA}-\ref{theo:CNoSI}, \ref{theo:ClEA}, and \ref{theo:qEA}  is not necessarily a problem (see a further explanation in Remark 7 by the authors \cite{PeregDeppeBoche:21p}). 
Yet, in Shannon theory, it is generally considered desirable to establish a single-letter computable capacity formula \cite{Korner:87b,Pereg:21p}. Beyond computability, the  disadvantage of a regularized multi-letter formula of the form 
$%
\lim_{n\rightarrow\infty}\frac{1}{n}\mathsf{F}(\channel^{\otimes n}) %
$, %
 is that such characterization is not unique (see \cite[Section 13.1.3]{Wilde:17b}). 

Under practical  encoding constraints \cite{Pereg:21p},
regularized capacity results yield computable formulas. %
Encoding constraints   are particularly relevant
when the transmitter has access to a cluster of multiple small or moderate-size quantum computers without interaction between them, and also in nearest-neighbor qubit architectures \cite{StassiCirioNori:20p,LMRDFLWM:17p}. %
Consider classical communication without assistance, as in Theorem \ref{theo:CclNoEA}, and
suppose that the encoder's quantum systems $A^n$ are partitioned into sub-blocks of a small size $b$, such that the input state has the form
$%
\rho_{A^n}=  \rho_{A_1^b} \otimes \rho_{A_{b+1}^{2b}} \otimes \cdots  \otimes \rho_{A_{n-b+1}^{n}} %
$. %
As recently observed  \cite{Pereg:21p}, the
capacity of a quantum channel $\channel_{A\to B}$ under an encoding constraint $b>0$, is given by 
\begin{align}
C(\channel,b)=\frac{1}{b}\chi(\channel^{\otimes b}) \,.
\label{eq:Cb}
\end{align}
This %
formula  is computable, since $b>0$ is assumed to be a small constant.
This trivial observation and its consequences can be extended to other models as well.

Another shortcoming of our results is that we do not have a bound on the dimension of the ancillas  $A_1$ and $A_2$.
One could always compute an achievable region by simply choosing the dimensions of $A_1$ and $A_2$. However, the optimal rates cannot be computed with absolute precision in general.
A similar difficulty  appears in other quantum models such as  broadcast communication \cite[Section VIII]{DupuisHaydenLi:10p}, %
the wiretap channel  \cite[Remark 5]{QiSharmaWilde:18p}, squashed entanglement \cite[Section 1]{LiWinter:14p}, and state-dependent channels \cite{Dupuis:09c} %
\cite[Section V]{PeregDeppeBoche:21p}.

%\subsection{Superposition-Coding Interpretation}
%Our model has a deep relation to the quantum broadcast channel. Before we give the precise formulation, %
%we point out a heuristic connection.
%The characterization of the classical capacity region %
%in Theorem~\ref{theo:ClEA} clearly  resembles the classical superposition-coding region  of the broadcast channel without assistance  \cite{YardHaydenDevetak:11p} (see Theorem 2 therein).
%The difference is that here the encoding involves quantum operations.
%Nevertheless, we can portray a similar metaphorical image:
%Let %
 %$m$ be an index over $2^{nR}$ clouds. Recall that the region formula in (\ref{eq:calRClea}) involves an ancillary state $\varphi_{A_0 A_1}$ and a collection of encoding mappings $\{ \Fset^{(x)}_{A_0\to A} \}_{x\in\Xset}$.
%Each cloud center is associated with a classical codeword $x^n(m)$, and
%at the center of each cloud there is the state $\otimes_{i=1}^n \Fset^{x_i(m)}(\varphi)$.  Applying random Heisenberg-Weyl operators that 
%encode the message $m'$  takes us from the cloud center to a satellite on the cloud that depends on both $m$ and $m'$. The channel input is the satellite state. Bob decodes in two steps. First, Bob recovers the cloud, \ie he estimates $m$.
%If the entanglement assistance is absent, then Bob quits after the first step.
%Otherwise, if Bob has entanglement assistance, then he continues to decode the satellite $m'$.

\subsection{Side Information Interpretation}
We mentioned that %the previous section that 
our coding approach can be interpreted as a quantum version of SPC (see Remark~\ref{remark:Broadcast_Relation}).
Consider the second decoding step for the message $m'$.
As the message $m$ has already been estimated, we can think of $x^n(m)$ as side information for this decoding operation.
Thus, it is not surprising that the bound on the excess rate $R'$ in (\ref{eq:calRClea}) has a similar form as in the capacity formula for a quantum channel with classical side information at the encoder and the decoder (see \cite[Corollary 12]{Pereg:19a}).

For the quantum capacity region, we point out a connection to quantum side information. %
A quantum state-dependent channel $(\Pset_{S A\rightarrow B} ,|\theta_{S S_0}\rangle)$ is defined by a   linear, completely positive, trace preserving map 
$%
\Pset_{S A\rightarrow B}  %
$ %
and a fixed quantum state $|\theta_{S S_0}\rangle$ \cite{Dupuis:09c}. We refer to the system $S$ as a quantum channel state.
Given quantum side information, the encoder has access to the system $S_0$, which is entangled with the channel state system $S$.
This model can be interpreted as if the channel is entangled with the systems $S$ and $S_0$.
The quantum capacity with quantum side information and no assistance is given by the regularization of the following formula (see \cite[Theorem 11]{PeregDeppeBoche:21p}),
\begin{align}
&L(\Pset)=
\sup_{ \varphi_{ A_1 S A  } \,:\: \varphi_{S}=\theta_{S}}
	 \min\{ I(A_1\rangle B)_\omega \,,\; H(A_1|S )_\varphi \}  \,,
\end{align} 
with $\omega_{A_1  B }=\Pset_{S A\rightarrow B}(\varphi_{ A_1 S A})$. 
Thus, we interpret the guaranteed rate $Q$ in (\ref{eq:calRQea}) as the quantum coding rate, given access to %
a channel state system $A_2$.

\subsection{The Broadcast Channel with One-Sided Assistance}
\label{subsection:One_Side_Entnaglement}
Beyond the heuristic connection, %above,
the mathematical formulation of our problem is close to that of a broadcast channel with one-sided assistance.
Let $\channel^{\text{broadcast}}_{A\to B_1 B_2}$ be a quantum broadcast channel with two receivers, Bob 1 and Bob 2.
Suppose that Alice wishes to send a common message $m_0\in [1:2^{nR_0}]$ to both users and a dedicated message $m_1\in [1:2^{nR_1}]$ to the first user alone.
That is, Bob 1 decodes both $m_0$ and $m_1$, while Bob 2 is only required to decode $m_0$.  This model is referred to as the broadcast channel with degraded message sets \cite{YardHaydenDevetak:11p}.
Now, assume that Alice and Bob 1 share reliable entanglement resources $\Psi_{G_A,G_{B_1}}$, while Bob 2 has no resources at all.

 The error criterion is  the probability that at least one of the receivers decodes erroneously. However, it is sufficient to consider each receiver separately, since the coding performance %
depends on the broadcast channel $\channel^{\text{broadcast}}_{A\to B_1 B_2}$ only through the marginals $\channel^{(1)}_{A\to B_1}$ and $\channel^{(2)}_{A\to B_2}$ \cite{PeregDeppeBoche:21p2}, %
\begin{align}
\channel^{(1)}(\rho_A)&\equiv \trace_{B_2}\left(\channel^{\text{broadcast}}(\rho_A) \right) \,, \\
\channel^{(2)}(\rho_A)&\equiv \trace_{B_1}\left(\channel^{\text{broadcast}}(\rho_A) \right) \,.
\end{align}
Hence, achievable rate pairs $(R_0,R_1)$ can be defined in terms of the following error probabilities,
\begin{align}
&P_{e1|  m_0,m_1}^{(n)}(\Fset,\Psi,\Dset^{(1)})= 1- %
\trace\big[ D^{(1)}_{m_0,m_1}
(\channel^{(1)\otimes n}_{A\rightarrow B_1}\otimes\text{id})
(\Fset^{m_0,m_1}\otimes\text{id}) (\Psi_{G_A,G_{B_1}})
 \big] %
\intertext{for Bob 1, and}
&P_{e2|  m_0,m_1}^{(n)}(\Fset,\Psi,\Dset^{(2)})= %
1-
\trace\Big[ D_{m_0}^{(2)}\,
\channel^{(2)\otimes n}_{A\rightarrow B_2}\,
\Fset^{m_0,m_1}(\Psi_{G_A})  \Big] 
\end{align}
for Bob 2, where $\Fset_{G_A\to A^n}^{m_0,m_1}$ is the encoding map, while $\Dset^{(1)}_{G_{B_1}B_1^n}=\{ D^{(1)}_{m_0,m_1} \}$ and $\Dset_{B_2^n}^{(2)}=\{ D^{(2)}_{m_0} \}$ are the decoding maps of Bob 1 and Bob 2, respectively. 

Observe that the error definitions above are analogous to those of the classical capacity region with unreliable entanglement assistance in Definition~\ref{def:EAcapacity}, where $m$ and $m'$ are replaced by $m_0$ and $m_1$, respectively. Although,  the error probabilities for the broadcast channel depend on two different channels, $\channel^{(1)}$ and $\channel^{(2)}$. 
The same methods as we used in this paper show that the classical capacity region of the quantum broadcast channel with one-sided entanglement assistance is given by the regularization of the following formula,
\begin{align}
\mathcal{R}_2(\channel^{\text{broadcast}})
&=
\bigcup_{ p_X \,,\; \varphi_{A_0 A_1} \,,\; \Fset^{(x)} 
}
\left\{ \begin{array}{rl}
  (R_0,R_1) \,:\;
	R_0 \leq&  I(X;B_2)_\omega  \\
  R_1 \leq&  I(A_1;B_1|  X)_\omega\\
	R_0+R_1 \leq&  I(XA_1;B_1)_\omega
	\end{array}
\right\}
\label{eq:calRCleaB}
\intertext{with}
\omega_{XA_1  A}
&=
\sum_{x\in\Xset} p_X(x) \kb{x}\otimes ( \text{id}\otimes \Fset^{(x)}_{A_0\to A})
(\varphi_{ A_1 A_0}) \,,  \label{eq:calRClea1B}\\
\omega_{XA_1  B_1 B_2}
&=
(\text{id}\otimes\channel^{\text{broadcast}}_{A\rightarrow B_1 B_2})(\rho_{XA_1  A})
\,.
\label{eq:calRClea2B}
\end{align}

It is now natural to wonder whether this similarity extends to the quantum capacity.
However, in the transmission of quantum information, Bob 1 and Bob 2 cannot recover a common state due to the no-cloning theorem. That is, quantum communication with  degraded message sets is not well defined (see \cite[Section III.C]{PeregDeppeBoche:21p2}).
Yet, the techniques of %
Dupuis \etal \cite{Dupuis:10z,DupuisHaydenLi:10p} for the quantum broadcast channel with dedicated messages were useful in our proof  %
in Appendix~\ref{app:qEA}, for the quantum capacity theorem with unreliable entanglement assistance.

\section*{Acknowledgments}
The authors wish to thank Christoph Becher (Universit\"at des Saarlandes) for useful discussions,
 Elizabeth S\"oder for the English review, and Mohammad J. Salariseddigh for his technical support in preparing the figures.

U. Pereg was supported by the Israel VATAT Junior Faculty Program for Quantum Science and Technology through Grant 86636903, and the Chaya Career Advancement Chair, Grant 8776026.
U. Pereg and H. Boche were also supported 
by the Deutsche Forschungsgemeinschaft (DFG, German Research Foundation) under Germany's Excellence Strategy – EXC-2111 – 390814868. 
In addition,
U. Pereg, C. Deppe, and H. Boche were supported by
the German Federal
Ministry of Education and Research (BMBF) through Grants
16KISQ028 (Pereg, Deppe) and 16KISQ020 %
(Boche). 
This work of H. Boche was supported in part by the BMBF within the national initiative for
``Post Shannon Communication (NewCom)" under Grant 16KIS1003K, and in
part by the DFG within the Gottfried Wilhelm
Leibniz Prize under Grant BO 1734/20-1 and within Germany's Excellence
Strategy EXC-2092 – 390781972.

\begin{appendices} %
{
\section{Information-Theoretic Tools}
In this section, we give the basic information-theoretic tools that will be used in the achievability proofs later on.

\subsection{Quantum Packing Lemma}
\label{app:packing}
To prove achievability for the classical capacity theorem, we will use the quantum packing lemma. Standard method-of-types concepts are defined as usual \cite{Wilde:17b} \cite{Pereg:21p}.  
We briefly introduce the notation and basic properties while the detailed definitions can be found in the references \cite%
{Pereg:21p}.
In particular, given a density operator $\rho=\sum_x p_X(x)\kb{x}$ on the Hilbert space $\Hset_A$, we let
$\tset(p_X)$ denote the $\delta$-typical set that is associated with $p_X$, and
 $\Pi_{A^n}^{\delta}(\rho)$ the projector onto the corresponding subspace.  
The following inequalities follow from well-known properties of $\delta$-typical sets \cite{NielsenChuang:02b}, %
\begin{align}
\trace( \Pi^\delta(\rho) \rho^{\otimes n} )&\geq 1-\eps  \label{eq:UnitT} \\
 2^{-n(H(\rho)+c\delta)} \Pi^\delta(\rho) &\preceq \,\Pi^\delta(\rho) \,\rho^{\otimes n}\, \Pi^\delta(\rho) \,
\preceq 2^{-n(H(\rho)-c\delta)}
\label{eq:rhonProjIneq}
\\
\trace( \Pi^\delta(\rho))&\leq 2^{n(H(\rho)+c\delta)} \label{eq:Pidim}
\end{align}
 where $c>0$ is a constant, and $\eps>0$ tends to zero as $\delta\to 0$.
Furthermore, for $\sigma_B=\sum_x p_X(x)\rho_B^x$, %
let $\Pi_{B^n}^{\delta}(\sigma_B|  x^n)$ denote the projector corresponding to the conditional $\delta$-typical set %
given the sequence $x^n$.
Similarly \cite{Wilde:17b}, %
\begin{align}
\trace( \Pi^\delta(\sigma_B|  x^n) \rho_{B^n}^{x^ n} )&\geq 1-\eps'  \label{eq:UnitTCond} \\
 2^{-n(H(B|  X')_\sigma+c'\delta)} \Pi^\delta(\sigma_B|  x^n) &\preceq \,\Pi^\delta(\sigma_B|  x^n) \,\rho_{B^n}^{x^ n}\, \Pi^\delta(\sigma_B|  x^n) \,
 %\nonumber\\&
\preceq 2^{-n(H(B|  X')_{\sigma}-c'\delta)}
\label{eq:rhonProjIneqCond}
\\
\trace( \Pi^\delta(\sigma_B|  x^n))&\leq 2^{n(H(B|  X')_\sigma+c'\delta)} \label{eq:PidimCond}
\end{align}
where $c'>0$ is a constant,  $\eps'>0$ tends to zero as $\delta\to 0$, $\rho_{B^n}^{x^n}=\bigotimes_{i=1}^n \rho_{B_i}^{x_i}$, and the classical random variable $X'$ is distributed according to the type of $x^n$.
If $x^n\in\tset(p_X)$, then %
\begin{align}
\trace( \Pi^\delta(\sigma_B) \rho_{B^n}^{x^n} )\geq& 1-\eps' 
\label{eq:UnitTCondB}
\end{align}
 as well (see \cite[Property 15.2.7]{Wilde:17b}).
The lemma below is a  simplified version of the quantum packing lemma in %
\cite{HsiehDevetakWinter:08p}.
\begin{lemma}[Quantum Packing Lemma {\cite%
{HsiehDevetakWinter:08p}}]
\label{lemm:Qpacking}
Let %
\begin{align}
\rho=\sum_{x\in\Xset} p_X(x) \rho_x \,,
\end{align}
where $\{ p_X(x), \rho_x \}_{x\in\Xset}$ is a given ensemble. %
Furthermore, suppose that there is  a code projector $\Pi$ and codeword projectors $\Pi_{x^n}$, $x^n\in\tset(p_X)$, that satisfy for every 
$\alpha>0$ and sufficiently large $n$,
\begin{align}
\trace(\Pi\rho_{x^n})\geq&\, 1-\alpha \\
\trace(\Pi_{x^n}\rho_{x^n})\geq&\, 1-\alpha \\
\trace(\Pi_{x^n})\leq&\, 2^{n d}\\
\Pi \rho^{\otimes n} \Pi \preceq&\, 2^{-n(D-\alpha)} \Pi 
\end{align}
for some $0<d<D$ with $\rho_{x^n}\equiv \bigotimes_{i=1}^n \rho_{x_i}$.
Then, there exist codewords $x^n(m)$, $m\in [1:2^{nR}]$, and  a POVM $\{ \Lambda_m \}_{m\in [1:2^{nR}]}$, such that 
\begin{align}
\label{eq:QpackB}
  \trace\left( \Lambda_m \rho_{x^n(m)} \right)  \geq 1-2^{-n[ D-d-R-\eps_n(\alpha)]}
\end{align}
for all %
$m\in [1:2^{nR}]$, where $\eps_n(\alpha)$ tends to zero as $n\rightarrow\infty$ and $\alpha\rightarrow 0$. 
\end{lemma}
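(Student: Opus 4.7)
The plan is to combine a random coding argument with the Hayashi--Nagaoka operator inequality and a final expurgation step, which is the standard template for all quantum packing results.

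First, I would draw a random codebook: for $m \in [1:2^{nR}]$, let the codewords $X^n(m)$ be independent and distributed uniformly on the $\delta$-typical set $\tset(p_X)$ (up to an approximation that is absorbed in $\eps_n(\alpha)$, since the codewords are essentially i.i.d.\ $\sim p_X^{\otimes n}$). Define the square-root decoding POVM
\begin{align}
\Lambda_m = S^{-1/2}\,\Pi\,\Pi_{X^n(m)}\,\Pi\,S^{-1/2}, \qquad S=\sum_{m'=1}^{2^{nR}}\Pi\,\Pi_{X^n(m')}\,\Pi .
\end{align}
The Hayashi--Nagaoka inequality then yields, for each $m$,
\begin{align}
I-\Lambda_m \preceq 2\bigl(I-\Pi\,\Pi_{X^n(m)}\,\Pi\bigr) \;+\; 4\sum_{m'\neq m}\Pi\,\Pi_{X^n(m')}\,\Pi .
\end{align}

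Next, I would take the expectation of $\trace\bigl[(I-\Lambda_m)\rho_{X^n(m)}\bigr]$ over the codebook. For the first term, the Gentle Operator Lemma applied to the two projector conditions $\trace(\Pi\rho_{x^n})\geq 1-\alpha$ and $\trace(\Pi_{x^n}\rho_{x^n})\geq 1-\alpha$ gives $\trace\bigl[\Pi\,\Pi_{x^n}\,\Pi\,\rho_{x^n}\bigr] \geq 1-2\sqrt{\alpha}-\alpha$, so the contribution is at most a vanishing function of $\alpha$. For the cross term, independence of $X^n(m')$ from $X^n(m)$ (for $m'\neq m$) together with $\E[\rho_{X^n(m)}]\approx \rho^{\otimes n}$ (restricted to typicality) gives
\begin{align}
\E\!\left[\trace\bigl(\Pi\,\Pi_{X^n(m')}\,\Pi\,\rho_{X^n(m)}\bigr)\right]
\leq \trace\bigl(\Pi_{X^n(m')}\,\Pi\,\rho^{\otimes n}\Pi\bigr)
\leq 2^{-n(D-\alpha)}\,\trace(\Pi_{X^n(m')})
\leq 2^{-n(D-d-\alpha)},
\end{align}
where I used cyclicity of the trace, the hypothesis $\Pi\rho^{\otimes n}\Pi \preceq 2^{-n(D-\alpha)}\Pi$, and the dimension bound $\trace(\Pi_{x^n})\leq 2^{nd}$. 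Summing over the $2^{nR}-1$ wrong messages yields the overall bound
\begin{align}
\E\!\left[\trace\bigl((I-\Lambda_m)\rho_{X^n(m)}\bigr)\right] \leq 4\cdot 2^{-n(D-d-R-\alpha)} + o(1)_{\alpha\to 0}.
\end{align}

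Finally, I would derandomize and expurgate. Averaging over $m$ and the codebook, the bound above shows existence of a deterministic codebook whose average error probability (over uniform $m$) meets the claimed exponential rate. Discarding the worst half of the messages costs only one bit of rate (negligible after normalization by $n$) and produces a codebook where (\ref{eq:QpackB}) holds for every $m$ with the error term absorbed into $\eps_n(\alpha)$. The main obstacle is the cross-term estimate: one must be slightly careful that the random codewords are confined to the typical support so that $\E[\rho_{X^n(m)}]$ is dominated by $\rho^{\otimes n}$ with only subexponential overhead, and that the Gentle Operator step remains uniform in $x^n$, since the constant in front of $\sqrt{\alpha}$ must be independent of $n$ to guarantee that $\eps_n(\alpha)\to 0$ in the stated order of limits.
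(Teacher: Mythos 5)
This lemma is not proved in the paper at all---it is stated as a simplified form of the packing lemma of Hsieh, Devetak and Winter \cite{HsiehDevetakWinter:08p}---and your argument (random codebook, square-root measurement, the Hayashi--Nagaoka operator inequality, the gentle operator lemma for the correct term, the hypotheses $\Pi\rho^{\otimes n}\Pi\preceq 2^{-n(D-\alpha)}\Pi$ and $\trace(\Pi_{x^n})\leq 2^{nd}$ for the cross term, then derandomization and expurgation) is exactly the standard proof found in that cited reference. It is correct, with the non-exponential gentle-measurement contribution and the domination of $\E[\rho_{X^n}]$ by $\rho^{\otimes n}$ (best handled by drawing codewords from the pruned distribution rather than uniformly on the typical set) absorbed into $\eps_n(\alpha)$ in the same loose bookkeeping that the statement itself employs.
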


\subsection{The Decoupling Theorem}
\label{app:decoupling}
To prove achievability for the quantum capacity theorem, we will use the decoupling theorem \cite{DupuisBertaWullschlegerRenner:14p}.
%\cite{Dupuis:10z}.
Before we state the theorem, we give an intuitive explanation in the spirit of \cite[Section 24.10]{Wilde:17b}. %
Consider a quantum channel $\channel_{A\rightarrow B}$ without entanglement assistance.
Let $\ket{\theta_{MK}}$ be a purification of the quantum message state $\theta_M$, where $K$ is Alice's reference system.
Suppose that $\ket{\psi_{K B^n E^n J_1}}$ is a purification of the joint state of Alice's reference system $K$, the channel output  $B^n$, and Bob's environment $E^n$, with a purifying system $J_1$. Observe that if the reduced state $\psi_{K E^n J_1}$ %
is  a product state, i.e. $\psi_{K E^n J_1}=\theta_K \otimes \omega_{E^n J_1}$, then it has a purification of the form
$\ket{\theta_{MK}}\otimes \ket{\omega_{E^n J_1 J_2}}$. Since all purifications are related by isometries \cite[Theorem 5.1.1]{Wilde:17b}, there exists an isometry $D_{B^n\rightarrow M  J_2}$ such that  $\ket{\theta_{MK}}\otimes \ket{\omega_{E^n J_1 J_2}}=   D_{B^n\rightarrow M  J_2} \ket{\psi_{RB^n E^n J_1}}$.
Tracing out $K$, $E^n$, $J_1$, and $J_2$, it follows that there exists a decoding map $\Dset_{B^n\rightarrow M}$ that recovers the message state, i.e.
$\theta_M=\Dset_{B^n\rightarrow M}(\psi_{B^n})$. Therefore, in order to show that there exists a reliable coding scheme, it is sufficient to encode in such a manner that approximately decouples between Alice's reference system and Bob's environment, i.e., such that
$\psi_{KE^n J_1}\approx\theta_K \otimes \omega_{E^n J_1}$.

We will make use of the following definitions from \cite{Renner:08a}. %
 Define the conditional min-entropy by 
\begin{align}
H_{\min}(\rho_{AB}|  \sigma_B)
&=
-\log \inf \left\{ \lambda\in\mathbb{R} \,:\;  \rho_{AB} \preceq \lambda \cdot (\identity_A\otimes \sigma_B)  \right\}
\nonumber\\
H_{\min}(A|  B)_\rho
&=
 \sup_{\sigma_B} H_{\min}(\rho_{AB}|  \sigma_B)
 \,,
\end{align}
where the supremum is over quantum states of the system $B$.
In general, the conditional min-entropy is bounded by
\begin{align}
-\log \abs{\Hset_B} \leq H_{\min}(A|  B)_\rho \leq \log \abs{\Hset_A} \,.
\label{eq:H2bound}
\end{align}
To see this, observe that if we choose $\sigma_B=\frac{\identity_B}{\abs{\Hset_B}}$, then the matrix inequality
$\rho_{AB} \preceq \lambda (\identity_A\otimes \sigma_B)$ holds for $\lambda=\abs{\Hset_B}$,
hence $H_{\min}(\rho_{AB}|  \sigma_B)\geq -\log\abs{\Hset_B}$.
As for the upper bound, %
the matrix inequality implies that
$1=\trace(\rho_{AB})\leq \lambda \abs{\Hset_A}\trace(\sigma_B)=\lambda \abs{\Hset_A}$, hence $H_{\min}(\rho_{AB}|  \sigma_B)\leq 
\log\abs{\Hset_A}$.
Furthermore, %
the lower  bound is saturated when the joint state of $A$ and $B$ is 
$\ket{\Phi_{AB}}$,%
whereas the upper bound for a product state $\frac{\identity_A}{\abs{\Hset_A}}\otimes \rho_B$.

Then, define the smoothed min-entropy by 
\begin{align}
H_{\min}^{\eps}(A|  B)_\rho=\max_{ \sigma_{AB} \,:\; d_F(\rho_{AB},\sigma_{AB})\leq \eps } H_{\min}^{\eps}(A|  B)_\sigma 
\end{align}
for arbitrarily small $\eps>0$, where $d_F(\rho,\sigma)=\sqrt{ 1-\norm{\sqrt{\rho}\sqrt{\sigma}}_1^2 }$ is the fidelity distance between the states. As for the von Neumann entropy, conditioning cannot increase the smoothed min-entropy, \ie
$H_{\min}^{\eps}(A|  BC)_\rho\leq H_{\min}^{\eps}(A|  B)_\rho$ \cite[Lemma 3.1.7]{Renner:08a}.
The theorem below is due to
Dupuis et al.  %(see \cite[Lemma 2.3 and Theorem 3.8]{Dupuis:10z} and 
\cite{Dupuis:10z,DupuisBertaWullschlegerRenner:14p}).
\begin{theorem}[{Decoupling Theorem \cite{DupuisBertaWullschlegerRenner:14p}}]
\label{theo:decoup1}
Let $\theta_{A_1 K}$ be a quantum state, $\Tset_{A_1\to E}$ a quantum channel, and $\eps>0$  arbitrary. Define
\begin{align}
\omega_{AE}=\Tset_{A_1\to E}(\Phi_{A_1 A}) \,.
\end{align}
Then,
\begin{align}
\int_{\mathbb{U}_{A_1}} \big\lVert \Tset_{A_1\to E}(U_{A_1}\rho_{A_1 K})-\omega_E\otimes \theta_K
\big\rVert_1 \, d{U_{A_1}} \leq
2^{-\frac{1}{2}\left[ H_{\min}^{\eps}(A|  E)_\omega+H_{\min}^{\eps}(A_1|  K)_\theta \right]}+12\eps
\end{align}
where the integral is over the Haar measure on all unitaries $U_A$.
\end{theorem}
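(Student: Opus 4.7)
The plan is a two-stage reduction: first use $\eps$-smoothing to replace $\omega$ and $\theta$ by states that saturate their smoothed min-entropies, at an additive trace-norm cost of $8\eps$; then prove an unsmoothed decoupling bound by converting the $L^{1}$ objective into a weighted Hilbert--Schmidt quantity and carrying out an explicit Haar second-moment computation. The structure of the right-hand side, namely $2^{-\frac12(\cdot+\cdot)}+8\eps$, already telegraphs that these are the two independent ingredients.

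For the smoothing reduction I would pick $\tilde\omega_{AE}$ with $d_F(\omega,\tilde\omega)\leq\eps$ and $H_{\min}(A|E)_{\tilde\omega}=H_{\min}^{\eps}(A|E)_{\omega}$, and $\tilde\theta_{A_{1}K}$ analogously. Since $d_F(\rho,\sigma)\leq\eps$ implies $\norm{\rho-\sigma}_{1}\leq 2\eps$, and the trace norm is contractive under both the channel $\Tset_{A_{1}\to E}$ and partial tracing, four applications of the triangle inequality (two for the two occurrences of $\theta$ and two for the two occurrences of $\omega$, with the channel taking care of the $\theta\to\omega$ transitions) show that replacing $\omega,\theta$ by $\tilde\omega,\tilde\theta$ inside the integrand changes the integral by at most $8\eps$. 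It therefore suffices to establish the unsmoothed bound
\begin{align}
\int_{\mathbb{U}_{A_{1}}}\!\!\norm{\Tset_{A_{1}\to E}(U_{A_{1}}\tilde\theta_{A_{1}K}U_{A_{1}}^{\dagger})-\tilde\omega_{E}\otimes\tilde\theta_{K}}_{1}\,dU_{A_{1}} \leq 2^{-\tfrac12[H_{\min}(A|E)_{\tilde\omega}+H_{\min}(A_{1}|K)_{\tilde\theta}]}.
\end{align}

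For the unsmoothed bound, let $\sigma_{E}$ be an operator witnessing $H_{\min}(A|E)_{\tilde\omega}$ (so $\tilde\omega_{AE}\preceq 2^{-H_{\min}(A|E)_{\tilde\omega}}\,\identity_{A}\otimes\sigma_{E}$) and let $\tau_{K}$ be the analogous witness for $\tilde\theta$. I would apply the weighted Cauchy--Schwarz inequality $\norm{Y_{EK}}_{1}\leq\sqrt{\trace(\sigma_{E})\trace(\tau_{K})}\cdot\norm{(\sigma_{E}^{-1/4}\otimes\tau_{K}^{-1/4})\,Y\,(\sigma_{E}^{-1/4}\otimes\tau_{K}^{-1/4})}_{2}$ to the integrand $Y:=\Tset(U\tilde\theta U^{\dagger})-\tilde\omega_{E}\otimes\tilde\theta_{K}$, and then push the Haar integral inside the square root via Jensen's inequality. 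Expanding $\norm{\cdot}_{2}^{2}=\trace(Y^{\dagger}Y)$ produces a quantity linear in $U^{\otimes 2}\otimes U^{\dagger\otimes 2}$, whose Haar average is the standard linear combination of the identity and the swap operator on $A_{1}\otimes A_{1}'$. The identity-operator contribution is exactly neutralized by the cross-terms generated by the $-\tilde\omega_{E}\otimes\tilde\theta_{K}$ subtraction, so only the swap term survives.

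That swap term factors as $\trace[(\sigma_{E}^{-1/4}\tilde\omega_{AE}\sigma_{E}^{-1/4})^{2}]\cdot\trace[(\tau_{K}^{-1/4}\tilde\theta_{A_{1}K}\tau_{K}^{-1/4})^{2}]$ up to dimensional prefactors from the normalization of $\Phi_{A_{1}A}$ and from $\trace(\sigma_{E})\trace(\tau_{K})$. Each factor is bounded by $2^{-H_{\min}(A|E)_{\tilde\omega}}$ and $2^{-H_{\min}(A_{1}|K)_{\tilde\theta}}$ respectively, obtained by multiplying the defining operator inequality $\tilde\omega_{AE}\preceq 2^{-H_{\min}(A|E)_{\tilde\omega}}\identity\otimes\sigma_{E}$ on the left and right by $\sigma_{E}^{-1/4}$ and taking the Hilbert--Schmidt norm squared, and analogously for $\tilde\theta$. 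Taking the square root recovers the exponent $\tfrac12[H_{\min}(A|E)_{\tilde\omega}+H_{\min}(A_{1}|K)_{\tilde\theta}]$, and adding back the $8\eps$ yields the theorem. The main obstacle I anticipate is the bookkeeping in the Haar second-moment step: choosing the fractional powers of $\sigma_{E}$ and $\tau_{K}$ so that the $\trace(\sigma_{E})\trace(\tau_{K})$ factors from Cauchy--Schwarz and the $1/\dimension\Hset_{A_{1}}^{2}$ factors from the Haar moment cancel against the $1/\dimension\Hset_{A_{1}}$ factors from $\Phi_{A_{1}A}$, and verifying that the identity term of the Haar average exactly eliminates the subtracted product state --- this precise cancellation is what converts a generic $L^{2}$ bound into the sharp exponent $\tfrac12$ advertised in the statement.
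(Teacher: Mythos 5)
The paper does not actually prove Theorem~\ref{theo:decoup1}: it is imported wholesale from Dupuis' thesis (Lemma~2.3 and Theorem~3.8 there), so your proposal has to be measured against that reference. Your overall strategy --- an $\eps$-smoothing reduction, then a weighted Cauchy--Schwarz step converting $\lVert\cdot\rVert_1$ to a $\sigma^{-1/4}\otimes\tau^{-1/4}$-weighted $\lVert\cdot\rVert_2$, then the Haar second moment with the identity-plus-swap decomposition --- is indeed the strategy of that reference, and the $K$-side of your argument (smoothing $\theta\to\tilde\theta$ and bounding the swap factor by $2^{-H_{\min}(A_1|K)_{\tilde\theta}}$ from the defining operator inequality) is sound.

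The genuine gap is on the $E$-side of the smoothing. In your ``unsmoothed bound'' you keep the original channel $\Tset_{A_1\to E}$ and merely subtract the smoothed marginal $\tilde\omega_E$. Two things then go wrong. First, the exact cancellation you invoke fails: the identity part of the Haar average produces $\int \Tset(U\tilde\theta U^\dagger)\,dU=\Tset(\pi_{A_1})\otimes\tilde\theta_K=\omega_E\otimes\tilde\theta_K$, the \emph{true} Choi marginal, which does not cancel against the subtracted $\tilde\omega_E\otimes\tilde\theta_K$; a nonnegative leftover of the form $\lVert (\sigma^{-1/4}\otimes\tau^{-1/4})((\omega_E-\tilde\omega_E)\otimes\tilde\theta_K)(\sigma^{-1/4}\otimes\tau^{-1/4})\rVert_2^2$ survives and is not controlled by $\eps$ once the $\sigma^{-1/4}$ weighting is applied. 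Second, and more fundamentally, the swap term of the second moment of $\Tset(U\tilde\theta U^\dagger)$ necessarily involves the Choi state of $\Tset$ itself, so the $E$-side factor you obtain is $\trace[(\sigma^{-1/4}\omega_{AE}\sigma^{-1/4})^2]\leq 2^{-H_{\min}(A|E)_{\omega}}$ --- the \emph{unsmoothed} min-entropy, which is the weaker, wrong quantity, since smoothing can only increase $H_{\min}$. The correct argument replaces the channel itself: take $\widetilde{\Tset}$ to be the CP map whose Choi operator is $\tilde\omega_{AE}$, prove the exact bound for $\widetilde{\Tset}$ and $\tilde\theta$ (where the cancellation and the smoothed exponent both come out right), and then control the replacement error $\int\lVert(\Tset-\widetilde{\Tset})(U\tilde\theta U^\dagger)\rVert_1\,dU$. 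That integral is \emph{not} handled by monotonicity of the trace norm --- these are two different maps applied to the same input --- but by splitting the Choi-operator difference into positive and negative parts and using that the Haar average of the input is $\pi_{A_1}\otimes\tilde\theta_K$, exactly the input whose image under a CP map is the marginal of its Choi operator; this yields the bound $\lVert\omega-\tilde\omega\rVert_1\leq 2\eps$. That is the fourth contribution of $2\eps$ in the $8\eps$, and it is precisely the step your accounting (``the channel taking care of the $\theta\to\omega$ transitions'') elides.
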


The decoupling theorem shows that by choosing a unitary $U_{A}$ uniformly at random, we can approximately  decouple between  $E$ and $K$ provided that $H_{\min}^{\eps}(A|  E)_\omega>-H_{\min}^{\eps}(A_1|  K)_\rho$.
Uhlmann's theorem \cite{Uhlmann:76p} is often used along with the decoupling approach to establish the existence of proper encoding and decoding operations.
\begin{theorem}[Uhlmann's theorem {\cite{Uhlmann:76p}\cite[Corollary 3.2]{Dupuis:10z}}]
\label{theo:Uhlmann}
For every pair of pure states $\ket{\psi_{AB}}$ and $\ket{\theta_{AC}}$ that satisfy $\norm{\psi_A-\theta_A}_1 \leq \eps$, there exists an isometry $F_{B\rightarrow C}$ such that
$ \norm{ (\identity\otimes F_{B\rightarrow C})\psi_{AB}-\theta_{AC} }_1 \leq 2\sqrt{\eps} $.
\end{theorem}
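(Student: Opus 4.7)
The plan is to derive the stated trace-distance form of Uhlmann's theorem from two standard ingredients: the classical fidelity form of Uhlmann's theorem, which says the fidelity of two reduced states equals the maximal overlap of their purifications, together with the Fuchs--van de Graaf inequalities that interlace fidelity with trace distance. Throughout I write $F(\rho,\sigma) = \norm{\sqrt{\rho}\sqrt{\sigma}}_1$ for the (square-root) fidelity.

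First, I would convert the trace-distance hypothesis into a fidelity lower bound using the inequality $F(\rho,\sigma) \geq 1 - \tfrac{1}{2}\norm{\rho-\sigma}_1$. Applied to the reduced states this gives
\[
F(\psi_A,\theta_A) \;\geq\; 1 - \tfrac{\eps}{2}.
\]
Next, since $\ket{\psi_{AB}}$ and $\ket{\theta_{AC}}$ are purifications of $\psi_A$ and $\theta_A$ on the respective purifying systems $B$ and $C$, the standard Uhlmann theorem furnishes an isometry $F_{B\to C}$ (after, if necessary, enlarging $C$ so that $\abs{\Hset_B}\leq\abs{\Hset_C}$) which attains the fidelity through the overlap:
\[
\abs{\bra{\theta_{AC}}\,(\identity_A \otimes F_{B\to C})\,\ket{\psi_{AB}}} \;=\; F(\psi_A,\theta_A) \;\geq\; 1 - \tfrac{\eps}{2}.
\]

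Finally, I would translate this overlap lower bound back to a trace-distance upper bound via the sharp pure-state identity $\norm{\kb{\varphi}-\kb{\chi}}_1 = 2\sqrt{1-\abs{\braket{\varphi}{\chi}}^2}$. Applied to the two pure states $(\identity_A \otimes F_{B\to C})\ket{\psi_{AB}}$ and $\ket{\theta_{AC}}$, this yields
\[
\norm{(\identity_A \otimes F_{B\to C})\psi_{AB} - \theta_{AC}}_1 \;\leq\; 2\sqrt{1 - (1-\eps/2)^2} \;=\; 2\sqrt{\eps - \eps^2/4} \;\leq\; 2\sqrt{\eps},
\]
which is exactly the claim.

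The only subtle point, and the only place a real obstacle arises, is the dimension-matching step: Uhlmann's theorem in the form I invoke produces an isometry into the larger of the two purifying systems, so if $\abs{\Hset_B}>\abs{\Hset_C}$ one must either swap the roles of $\ket{\psi_{AB}}$ and $\ket{\theta_{AC}}$ (using the symmetry of trace distance and fidelity) or embed $C$ isometrically into a larger space. Once that bookkeeping is settled, the argument is a two-step sandwich of the Fuchs--van de Graaf bounds around a direct application of the fidelity form of Uhlmann's theorem; no further estimates are needed.
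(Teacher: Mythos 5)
Your proposal is correct and follows what is essentially the intended argument: the paper does not prove Theorem~\ref{theo:Uhlmann} but imports it from \cite[Corollary 3.2]{Dupuis:10z}, whose derivation is exactly your sandwich — Fuchs--van de Graaf to turn $\norm{\psi_A-\theta_A}_1\leq\eps$ into $F(\psi_A,\theta_A)\geq 1-\eps/2$, the fidelity form of Uhlmann's theorem to realize this overlap with an isometry $F_{B\to C}$, and the pure-state identity $\norm{\kb{\varphi}-\kb{\chi}}_1=2\sqrt{1-\abs{\braket{\varphi}{\chi}}^2}$ to conclude $2\sqrt{\eps-\eps^2/4}\leq 2\sqrt{\eps}$. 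The dimension bookkeeping you flag is implicit in the statement itself (an isometry $F_{B\to C}$ presupposes $\abs{\Hset_B}\leq\abs{\Hset_C}$, which holds in all of the paper's applications), so no further work is required.
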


\section{Proof of Lemma~\ref{lemm:pureCea}}
\label{app:pureCea}
Consider the region $\mathcal{R}_{\text{EA}*}(\channel)$ as defined in (\ref{eq:calRClea}).
Fix $\varphi_{A_0 A_1}$, $p_X(x)$, and  $\{ \Fset_{A_0\to A}^{(x)} \}$.  Let  
\begin{align}
R&= I(X;B)_\omega \,,
\label{eq:scR0p}
\\
R'&= I(A_1;B|  X)_\omega \,.
\label{eq:scD0p}
\end{align}
We prove the lemma using similar techniques as in \cite{Pereg:21p, Pereg:19c3}.
It is easy to see that pure states are sufficient, as
every quantum state $\varphi_{A_0 A_1}$ has a purification $\ket{\phi_{A_0 J_0 A_1}}$.
Since $A_0$ is arbitrary, we can extend it and obtain the same characterization when $A_0$ is replaced by $\bar{A}_0=(A_0,J_0)$. 

To bound the alphabet size of the random variable $X$, we use the Fenchel-Eggleston-Carath\'eodory lemma \cite{Eggleston:66p} and similar arguments as in previous works \cite{YardHaydenDevetak:08p,Pereg:21p}.
Having fixed $\varphi_{A_0 A_1}$ and  $\{ \Fset_{A_0\to A}^{(x)} \}$,
define 
\begin{align}
\omega_{A}^x\equiv \Fset^{(x)}_{A_0\to A}(\varphi_{A_0 }) \,.
\end{align}
Every quantum state $\rho_A$ has a unique and real parametric representation $u(\rho_A)$ of dimension $\abs{\Hset_A}^2-1$ (see \cite[Appendix B-B]{Pereg:21p}). 
Then, define a map $g:\Xset\rightarrow \mathbb{R}^{\abs{\Hset_A}^2+1}$ by
\begin{align}%
g(x)= \big(  u(\omega_A^x) \,,\; H(B|  X=x)_{\omega} \,,\; 
I(A_1;B|  X=x)_\omega  \big) \,.
\end{align}%
 
The map $g$ can be extended to probability distributions as follows:
\begin{align}%
G \,:\; p_X  \mapsto
\sum_{x\in\Xset} p_X(x) g(x)=
 \big(  u(\omega_A) \,,\; H(B|  X)_{\omega} \,,\; I(A_1;B|  X)_\omega    \big)  \,,
\end{align}%
as $\omega_A=\sum_x p_X(x) \omega_A^x$.
According to the Fenchel-Eggleston-Carath\'eodory lemma \cite{Eggleston:66p}, any point in the convex closure of a connected compact set within $\mathbb{R}^d$ belongs to the convex hull of $d$ points in the set. 
Since the map $G$ is linear, it maps  the set of distributions on $\Xset$ to a connected compact set in $\mathbb{R}^{\abs{\Hset_A}^2+1}$. Thus, for every  $p_X$, 
there exists a probability distribution $p_{\bar{X}}$ on a subset $\overline{\Xset}\subseteq \Xset$ of size $%
\abs{\Hset_A}^2+1$, such that 
$G(p_{\bar{X}})=G(p_{X}) $. 
We deduce that the cardinality of $X$ can be restricted to $\abs{\Xset}\leq \abs{\Hset_A}^2+1$, while preserving $\omega_A$,  and thus,
the output state $\omega_B\equiv \channel(\omega_A)$ as well, and the mutual informations $I(X;B)_\omega=H(B)_\rho-H(B|  X)_{\omega}$ and $I(A_1;B|  X)_\omega$.

This completes the proof of the lemma.
\qed

\section{Proof of Theorem~\ref{theo:ClEA}}
\label{app:ClEA}
Consider classical communication over a quantum channel $\channel_{A\to B}$ with unreliable entanglement assistance.

\subsection{Achievability Proof}
We show that for every $\eps_0,\delta_0>0$, there exists a $(2^{n(R-\delta_0)},2^{n(R'-\delta_0)},n,\eps_0)$ code with unreliable entanglement assistance, provided that $(R,R')\in\inR_{\text{EA*}}(\channel)$. 
To this end, we will use the quantum packing lemma \cite{HsiehDevetakWinter:08p}, as presented in Subsection~\ref{app:packing} of the previous appendix. 
Recall that by Lemma~\ref{lemm:pureCea}, it suffices to consider pure states.
Then, let $\ket{\phi_{G_1 G_2} }$ be a  pure entangled state on $\Hset_{A_0}\otimes \Hset_{A_0}$, and $\Fset^{(x)}$ be a quantum channel acting on $\mathscr{S}(\Hset_{A_0})$ (see (\ref{eq:calRClea})-(\ref{eq:calRClea2})).  Suppose that Alice and Bob share $\ket{\phi_{G_1 G_2} }^{\otimes n}$.

As we explain below, we can restrict ourselves to isometric encoding maps.
For a moment, let us denote the channel input by $S$, and consider the channel $\channel_{S\to B}$.
In the derivation below, we will use the encoding channel $\Fset^{(x)}_{A_0\to S}$.
Every quantum channel $\Fset^{(x)}_{A_0\to S}$ has an isometric extension $F^{(x)}_{A_0\to S \breve{S}}$. Since it is an encoding mapping, we may as well take $\breve{S}$ to be Alice's ancilla. Then, let $A\equiv S\breve{S}$ be the augmented channel input. We are effectively coding over the  channel $\tilde{\channel}_{A\to B}$, which is defined by
\begin{align}
\tilde{\channel}_{A\to B}(\rho_{S\breve{S}})\equiv \channel_{S\to B}(\trace_{\breve{S}}(\rho_{S\breve{S}})) \,,
\label{eq:tchannel}
\end{align}
using the isometric map $F^{(x)}_{A_0\to A}$.

From this point, we will focus on the quantum channel $\tilde{\channel}_{A\to B}$ and use the encoding isometry $F^{(x)}_{A_0\to A}$.
Define
\begin{align}
\ket{\psi^x_{A G_2 }}&= (F^{(x)}\otimes\identity) \ket{\phi_{G_1 G_2}} %
\,,\label{eq:ABphis} \\
\omega_{B G_2}^x&= ( \tilde{\channel}\otimes \text{id})(\psi^x_{A G_2}) \,.
\label{eq:oBpB}
\end{align}
We will often use the notation $\ket{\psi^{x^n}}\equiv \bigotimes_{i=1}^n \ket{\psi^{x_i}}$.
The code construction, encoding and decoding procedures are described below.

\vspace{0.1cm}
\subsubsection{Code Construction}
First, consider a classical codebook.
Let $\{x^n(m)\}_{m\in [1:2^{nR}]}$ be a set of $2^{nR}$ classical codewords  that %
will be chosen later, in Subsection~\ref{subsec:errNoEA} below.
We define the encoding operators in terms of this classical codebook.
Denote the Heisenberg-Weyl operators of dimension $D$ by $\{ \Sigma(a,b)= \Sigma_X^a\Sigma_Z^b \}$, where
$%
\Sigma_X= \sum_{j=0}^{D-1} \ketbra{ j\oplus 1}{ j } %
$ and $\Sigma_Z= \sum_{j=0}^{D-1} e^{2\pi ij/D} \kb{j}
$, %
for $a,b\in \{ 0,1,\ldots, D-1  \}$, with
$ j\oplus k= (j+k) \mod D \;$ and
$i=\sqrt{-1}$.

Consider a Schmidt decomposition of the pure state in (\ref{eq:ABphis}),
\begin{align}
\ket{\psi_{A G_2}^x}=\sum_{z\in\Zset} \sqrt{p_{Z|  X}(z|  x)} \ket{\xi_{z,x}} \otimes \ket{\xi_{z,x}'} \,,
\end{align}
where $p_{Z|  X}$ is a conditional probability distribution, while $\{\ket{\xi_{z,x}} \}$  and $\ket{\xi'_{z,x}}$ are orthonormal sets.
For every $x^n\in\Xset^n$ and every conditional type class $\Tset_n(t|  x^n)$ in $\Zset^n$, define the operators 
\begin{align}
&V_{t}(a_t,b_t,c_t)= (-1)^{c_t} \Sigma(a_t,b_t) \,,\; \nonumber\\&
 a_t,b_t\in \{ 0,1,\ldots, D_t-1  \} \,,\; c_t\in\{0,1\} \,,
\label{eq:Vt}
\end{align}
where $D_t=\abs{\Tset_n(t|  x^n)}$ is the size of type class associated with the conditional type $t$. Then, define the operator
\begin{align}
U(\gamma)=\bigoplus_t  \, V_{t}(a_t,b_t,c_t) 
\label{eq:Ugamma}
\end{align}
with $\gamma=\left( (a_t,b_t,c_t)_t \right)$. Let $\Gamma_{x^n}$ denote the set of all possible vectors $\gamma$. 

For every $m\in [1:2^{nR}]$, choose 
$2^{nR'}$ vectors $\gamma(m'|  x^n(m))$, $m'\in [1:2^{nR'}]$,  uniformly at random from $\Gamma_{x^n(m)}$.
The mappings are revealed to both Alice and Bob.

\vspace{0.1cm}
\subsubsection{Encoder}
To send the messages $(m,m')\in [1:2^{nR}]\times [1:2^{nR'}]$, apply the operators $F^{(x_i(m))}$ and   
$U(\gamma(m'|  x^n(m)))$. This yields the input state
\begin{align}
\ket{\chi_{A^n G_2^n}^{\gamma,x^n}}\equiv (  U(\gamma) F^{(x^n)} \otimes\identity) \ket{\phi_{G_1 G_2} }^{\otimes n} \,,
\end{align}
for $x^n=x^n(m)$ and $\gamma=\gamma(m'|  x^n)$,
where $F^{(x^n)}\equiv \bigotimes_{i=1}^n F^{(x_i)}$. %
Then, transmit $A^n$ through the channel.

\vspace{0.1cm}
\subsubsection{Decoder}
Bob receives the systems $B^n$ in a state 
$\rho^{\gamma,x^n}_{B^n G_2^n}%
$,  
and decodes as follows.
\begin{enumerate}[(i)]
\item
Measure $B^n$ using a POVM $\{ \Lambda_m \}_{m\in [1:2^{nR}]}$. Denote the measurement outcome by $\hm$.
\item
If there is no entanglement assistance, declare $\hm$ as the message estimate.
\item
If entanglement assistance is present, measure $B^n G_2^n$ jointly using a second POVM $\{ 
\Upsilon_{m'|  x^n(\hm)} \}_{m'\in [1:2^{nR'}]}$. Let $\hm'$ be the outcome of this measurement.
Then, declare  $(\hm,\hm')$ as the estimated message pair.
\end{enumerate}
The POVMs $\{ \Lambda_m \}$ and $\{\Upsilon_{m'|  x^n(\hm)}\} $  will be chosen later in Subsections \ref{subsec:errNoEA} and 
\ref{subsec:errEA}, respectively.

\vspace{0.1cm}
\subsubsection{Code Properties}
Before we go into the error analysis, we show that Alice's operations for encoding the second message $m'$ can be effectively reflected to Bob's side. %
To this end, we will apply %
the ``ricochet property" \cite[Eq. (17)]{HsiehDevetakWinter:08p},
\begin{align}
(U\otimes\identity)\ket{\Phi_{AB}}=(\identity\otimes U^T)\ket{\Phi_{AB}} \,.
\label{eq:ricochet}
\end{align}

Now, for every $x^n\in\Xset^n$, 
\begin{align}
\ket{\psi^{x^n}_{A^n G_2^n} }
&= 
\sum_{z^n\in\Zset^n} \sqrt{ p_{Z^n|  X^n}(z^n|  x^n) }  \ket{\xi_{x^n,z^n}} \otimes \ket{\xi'_{x^n,z^n} } \,,
\end{align}
where $p_{Z^n|  X^n}(z^n|  x^n)=\prod_{i=1}^n p_{Z|  X}(z_i|  x_i)$. As the space $\Zset^n$ can be partitioned into conditional type classes given $x^n$, we may write 
\begin{align}
\ket{\psi^{x^n}_{A^n G_2^n} }
&= \sum_{t\in\Pset_n(\Zset)} \sum_{z^n\in \Tset_n(t|  x^n)}
 \sqrt{ p_{Z^n|  X^n}(z^n|  x^n) } 
\ket{\xi_{x^n,z^n} } \otimes \ket{\xi'_{x^n,z^n} }
\nonumber\\
&=  \sum_{t\in\Pset_n(\Zset)} \sqrt{ p_{Z^n|  X^n}(z_t^n|  x^n) } \sum_{z^n\in \Tset_n(t|  x^n)}  
 \ket{\xi_{x^n,z^n}} \otimes \ket{\xi'_{x^n,z^n} } \,,
\end{align}
where $z_t^n$ is any sequence in the conditional type class $\Tset_n(t|  x^n)$. Therefore, 
\begin{align}
\ket{\psi^{x^n}_{A^n G_2^n} }
=&  \sum_{t\in\Pset_n(\Zset)} \sqrt{ P(t|  x^n)}   \ket{\Phi_t} \,,
\label{eq:phiKBsnT}
\end{align}
where 
$P(t|x^n)=P_{Z^n|X^n}(z_t^n|x^n) |\Tset_n(t|x^n)|$ is the conditional probability  of the type class $\Tset_n(t|s^n)$, and 
\begin{align}
\ket{\Phi_t }=
\frac{1}{\sqrt{|\Tset_n(t|x^n)|%
}} \sum_{z^n\in \Tset_n(t|  x^n)}   \ket{\xi_{x^n,z^n}} \otimes \ket{\xi'_{x^n,z^n}} \,.
\end{align}

Alice applies the operator $U(\gamma(m'|  x^n(m)))$ to the entangled states.
Since the state $\ket{\Phi_t }$ is  maximally entangled, we have by %
(\ref{eq:ricochet}), %
\begin{align}
\ket{\chi_{A^n G_2^n}^{\gamma,x^n}}&\equiv (U(\gamma(m,m'))\otimes \identity)
\ket{\psi^{x^n}_{A G_2^n} } \nonumber\\&=
(\identity\otimes U^T(\gamma(m,m')))\ket{\psi^{x^n}_{A^n G_2^n} } \,,
\label{eq:RicoCons}
\end{align}
where $\ket{\psi^{x^n}_{A^n G_2^n} }\equiv \bigotimes_{i=1}^n \ket{\psi^{x_i}_{A G_2} }$ (see (\ref{eq:ABphis})).
By the same considerations, %
\begin{align}
\ket{\psi^{x^n}_{A^n G_2^n} } &= (F^{(x^n)}\otimes \identity) \ket{\phi_{G_1 G_2}}^{\otimes n}  \nonumber\\&=
(\identity \otimes (F^{(x^n)})^T ) \ket{\phi_{A G_2'}}^{\otimes n} \,.
\end{align}
That is, Alice's unitary operations can be reflected and treated as if performed by Bob.

Bob then receives the systems $B^n$ in the  state
\begin{align}
\rho^{\gamma,x^n}_{ B^n G_2^n}&=
   (\tilde{\channel}^{\otimes n}\otimes\text{id})  \left( 
 \chi_{A^n G_2^n}^{\gamma,x^n} 
\right)
\label{eq:rhoG1}
\\
&=
 (\tilde{\channel}^{\otimes n}\otimes\text{id}) \left( 
(\identity\otimes U^T(\gamma)) %
\psi_{A^n G_2^n}^{x^n} %
(\identity\otimes U^*(\gamma))
\right)\,,
\label{eq:rhoG1c}
\end{align}
where $\tilde{\channel}$ is as in (\ref{eq:tchannel}), and the last line is due to  (\ref{eq:RicoCons}). 
Since a quantum channel is a linear map,  the above can be written as
\begin{align}
\rho^{\gamma,x^n}_{B^n G_2^n}
=& 
(\identity\otimes U^T(\gamma)) %
(\tilde{\channel}\otimes\text{id})(  \psi_{A^n G_2^n}^{x^n}  ) %
(\identity\otimes U^*(\gamma))
\nonumber\\
=& 
(\identity\otimes U^T(\gamma))  \omega_{B^n G_2^n}^{x^ n}
(\identity\otimes U^*(\gamma)) \,.
\label{eq:rhoG2}
\end{align}

\vspace{0.1cm}
\subsubsection{Error Analysis Without Assistance}
\label{subsec:errNoEA}
Recall that if entanglement assistance is absent, then Bob does not decode $m'$. 
Furthermore, since the decoder cannot measure $G_2^n$ in this case, we need to consider the reduced state $\rho_{B^n}^{\gamma,x^n}$ of the joint output state $\rho_{B^n G_2^n}^{\gamma,x^n}$. 
Observe that by (\ref{eq:rhoG2}), the reduced output state is 
\begin{align}
\rho_{B^n}^{\gamma,x^n(m)}=\omega_{B^n}^{x^n(m)} \,.
\end{align}
Thereby, the reduced output is not affected by the encoding of the message $m'$ using $\gamma(m'|  x^n(m))$, and we can use the standard results on classical communication over a quantum channel without assistance. 

Fix $\delta>0$.
Based on the HSW Theorem  \cite{Holevo:98p,SchumacherWestmoreland:97p}, there exists a codebook  
$\{ x^n(m) \}$  and a
 POVM $\{\Lambda_m\}$ such that
\begin{align}
P_{e|  m,m'}^{*(n)}(\Fset,\phi_{G_1 G_2}^{\otimes n},\Lambda) 
&=1-\trace(\Lambda_m \rho_{B^n}^{\gamma(m'|  x^n(m)),x^n(m)})
\nonumber\\
&=1-\trace(\Lambda_m \omega_{B^n}^{x^n(m)})
\nonumber\\
&\leq 2^{-n(I(X;B)_\omega-R-\eps_1(\delta))}  \,,
\label{eq:B2a}
\end{align}
where $x^n(m)\in\tset(p_X)$ for all $m\in [1:2^{nR}]$ (see  \cite[Section 20.3.1]{Wilde:17b}). We use the notation $\eps_j(\delta)$ for terms that tend to zero as $\delta\to 0$.
Thus, in the absence of entanglement assistance, the probability of error tends to zero  as $n\to \infty$,   provided that
\begin{align}
R<I(X;B)_\omega-\eps_1(\delta) \,.
\label{eq:RB2a}
\end{align}

\vspace{0.1cm}
\subsubsection{Packing Lemma Requirements}
\label{subsec:errEA}
In the error analysis with entanglement assistance, we will use the quantum packing lemma. Fix a sequence $x^n\in\tset(p_X)$.
 Consider the ensemble 
$\{ p(\gamma)=\frac{1}{\abs{\Gamma_{x^n}}},$ $  \rho^{\gamma,x^n}_{B^n, G_2^n} \}$, for which the expected density operator is 
\begin{align}
\overline{\rho}^{x^n}_{B^n G_2^n}=\frac{1}{\abs{\Gamma_{x^n}}}\sum_{\gamma\in\Gamma_{x^n}} \rho^{\gamma,x^n}_{B^n G_2^n} \,.
\label{eq:SigmaBpB}
\end{align}
Define the code projector and the codeword projectors by
\begin{align}
\Pi 
&
\equiv \Pi^{\delta}(\omega_{B}|  x^n)\otimes \Pi^{\delta}(\omega_{G_2}|  x^n)
\label{eq:CodeProjDir}
\\
\Pi_\gamma 
&
\equiv (\identity\otimes U^T(\gamma)) \Pi^{\delta}(\omega_{B G_2}|  x^n) (\identity\otimes U^*(\gamma)) \,,\; 
\text{for 
$\gamma\in\Gamma_{x^n}$} \,,
\end{align}
where $\Pi^{\delta}(\omega_{BG_2}|  x^n)$, $\Pi^{\delta}(\omega_{B}|  x^n)$ and $\Pi^{\delta}(\omega_{G_2}|  x^n)$ are the projectors onto the conditional $\delta$-typical subspaces associated with the states $\omega^x_{BG_2}$, $\omega^x_{B}=\trace_{G_2}(\omega^x_{BG_2})$ and
$\omega^x_{G_2}=\trace_{B}(\omega^x_{BG_2})$, respectively (see (\ref{eq:oBpB})).
Applying the bounds in  \cite[Appendix II]{HsiehDevetakWinter:08p} to the operators above, we obtain
\begin{align}
\trace(\Pi \rho^{\gamma,x^n}_{B^n G_2^n})
&\geq 1-2\eps_2(\delta) \,,
\\
\trace(\Pi_\gamma \rho^{\gamma,x^n}_{B^n G_2^n})& \geq 1-\eps_3(\delta) \,,
\\
\trace(\Pi_\gamma) &\leq 2^{n(H(BG_2|  X)_\omega+\eps_4(\delta))} \,,
\\
\Pi \sigma_{B'^n G_2^n}^{x^n}  \Pi &\preceq 
2^{-n(H(B|  X)_\omega+H(G_2|  X)_\omega+\eps_5(\delta))} \Pi \,,
\end{align}
where $\eps_j(\delta)$ tend to zero as $\delta\to 0$.
Hence,  the requirements of the packing lemma are satisfied. Then, by Lemma~\ref{lemm:Qpacking}, there exist deterministic vectors $\gamma(m'|  x^n)$, $m'\in [1:2^{nR'}]$, and  a POVM 
$\{ \Upsilon_{m'|  x^n} \}_{m'\in [1:2^{nR'}]}$, such that 
\begin{align}
	\trace\left( \Upsilon_{m'|  x^n} \rho^{\gamma(m'|  x^n),x^n}_{ B^n G_2^n} \right)  
	&\geq 1-2^{-n[ I(B;G_2|  X)_\omega-R'-\eps_6(\delta)]}
	\label{eq:packDirIneq}
\end{align}
for all %
$m'\in [1:2^{nR'}]$. 

\vspace{0.1cm}
\subsubsection{Error Analysis with Entanglement Assistance}
Suppose that entanglement assistance is present, in which case Bob estimates both $m$ and $m'$.
Hence, the error event is bounded by the union of the following events,
\begin{align}
\mathscr{E}_1(m)=& \{  \hm\neq m %
\}\,, \\
\mathscr{E}_2(m')=& \{  \hm'\neq m' %
\} \,.
\end{align}
Then, by the union of events bound,
\begin{align}%
P_{e|  m,m'}^{(n)}(\Fset,\phi_{G_1 G_2}^{\otimes n},\Upsilon\circ\Lambda)\leq 
\prob{  \mathscr{E}_1(m)  }+
\cprob{  \mathscr{E}_2(m')  }{ \mathscr{E}_1^c(m)}
 \,.
\label{eq:Edir1}
\end{align}%
The first term corresponds to the measurement of $\{\Lambda_m\}$ above.
Based on our previous analysis (see (\ref{eq:B2a})),
\begin{align}
\prob{\mathscr{E}_1(m)}
&\leq 2^{-n(I(X;B)_\omega-R-\eps_1(\delta))}
\,,
\label{eq:B2a2}
\end{align}
which tends to zero for a rate $R$ as in (\ref{eq:RB2a}).

Given $ \mathscr{E}_1^c$, %
Bob has recovered the correct $m$ in step (i) of the decoding procedure. Denote the joint state of the systems $B^n G_2^n$ after this measurement by $\widetilde{\rho}^{\,\gamma,x^n(m)}_{B^n G_2^n}$.
As previously observed \cite{Pereg:21p,Pereg:20c1},  by %
the gentle measurement lemma \cite{Winter:99p,OgawaNagaoka:07p} and (\ref{eq:B2a2}), %
the post-measurement state is close to the original state in the sense that
\begin{align}
%&
\frac{1}{2}\norm{\widetilde{\rho}^{\,\gamma,x^n(m)}_{B^n G_2^n}-\rho^{\gamma,x^n(m)}_{B^n G_2^n}}_1 
%\nonumber\\
&
\leq 2^{ -n\frac{1}{2}( I(X;B)_\omega-R-\eps_1(\delta)) }
\nonumber\\&
 \leq \eps_7(\delta)
\end{align}
for sufficiently large $n$ and $R$ as in (\ref{eq:RB2a}).
Therefore, the distribution of measurement outcomes, when $\widetilde{\rho}^{\,\gamma,x^n}_{B^n G_2^n}$ is measured, is roughly the same as if the POVM 
$\Lambda_{m}$ was never performed. To be precise, the difference between the probability of a measurement outcome $\hm'$ when $\widetilde{\rho}^{\,\gamma,x^n}_{B^n G_2^n}$ is measured and the probability when $\rho^{\,\gamma,x^n}_{B^n G_2^n}$ is measured is bounded by $ \eps_7(\delta)$ in absolute value (see  \cite[Lemma 9.11]{Wilde:17b}).
 Therefore, by (\ref{eq:packDirIneq}), the POVM $\Upsilon_{m'|  x^n(m)}$ satisfies
$%
\cprob{ \mathscr{E}_2 }{ \mathscr{E}_1^c} \leq 2^{ -n( I(G_2;B|  X)_\rho -R'-\eps_8(\delta)) } 
$, %
which tends to zero as $n\rightarrow\infty$, if
\begin{align}
R'< I(G_2;B|  X)_\omega -\eps_8(\delta) \,.
\label{eq:B3}
\end{align}
Finally, we let  $A_0$, $A_1$ replace $G_1$, $G_2$, respectively. %
Thus, the probability of error tends to zero as $n\rightarrow\infty$ provided that  $R<I(X;B)_\omega-\eps_1(\delta)$ and 
$R'< I(A_1;B|  X)_\omega -\eps_8(\delta)$.
This completes the proof of the direct part.

\subsection{Converse Proof}
Suppose that Alice and Bob are trying to distribute randomness. An upper bound on the rate at which Alice can distribute randomness to Bob also serves as an upper bound on the rate at which they can communicate. In this task, Alice and Bob share an unreliable entangled resource $\Psi_{G_A G_B}$. Alice first prepares  maximally corrleated states,
\begin{align}
\pi_{K M K'M'} &\equiv 
\left(\frac{1}{2^{nR}}\sum_{m=1}^{2^{nR}} \kb{ m } \otimes \kb{ m }\right) 
%\nonumber\\&
\otimes
\left(\frac{1}{2^{nR'}}\sum_{m'=1}^{2^{nR'}} \kb{ m' } \otimes \kb{ m' }\right) %
\end{align}
locally. That is, $K$, $M$, $K'$, and $M'$ are classical registers that store uniformly-distributed indices $m$ and $m'$.
 Then, Alice applies an encoding channel  to the classical system $M M'$ and her share $G_A$ of the entangled state $\Psi_{G_A G_B}$. As Alice applies $\Fset_{M M'G_A \rightarrow A^n}$, 
the resulting state is 
\begin{align}
\sigma_{ K K' A^n G_B }\equiv (\text{id}\otimes\Fset\otimes\text{id})( \pi_{K K' M M'}\otimes \Psi_{G_A G_B} ) \,.
\end{align}
After Alice sends the system $A^n$ through the channel, Bob receives the system $B^n$ in the state
\begin{align}
\omega_{K K'  G_B B^n}\equiv (\text{id}\otimes\channel^{\otimes n}) (\sigma_{ K K'   G_B A^n}) \,.
\end{align}
In the presence of entanglement assistance, 
 Bob performs a decoding channel $\Dset_{B^n G_B\rightarrow \hM \hM'}$, which yields
\begin{align}
\rho_{ K K' \hM \hM'}\equiv (\text{id}\otimes\Dset)(\omega_{K K' B^n G_B}) \,.
\label{eq:DecConv1EA}
\end{align}
If the entanglement assistance is not available, 
 Bob performs  $\Dset^*_{B^n \rightarrow \tilde{M} }$, producing 
\begin{align}
\rho^*_{ K K' \tilde{M}  G_B}\equiv (\text{id}\otimes\Dset^*\otimes\text{id})(\omega_{K K' B^n G_B}) \,.
\label{eq:DecConv1noEA}
\end{align}

Consider a sequence of codes $(\Eset_n,\Psi_n,\Dset_n,\Dset_n^*)$ of randomness distribution with unreliable entanglement assistance, such that %
\begin{align}
&\frac{1}{2} \norm{ \rho_{K \hM K'\hM'} -\pi_{K M K'M'} }_1 \leq \alpha_n \,,\\
&\frac{1}{2} \norm{ \rho^*_{K \tilde{M} } -\pi_{K M } }_1 \leq \alpha_n^* \,,
\end{align}
where $\rho^*_{K \tilde{M}}$  is the reduced density operator of  $\rho^*_{ K K' \tilde{M}  G_B}$, 
 while $\alpha_n,\alpha_n^*$ tend to zero as $n\rightarrow\infty$.
By the Alicki-Fannes-Winter inequality \cite{AlickiFannes:04p,Winter:16p} \cite[Theorem 11.10.3]{Wilde:17b}, this implies 
\begin{align}
\abs{H(K'|  \hM' K)_\rho - H(K'|  M' K)_{\pi} }\leq n\eps_n \,, \label{eq:AFW1}\\
\abs{H(K|  \tilde{M})_{\rho^*} - H(K|  M)_{\pi} }\leq n\eps_n^* \,,
\label{eq:AFW2}
\end{align}
where $\eps_n,\eps_n^*$ tends to zero as $n\rightarrow\infty$.

Now, suppose that entanglement assistance is absent. Observe that %
$H(K)_{\rho^*}=H( K)_{\pi}=nR$ implies  
$I(K;M)_{\pi} - I(K;\hM)_{\rho^*}= H(K|  \hM)_{\rho^*} - H(K|  M)_{\pi}$. Therefore, by (\ref{eq:AFW2}),
\begin{align}
nR=&I(K;M)_{\pi} \nonumber\\
\leq& I(K;\hM)_{\rho^*}+n\eps_n^* \nonumber\\
\leq& I(K;B^n)_{\omega}+n\eps_n^* \,,
\label{eq:ConvIneq1noEA}
\end{align}
where the last line follows from (\ref{eq:DecConv1noEA}) and the quantum data processing inequality \cite[Theorem 11.5]{NielsenChuang:02b}. Here, the system $G_B$ need not be included since the decoding measurement $\Dset^*$ is only applied to $B^n$.

We move to the case where entanglement assistance is present. Similarly, %
$I(K';M'|  K)_{\pi} - I(K';\hM'|  K)_{\rho}= H(K'|  \hM'K)_{\rho} - H(K'|  M'K)_{\pi}$. Therefore, by (\ref{eq:AFW1}),
\begin{align}
nR'=&I(K';M'|  K)_{\pi} \nonumber\\
\leq& I(K';\hM'|  K)_{\rho}+n\eps_n \nonumber\\
\leq& I(K';G_B B^n|  K)_{\omega}+n\eps_n 
\label{eq:ConvIneq1noEAp}
\end{align}
by (\ref{eq:DecConv1EA}) and the quantum data processing inequality. %
We must include the entanglemet resource system $G_B$, since the decoder measures $G_B B^n$.
By the chain rule, the last bound can also be written as
\begin{align}
nR' &\leq I(K' G_B;B^n|  K)_{\rho}-I(G_B;B^n|  K)_{\rho} %\nonumber\\&
+I(K';G_B|  K)_{\rho}+n\eps_n \nonumber\\
&\leq I(K' G_B;B^n|  K)_{\rho}+I(K';G_B|  K)_{\omega}+n\eps_n \nonumber\\
&= I(K' G_B;B^n|  K)_{\omega}+n\eps_n \,,
\label{eq:ConvIneq2}
\end{align}
where the equality holds since $G_B$ and $(K,K')$  are in a product state. 
To complete the regularized converse proof, set $X^n=f(K)$ and $A_1^n\equiv (K',G_B)$, where $f$ is an arbitrary one-to-one function from $[1:2^{nR}]$ to $\Xset^n$.
This concludes the proof of Theorem~\ref{theo:ClEA}.
\qed %

\section{Proof of Theorem~\ref{theo:qEA}}
\label{app:qEA}
Consider quantum communication over $\channel_{A\to B}$ with unreliable entanglment assistance.

\subsection{Achievability Proof}
In the proof we follow Dupuis' methods \cite{Dupuis:10z}, originally applied to the quantum broadcast channel.

At first we restrict the entanglement resources to a given rate $R_e$. That is, 
we assume %
$\abs{\Hset_{G_A}}=\abs{\Hset_{G_B}}\leq 2^{nR_e}$.
We are going to show that any rate pair $(Q,Q')$ is achievable with unreliable entanglement assistance if 
\begin{subequations}
\begin{align}
&0\leq Q< H(A_1|  A_2)_\omega \\
& Q<I(A_1\rangle B)_\omega \\
& Q+Q'+R_e<H(A_2)_\omega \\
& Q+Q'-R_e<I(A_2\rangle B)_\omega
\end{align}
\label{eq:QrateAch}
\end{subequations}
for some $\varphi_{A_1 A_2 A}$, where $A_1$, $A_2$ are arbitrary systems, and $\omega_{A_1 A_2 B}=\channel_{A\to B}$ $(\varphi_{A_1 A_2 A})$.

Let $\ket{\phi_{A_1 A_2 A  J}}$ be a purification of $\varphi_{A_1 A_2 A}$. %
Then, the corresponding channel output is 
\begin{align}
\ket{\omega_{A_1 A_2 B E J}} = U^{\channel}_{ A\to B E} \ket{\phi_{A_1 A_2  A  J}} \,,
\label{eq:Oacbkj}
\end{align}
where $\Uset^{\channel}_{ A\rightarrow B E}$ is a Stinespring dilation such that
$%
\Uset^{\channel}_{ A\rightarrow B E}(\rho_{A})=$ $U^{\channel}_{ A\rightarrow B E} \rho_{ A}$ $ (U^{\channel }_{ A\rightarrow B E})^\dagger
$. %

Given a quantum message state $\theta_M\otimes \xi_{\bar{M}}$, let
$K$ and $\bar{K}$ be reference systems that purify the message systems $M$ and $\bar{M}$, respectively, \ie such that the systems $M$, $\bar{M}$, $K$, and $\bar{K}$ have a pure joint state
$\ket{\theta_{MK}}\otimes \ket{\xi_{\bar{M}\bar{K}}}$, with $\abs{\Hset_K}=\abs{\Hset_M}=2^{nQ}$ and $\abs{\Hset_{\bar{K}}}=\abs{\Hset_{\bar{M}}}=2^{n(Q+Q')}$. 
Suppose that given reliable entanglement assistance, Alice and Bob share an entangled state $\ket{\Phi_{G_A G_B}}$ of dimension $\abs{\Hset_{G_A}}=\abs{\Hset_{G_B}}=2^{nR_e}$.

Let $V^{(1)}_{M\to A_1^n}$ and $V^{(2)}_{\bar{M} G_A\to A_2^n}$ be arbitrary full-rank partial isometries. That is, each operator has $0$-$1$ singular values with a rank of $2^{nQ}$ and $2^{n(Q+Q')}$, respectively. 
Denote
\begin{align}
\ket{\psi^{(1)}_{A_1^n K}}&= V^{(1)}_{M\to A_1^n} \ket{\theta_{M K}} \,, \\
\ket{\psi^{(2)}_{A_2^n G_B \bar{K}}}&= V^{(2)}_{\bar{M} G_A\to A_2^n} (\ket{\xi_{\bar{K}\bar{M}}} \otimes \ket{\Phi_{G_A,G_B}})
\,.
\end{align}

\subsubsection{Decoupling Inequalities}
First, we establish decoupling inequalities. %
We introduce the following notation of operators and channels.
 For every pair of Hilbert spaces $\Hset_{A}$ and $\Hset_B$ with orthonormal bases $\{ \ket{i_A} \}$ and $\{ \ket{j_B} \}$, respectively, define the operator $\text{op}_{A \rightarrow B}(\ket{\psi_{AB}})$ by
\begin{align}
\text{op}_{A \rightarrow B}(\ket{i_A} \otimes \ket{j_B} )\equiv \ketbra{j_B }{ i_A} \,.
\label{eq:op}
\end{align}
While the operation above depends on the choice of bases, we will not specify these since it is not important for our purposes.
To generalize this definition to any state $\ket{\psi_{AB}}$, consider its decomposition 
$\ket{\psi_{AB}}=\sum_{i,j} a_{i,j} \ket{i_A} \otimes \ket{j_B} $, and define 
 $\text{op}_{A \rightarrow B}(\ket{\psi_{AB}})=\sum_{i,j} a_{i,j} \text{op}_{A \rightarrow B}(\ket{i_A} \otimes \ket{j_B} )$.

Consider the operators
\begin{align}
\Pi_{A_2\to A_1 AJ}=
\sqrt{\abs{\Hset_{A_2}}}  \text{op}_{A_2\to A_1 AJ}(\phi_{A_1 A_2 AJ}) \,, \\
\Pi_{A_1\to A_2 AJ}=
\sqrt{\abs{\Hset_{A_1}}}  \text{op}_{A_1\to A_2 AJ}(\phi_{A_1 A_2 AJ}) \,.
\end{align}
Given a pair of unitaries, $U^{(1)}_{A_1^n}$ and $U^{(2)}_{A_2^n}$, define the following quantum states,
\begin{align}
\ket{\omega^{U^{(2)}}_{A_1^n A^n  J^n \bar{K} G_B}}
&
= \Pi_{A_2 \to A_1 A J}^{\otimes n} U^{(2)}_{A_2^n}
\ket{\psi^{(2)}_{A_2^n G_B \bar{K}}}%
\,,
\\
\ket{\omega^{U^{(1)}}_{A_2^n  A^n J^n K}}
&
= \Pi_{A_1 \to A_2 A J}^{\otimes n} U^{(1)}_{A_1^n} 
\ket{\psi^{(1)}_{A_1^n K}}%
\,.
\end{align}
The corresponding channel outputs are then 
\begin{align}
\ket{\omega^{U^{(2)}}_{A_1^n B^n E^n J^n \bar{K} G_B}}&= (U_{A\to BE}^{\channel})^{\otimes n} \ket{\omega^{U^{(2)}}_{A_1^n A^n  J^n \bar{K} G_B}} ,
\\
\ket{\omega^{U^{(1)}}_{A_2^n  B^n E^n J^n K}}&= (U_{A\to BE}^{\channel})^{\otimes n}\ket{\omega^{U^{(1)}}_{A_2^n  A^n J^n K}} \,.
\end{align}

Now, consider the operators
\begin{align}
\Pi^{U^{(2)}}_{A_1^n\to A^n J^n \bar{K} G_B}
&
= 
\sqrt{\abs{\Hset_{A_1}}^n}  \text{op}_{A_1^n\to A J \bar{K} G_B}(\omega^{U^{(2)}}_{A_1^n A^n  J^n \bar{K} G_B}) \,, \\
\Pi^{U^{(1)}}_{A_2^n \to A^n J^n K}
&
= 
\sqrt{\abs{\Hset_{A_2}}^n}  \text{op}_{A_2^n \to A^n J^n K}(\omega^{U^{(1)}}_{A_2^n  A^n J^n K}) \,, \\
\Pi_{A_1 A_2 \to A J }
&
= 
\sqrt{\abs{\Hset_{A_1}}\abs{\Hset_{A_2}}}  \text{op}_{A_1 A_2 \to A J }(\phi_{A_1 A_2 AJ}) \,,
\end{align}
and define the quantum channels $\Tset^{U^{(2)}}_{A_1^n\to E^n J^n \bar{K} G_B}$, $\Tset^{U^{(1)}}_{A_2^n \to E^n J^n K }$, and $\Tset_{A_1 A_2\to EJ}$, by
\begin{align}
\Tset^{U^{(2)}}_{A_1^n\to E^n J^n \bar{K} G_B}(\rho_{A_1^n})
&
= 
\trace_{B^n}\left[ \Uset_{A\to BE}^{\channel}( \Pi^{U^{(2)}}_{A_1^n\to A^n J^n \bar{K} G_B}(\rho_{A_1^n})) \right] \,,
\label{eq:TU2channel}
\\
\Tset^{U^{(1)}}_{A_2^n \to E^n J^n K }(\rho_{A_2^n })
&
= 
\trace_{B^n}\left[ \Uset_{A\to BE}^{\channel}(\Pi^{U^{(1)}}_{A_2^n \to A^n J^n K}(\rho_{A_2^n })) \right] \,,
\label{eq:TU1channel}
\\
\Tset_{A_1 A_2\to EJ }(\rho_{A_1 A_2})
&
= 
\trace_B\left[ \Uset_{A\to BE}^{\channel}(\Pi_{A_1 A_2\to AJ}(\rho_{A_1 A_2})) \right]
 \,.
\end{align}
According to \cite[Lemma 2.7]{Dupuis:10z}, $\text{op}_{A\to B}(\psi_{AB})\ket{\phi_{AC}}=\text{op}_{A\to C}(\phi_{AC})\ket{\psi_{AB}}$, hence
\begin{align}
&\Tset^{U^{(2)}}_{A_1^n\to E^n J^n \bar{K} G_B}(U^{(1)}_{A_1^n}\psi^{(1)}_{A_1^n K})
=\Tset^{U^{(1)}}_{A_2^n \to E^n J^n K }(U^{(2)}_{A_2^n }\psi^{(2)}_{A_2^n  \bar{K} G_B})\nonumber\\
&=\Tset_{A_1 A_2\to EJ}^{\otimes n}(U^{(1)}_{A_1^n}\psi^{(1)}_{A_1^n K}\otimes U^{(2)}_{A_2^n }\psi^{(2)}_{A_2^n  \bar{K} G_B}) \ \,.
\label{eq:channelEQ}
\end{align}

Applying the decoupling theorem, Theorem~\ref{theo:decoup1}, to the channels  in (\ref{eq:TU2channel})-(\ref{eq:TU1channel}), we obtain
\begin{align}
&\int_{\mathbb{U}_{A_1^n}} \big\lVert \Tset^{U^{(2)}}_{A_1^n\to E^n J^n \bar{K} G_B}(U^{(1)}_{A_1^n}\psi^{(1)}_{A_1^n K})
-\theta_{K}\otimes 
\omega^{U^{(2)}}_{E^n J^n \bar{K} G_B}
\big\rVert_1 \, d U^{(1)}_{A_1^n}  
%\nonumber\\&
\leq
2^{-\frac{1}{2}\left[ H_{\min}^{\eps}(A_1^n|  E^n J^n \bar{K} G_B)_{\omega^{U^{(2)}}}-nQ \right]} +12\eps \,,
\label{eq:Dbound1}
\\
&
\int_{\mathbb{U}_{A_2^n }} \big\lVert \Tset^{U^{(1)}}_{A_2^n \to E^n J^n K }(U^{(2)}_{A_2^n }\psi^{(2)}_{A_2^n  \bar{K} G_B})
-\xi_{\bar{K}}\otimes 
\omega^{U^{(1)}}_{E^n J^n K}
\big\rVert_1 \, d U^{(2)}_{A_2^n }  
%\nonumber\\&
\leq
2^{-\frac{1}{2}\left[ H_{\min}^{\eps}(A_2^n |  E^n J^n K)_{\omega^{U^{(1)}}}-n(Q+Q'-R_e) \right]} +12\eps \,.
\label{eq:Dbound2}
\end{align}
Using (\ref{eq:channelEQ}), we can rewrite those decoupling inequalities as
\begin{align}
&
\int_{\mathbb{U}_{A_1^n}} \big\lVert \Tset_{A_1 A_2 \to E D }^{\otimes n} (U^{(1)}_{A_1^n}\psi^{(1)}_{A_1^n K}\otimes U^{(2)}_{A_2^n }\psi^{(2)}_{A_2^n  \bar{K} G_B})
-\theta_{K}\otimes 
\omega^{U^{(2)}}_{E^n J^n \bar{K} G_B}
\big\rVert_1 \, d U^{(1)}_{A_1^n} 
%\nonumber\\&  
\leq
2^{-\frac{1}{2}\left[ H_{\min}^{\eps}(A_1^n|  E^n J^n \bar{K} G_B)_{\omega^{U^{(2)}}}-n(Q-\alpha_{1n}) \right]}  \,,
\label{eq:Dbound3}
\\
&\int_{\mathbb{U}_{A_2^n }} \big\lVert \Tset_{A_1 A_2\to EJ}^{\otimes n}(U^{(1)}_{A_1^n}\psi^{(1)}_{A_1^n K}\otimes U^{(2)}_{A_2^n }\psi^{(2)}_{A_2^n  \bar{K} G_B})
-\xi_{\bar{K}}\otimes 
\omega^{U^{(1)}}_{E^n J^n K}
\big\rVert_1 \, d U^{(2)}_{A_2^n }
%\nonumber\\&
  \leq 2^{-\frac{1}{2}\left[ H_{\min}^{\eps}(A_2^n |  E^n J^n K)_{\omega^{U^{(1)}}}-n(Q+Q'-R_e+\alpha_{2n}) \right]}  \,,
\label{eq:Dbound4}
\end{align}
where $\alpha_{jn}\to 0$ as $n\to\infty$. The last bounds tend to zero provided that
\begin{align}
%&
Q&< \frac{1}{n} H_{\min}^{\eps}(A_1^n|  E^n J^n \bar{K} G_B)_{\omega^{U^{(2)}}}-\alpha_{1n} \,, \label{eq:achQ1} \\
%&
Q+Q'-R_e &< \frac{1}{n} H_{\min}^{\eps}(A_2^n |  E^n J^n K)_{\omega^{U^{(1)}}}-\alpha_{2n} \,. \label{eq:achQ2}
\end{align}

This is close to what we would like to show. However, we need the encoder to be an isometry, and we need to replace 
 $\omega^{U^{(1)}}$, $\omega^{U^{(2)}}$ in the inequalities above by $\omega$.
Applying the decoupling theorem, with $\bar{\Tset}^{U^{(2)}}_{A_1^n\to \bar{K} G_B}(\rho_{A_1^n})=\trace_{A^n J^n }[\Pi^{U^{(2)}}_{A_1^n\to A^n J^n \bar{K} G_B}(\rho_{A_1^n})]$ and $\bar{\Tset}_{A_2^n \to \mathbb{C}}(\rho_{A_2^n })=\trace[\Pi_{A_2^n \to A_1^n A^n J^n}(\rho_{A_2^n })]$, we obtain
\begin{align}
&
\int_{\mathbb{U}_{A_1^n}} \Big\lVert \trace_{A^n J^n }\big[\Pi^{U^{(2)}}_{A_1^n\to A^n J^n \bar{K} G_B}%
U^{(1)}_{A_1^n}\psi^{(1)}_{A_1^n K}\big]
-\theta_{K}\otimes 
\omega^{U^{(2)}}_{\bar{K} G_B}
\Big\rVert_1 \, d U^{(1)}_{A_1^n}
%\nonumber\\&
\leq
2^{-\frac{1}{2}\left[ H_{\min}^{\eps}(A_1^n|  \bar{K} G_B)_{\omega^{U^{(2)}}}-n(Q+\alpha_{3n}) \right]}  \,,
\label{eq:Dbound5}
\\
&\int_{\mathbb{U}_{A_2^n }} \big\lVert \omega^{U^{(2)}}_{\bar{K} G_B}-\xi_{\bar{K}}\otimes \Phi_{G_B}
\big\rVert_1 \, d U^{(2)}_{A_2^n }
%\nonumber\\&
=\int_{\mathbb{U}_{A_2^n }} \big\lVert \bar{\Tset}_{A_2^n \to \mathbb{C}}(U^{(2)}_{A_2^n }\psi^{(2)}_{A_2^n  \bar{K} G_B})
-\xi_{\bar{K}}\otimes \Phi_{ G_B}
\big\rVert_1 \, d U^{(2)}_{A_2^n }  
\nonumber\\
&\leq
2^{-\frac{1}{2}n[ H(A_2)_{\phi}-(Q+Q'+R_e+\alpha_{4n}) ]}  \,,
\label{eq:Dbound6}
\end{align}
which tend to zero for 
\begin{align}
Q&< \frac{1}{n} H_{\min}^{\eps}(A_1^n|  \bar{K} G_B)_{\omega^{U^{(2)}}}-\alpha_{3n} \,, \label{eq:achQ3} \\
Q+Q'+R_e&< H(A_2)_{\phi}-\alpha_{4n} \label{eq:achQ4} \,.
\end{align}
We will use these decoupling properties in order to obtain the encoding map, by applying  Uhlmann's theorem (see Theorem~\ref{theo:Uhlmann}). However, before that, we give the bounds on the smoothed min-entropies.

\subsubsection{Entropy Bounds}
We would like to bound the min-entropies in (\ref{eq:achQ1})-(\ref{eq:achQ2}) and (\ref{eq:achQ3}). %
We begin with the min-entropy $H_{\min}^{\eps}(A_1^n|  \bar{K} G_B)_{\omega^{U^{(2)}}}$.
Suppose that $\tilde{\phi}_{A_1^n A_2^n  A^n J^n}$ is a state such that
\begin{align}
&\left\lVert\tilde{\phi}_{A_1^n A_2^n  A^n J^n}-\phi^{\otimes n}_{A_1 A_2  A J} \right\rVert_1 \leq 2\eps_0
\label{eq:sigmaDist}
\intertext{and}
& H_{\min}(A_1^n|  A_2^n )_{\tilde{\phi}}= H_{\min}(A_1^n|  A_2^n )_{\phi} \,.
\end{align}
Define %
\begin{align}
&\tilde{\Pi}^{U^{(2)}}_{A_2^n \to \bar{K} G_B}= 
\sqrt{\abs{\Hset_{A_2^n }}}
\text{op}_{A_2^n \to \bar{K} G_B}( U^{(2)}_{A_2^n} W^{(2)}_{\bar{M}G_A\to A_2^n} |\psi^{(2)}_{\bar{M} G_A G_B \bar{K}} \rangle)%
\,,
\end{align}
hence
\begin{align}
 \ket{\omega_{A_1^n A^n J^n \bar{K} G_B}^{U^{(2)}}}=\tilde{\Pi}^{U^{(2)}}_{A_2^n \to \bar{K} G_B}
\ket{\phi^{\otimes n}_{A_1 A_2 A J  }} \,.
\end{align}

Consider a decomposition of the operator $(\tilde{\phi}-\phi^{\otimes n})$ into its positive and negative parts,
\begin{align}
\tilde{\phi}_{A_1^n A_2^n  A^n J^n}-\phi^{\otimes n}_{A_1 A_2  A J}=\Delta_+ - \Delta_- \,,
\end{align}
where $\Delta_+$, $ \Delta_-\succeq 0$ have a disjoint support. Hence, by (\ref{eq:sigmaDist}), $\trace(\Delta_\pm)\leq 2\eps_0$.
We note that 
\begin{align}
&\int_{\mathbb{U}_{A_2^n }} \tilde{\Pi}^{U^{(2)}}_{A_2^n \to \bar{K} G_B}(P_{A_2^n } ) \, dU^{(2)}_{A_2^n }
%\nonumber\\&
=\trace(P_{A_2^n })\cdot 
\frac{1}{\abs{\Hset_{A_2}}^n}\tilde{\Pi}^{U^{(2)}}_{A_2^n \to \bar{K} G_B}\tilde{\Pi}^{U^{(2)}\dagger}_{A_2^n \to \bar{K} G_B}
\,.
\end{align}
Thus,
\begin{align}
%&
\int_{\mathbb{U}_{A_2^n }} \Big\lVert \tilde{\Pi}_{A_2^n \to \bar{K} G_B}^{U^{(2)}}(\tilde{\phi}_{A_1^n A_2^n  A^n J^n})
-\tilde{\Pi}^{U^{(2)}}_{A_2^n \to \bar{K} G_B}(\phi^{\otimes n}_{A_1 A_2  A J})
 \Big\rVert_1 \, dU^{(2)}_{A_2^n}
%\nonumber\\
&=
\int_{\mathbb{U}_{A_2^n }} \left\lVert \tilde{\Pi}^{U^{(2)}}_{A_2^n \to \bar{K} G_B}(\Delta_+ - \Delta_-)
 \right\rVert_1 \, dU^{(2)}_{A_2^n}
\nonumber\\
&\leq  \trace\left(\tilde{\Pi}^{U^{(2)}}_{A_2^n \to \bar{K} G_B}(\Delta_+)\right)+
\trace\left(\tilde{\Pi}^{U^{(2)}}_{A_2^n \to \bar{K} G_B}(\Delta_-)\right)
\nonumber\\
&\leq 4\eps_0 \,,
\end{align}
which, in turn, implies
\begin{align}
&\int_{\mathbb{U}_{A_2^n }} d_F\Big( \tilde{\Pi}^{U^{(2)}}_{A_2^n \to \bar{K} G_B}(\tilde{\phi}_{A_1^n A_2^n  A^n J^n})
\,,\;  
\tilde{\Pi}^{U^{(2)}}_{A_2^n \to \bar{K} G_B}(\phi^{\otimes n}_{A_1 A_2  A J})
 \Big) \, dU^{(2)}_{A_2^n}
\leq 2\sqrt{\eps_0} \,,
\end{align}
based on the relation between the fidelity and trace distance \cite[Corollary 9.3.1]{Wilde:17b}.
We deduce that 
\begin{align}%
\int_{\mathbb{U}_{A_2^n }} H_{\min}^{2\sqrt{\eps_0}}(A_1^n|  \bar{K} G_B)_{\omega^{U^{(2)}}} \, dU^{(2)}_{A_2^n}\geq
 H_{\min}^{\eps_0}(A_1^n|  A_2^n )_{\phi} \,.
\end{align}%
For $\eps_0=\frac{\eps^2}{4}$, we obtain
\begin{align}%
\int_{\mathbb{U}_{A_2^n }} \frac{1}{n} H_{\min}^{\eps}(A_1^n|  \bar{K} G_B)_{\omega^{U^{(2)}}} \, dU^{(2)}_{A_2^n} \geq 
 H(A_1|  A_2 )_{\phi}-\alpha_{5n} \,.
\label{eq:HA1gA2}
\end{align}%

By the same considerations,
\begin{align}
\int_{\mathbb{U}_{A_1^n }} \frac{1}{n} H_{\min}^{\eps}(A_2^n |  E^n J^n K)_{\omega^{U^{(1)}}} \, dU^{(1)}_{A_1^n}
 &\geq 
H(A_2 |  E D A_1)_{\omega}-\alpha_{6n} 
\nonumber\\
&= -H(A_2|  B)_\omega-\alpha_{6n}
\nonumber\\
&= I(A_2\rangle B)_\omega-\alpha_{6n}
\,,
\label{eq:HA1gEM}
\end{align}
and
\begin{align}
\int_{\mathbb{U}_{A_2^n }} \frac{1}{n} H_{\min}^{\eps}(A_1^n|  E^n J^n \bar{K} G_B)_{\omega^{U^{(2)}}} \, dU^{(2)}_{A_2^n}
 &\geq 
H(A_1|  E D A_2 )_{\omega}-\alpha_{7n}
\nonumber\\
&= I(A_1\rangle B)_\omega-\alpha_{7n}
 \,.
\label{eq:HA2rB}
\end{align}

Therefore, there exist unitaries $U^{(1)}_{A_1^n}$ and $U^{(2)}_{A_2^n }$ that satisfy the following inequalities,
\begin{align}
% &
\Big\lVert \trace_{A^n J^n }\big[\Pi^{U^{(2)}}_{A_1^n\to A^n J^n \bar{K} G_B}%
U^{(1)}_{A_1^n}\psi^{(1)}_{A_1^n K}\big]
-\theta_{K}\otimes 
\omega^{U^{(2)}}_{\bar{K} G_B}
\Big\rVert_1 
%\nonumber\\
 &\leq
2^{-\frac{1}{2}n [ H(A_1|  A_2)_{\phi}-Q-\beta_n ]} \,,
\label{eq:Dbound10}
\\
% &
\big\lVert \Tset_{A_1 A_2\to EJ}^{\otimes n}(U^{(1)}_{A_1^n}\psi^{(1)}_{A_1^n K}\otimes U^{(2)}_{A_2^n }\psi^{(2)}_{A_2^n  \bar{K} G_B})
-\theta_{K}\otimes 
\omega^{U^{(2)}}_{E^n J^n \bar{K} G_B}
\big\rVert_1  
%\nonumber\\
 &\leq
 2^{-\frac{1}{2}n[ I(A_1\rangle B)_{\omega}-Q-\beta_n ]}  \,,
\label{eq:Dbound8}
\\
%
%&
\big\lVert \Tset_{A_1 A_2\to EJ}^{\otimes n}(U^{(1)}_{A_1^n}\psi^{(1)}_{A_1^n K}\otimes U^{(2)}_{A_2^n }\psi^{(2)}_{A_2^n  \bar{K}})
-\xi_{\bar{K}}\otimes 
\omega^{U^{(1)}}_{E^n J^n K}
\big\rVert_1  
%\nonumber\\
&\leq
 2^{-\frac{1}{2}n[ I(A_2 \rangle B)_{\omega}-(Q+Q'-R_e)-\beta_n ]}  \,,
\label{eq:Dbound9}
\\
%
%&
 \big\lVert \omega^{U^{(2)}}_{\bar{K} G_B}-\xi_{\bar{K}}\otimes \Phi_{G_B}
\big\rVert_1 
=\big\lVert \bar{\Tset}_{A_2\to \mathbb{C}}^{\otimes n}(U^{(2)}_{A_2^n }\psi^{(2)}_{A_2^n  \bar{K} G_B})
-\xi_{\bar{K}}\otimes \Phi_{G_B}
\big\rVert_1  
%\nonumber\\
&\leq
 2^{-\frac{1}{2}n[ H(A_2 )_{\phi}-(Q+Q'+R_e)-\beta_n ]} %
\,,
\label{eq:Dbound11}
\end{align}
for some $\beta_n$ that tends to zero as $n\to \infty$, as the first inequality follows from (\ref{eq:Dbound5}) and (\ref{eq:HA1gA2}), the second 
from  (\ref{eq:Dbound3}) and (\ref{eq:HA2rB}), the third is due to (\ref{eq:Dbound4}) and (\ref{eq:HA1gEM}), and the last holds by (\ref{eq:Dbound6}).

\subsubsection{Encoding}
By the triangle inequality, (\ref{eq:Dbound10}) and (\ref{eq:Dbound11}) %
yield
\begin{align}
& \Big\lVert \trace_{A^n J^n }\big[\Pi^{U^{(2)}}_{A_1^n\to A^n J^n \bar{K} G_B}%
U^{(1)}_{A_1^n}\psi^{(1)}_{A_1^n K}\big]
-\theta_{K}\otimes 
\xi_{\bar{K}}\otimes 
\Phi_{G_B}
\Big\rVert_1
\leq
\delta_{\text{enc}}(n)
 \,,
\label{eq:Dbound13}
\intertext{where}
&\delta_{\text{enc}}(n) \equiv 
2^{-\frac{1}{2}n [ H(A_1|  A_2)_{\phi}-Q-\beta_n ]} 
+
2^{-\frac{1}{2}n[ H(A_2 )_{\phi}-(Q+Q'+R_e)-\beta_n ]} \,.
\end{align}
Based on Uhlmann's theorem, it follows that there exists an isometry $F_{M \bar{M} G_A\to A^n J^n }$ such that
\begin{align}
& \Big\lVert \Pi^{U^{(2)}}_{A_1^n\to A^n J^n \bar{K} G_B}%
U^{(1)}_{A_1^n}\psi^{(1)}_{A_1^n K}
-F_{M \bar{M} G_A\to A^n J^n }  (\theta_{M K}\otimes 
\xi_{\bar{M} \bar{K} }\otimes 
\Phi_{ G_A  G_B})
\Big\rVert_1
%\nonumber\\&
\leq
2\sqrt{\delta_{\text{enc}}(n)}
 \,.
\label{eq:Dbound14}
\end{align}

\subsubsection{Decoding without Assistance}
By applying the isometric extension of the channel to the states on the LHS of (\ref{eq:Dbound14}) and using the triangle inequality and the monotonicity
of the trace distance under quantum channels, we obtain
\begin{align}
& \Big\lVert \Tset^{U^{(2)}}_{A_1^n\to E^n J^n \bar{K} G_B}(
U^{(1)}_{A_1^n}\psi^{(1)}_{A_1^n K})
%\nonumber\\&
-\trace_{B^n}\big[ (U_{A\to BE}^{\channel})^{\otimes n} F_{M \bar{M} G_A\to A^n J^n }
(\theta_{M K}\otimes 
\xi_{\bar{M}  \bar{K} }\otimes 
\Phi_{G_A G_B})\big]
\Big\rVert_1
%\nonumber\\&
\leq
2\sqrt{\delta_{\text{enc}}(n)}
 \,.
\label{eq:Dbound15}
\end{align}
By (\ref{eq:channelEQ}), the inequality above can also be written as
\begin{align}
& \Big\lVert \Tset_{A_1 A_2\to EJ}^{\otimes n}(U^{(1)}_{A_1^n}\psi^{(1)}_{A_1^n K}\otimes U^{(2)}_{A_2^n }\psi^{(2)}_{A_2^n  \bar{K} G_B})
%\nonumber\\&
-\trace_{B^n}\big[ (U_{A\to BE}^{\channel})^{\otimes n} F_{M \bar{M} G_A\to A^n J^n }
(\theta_{M K}\otimes 
\xi_{\bar{M} \bar{K} }\otimes 
\Phi_{G_A G_B})\big]
\Big\rVert_1
%\nonumber\\&
\leq
2\sqrt{\delta_{\text{enc}}(n)}
 \,.
\label{eq:Dbound16}
\end{align}
Together with (\ref{eq:Dbound8}), %
this implies
\begin{align}
& \Big\lVert \trace_{B^n}\big[ (U_{A\to BE}^{\channel})^{\otimes n} F_{M \bar{M} G_A\to A^n J^n }
(\theta_{M K}\otimes 
\xi_{\bar{M} \bar{K} }\otimes 
\Phi_{ G_A  G_B})\big]
%\nonumber\\&
-\theta_{K}\otimes 
\omega^{U^{(2)}}_{E^n J^n \bar{K} G_B}
\Big\rVert_1
%\nonumber\\&
\leq
2\sqrt{\delta_{\text{enc}}(n)}+\delta_1(n)
 \,,
\label{eq:Dbound17}
\end{align}
where $\delta_1(n)\equiv 2^{-\frac{1}{2}n[ I(A_1\rangle B)_{\omega}-Q-\beta_n ]}$.

Thus, by Uhlmann's theorem, there exists an isometry $D^*_{B^n \to M J_1}$, such that 
\begin{align}
& \Big\lVert D^*_{B^n\to M J_1} (U_{A\to BE}^{\channel})^{\otimes n} F_{M \bar{M} G_A\to A^n J^n }
(\theta_{M K}\otimes 
\xi_{\bar{M} \bar{K} }\otimes 
\Phi_{ G_A  G_B})
%\nonumber\\&
-\theta_{M K}\otimes 
\hat{\omega}_{E^n J^n \bar{K} G_B J_1}
\Big\rVert_1
\leq
2\sqrt{2\sqrt{\delta_{\text{enc}}(n)}+\delta_1(n)}
 \,,
\label{eq:Dbound18}
\end{align}
where $\hat{\omega}_{E^n J^n \bar{K} G_B J_1}$ is an arbitrary purification of $\omega^{U^{(2)}}_{E^n J^n \bar{K} G_B}$.
By tracing over $E^n J^n \bar{K}G_B J_1$, we deduce that there exist an encoding map $\Fset_{M\bar{M} G_A\to A^n }$ and a decoding map $\Dset^*_{B^n\to M}$, such that 
\begin{align}
& \Big\lVert (\Dset^*_{B^n\to M J_1}\circ \channel^{\otimes n}_{A\to B}\circ \Fset_{M \bar{M} G_A\to A^n })
(\theta_{M K}\otimes 
\xi_{\bar{M} }\otimes 
\Phi_{ G_A})-\theta_{M K}
\Big\rVert_1
%\nonumber\\&
\leq
2\sqrt{2\sqrt{\delta_{\text{enc}}}+\delta_1(n)}
 \,.
\label{eq:Dbound18t}
\end{align}

\subsubsection{Decoding with Entanglement Assistance}
The bound in (\ref{eq:Dbound16}),  together with (\ref{eq:Dbound9}), %
implies
\begin{align}
& \Big\lVert \trace_{B^n G_B}\big[ (U_{A\to BE}^{\channel})^{\otimes n} F_{M \bar{M} G_A\to A^n J^n }
(\theta_{M K}\otimes 
\xi_{\bar{M}  \bar{K} }\otimes 
\Phi_{G_A G_B})\big]
%\nonumber\\&
-\xi_{\bar{K}}\otimes 
\omega^{U^{(1)}}_{E^n J^n K}
\Big\rVert_1
\leq
2\sqrt{\delta_{\text{enc}}(n)}+\delta_2(n)
 \,,
\label{eq:Dbound19}
\end{align}
where $\delta_2(n)\equiv 2^{-\frac{1}{2}n[ I(A_2 \rangle B)_{\omega}-(Q+Q'-R_e)-\beta_n ]} $.
Thus, by Uhlmann's theorem, there exists an isometry $D_{B^n G_B\to \bar{M} G'_A G'_B J_2}$, such that 
\begin{align}
& \Big\lVert D_{B^n G_B\to  \bar{M} G_A' G_B' J_2} (U_{A\to BE}^{\channel})^{\otimes n} F_{M \bar{M} G_A\to A^n J^n }
(\theta_{M K}\otimes 
\xi_{\bar{M} \bar{K}}\otimes 
\Phi_{ G_A  G_B})
%\nonumber\\&
-\xi_{\bar{M} \bar{K}}\otimes 
\Phi_{ G_A  G_B}\otimes 
\hat{\omega}%
_{E^n J^n K J_2}
\Big\rVert_1
\nonumber\\&
\leq
2\sqrt{2\sqrt{\delta_{\text{enc}}(n)}+\delta_2(n)}
 \,.
\label{eq:Dbound20}
\end{align}
By tracing over $E^n J^n K G_A' G_B'  J_2$, we deduce that $\Fset_{M\bar{M} G_A\to A^n }$ and $\Dset_{BG_B \to \bar{M}}$ satisfy 
\begin{align}
& \Big\lVert \Dset_{B^n G_B\to  \bar{M} }\circ \channel_{A\to B}^{\otimes n}\circ 
\Fset_{M \bar{M} G_A\to A^n }
(\theta_{M}\otimes 
\xi_{\bar{M} \bar{K}}\otimes 
\Phi_{ G_A  G_B})
-\xi_{\bar{M} \bar{K}}
\Big\rVert_1
%\nonumber\\&
\leq
2\sqrt{2\sqrt{\delta_{\text{enc}}(n)}+\delta_2(n)}
 \,.
\label{eq:Dbound20t}
\end{align}
As $\delta_{\text{enc}}(n)$, $\delta_1(n)$, and $\delta_2(n)$ tend to zero as $n\to\infty$ for rates as in (\ref{eq:QrateAch}), we deduce that the errors tend to zero as well. 

Choosing the entanglement rate $R_e=\frac{1}{2}[H(A_2)_\omega+H(A_2|  B)_\omega]$, it follows that  $\inQ_{\text{EA}*}(\channel)$ is an achievable rate region (\cf (\ref{eq:calRQea}) and (\ref{eq:QrateAch})).
To show that rate pairs in $\frac{1}{k}  \inQ_{\text{EA}*}(\channel^{\otimes k})$ are achievable as well, we employ the coding scheme above for the product channel $\channel^{\otimes k}$, where $k$ is arbitrarily large.
This completes the achievability proof. 

\subsection{Converse Proof}
\label{app:MskQ}
Consider the converse part. 
Suppose that Alice and Bob are trying to generate entanglement between them. %
An upper bound on the rate at which Alice and Bob can generate entanglement also serves as an upper bound on the quantum rate at which they can communicate qubits, since a noiseless quantum channel can be used to generate entanglement by sending one part of an entangled pair. In this task, Alice locally prepares two maximally entangled pairs,
\begin{align}
&|\Phi_{MK }\rangle \otimes |\Phi_{\bar{M} \bar{K}}\rangle = \left(\frac{1}{\sqrt{2^{nQ}}}\sum_{m=1}^{2^{nQ}} | m \rangle_{M} \otimes | m \rangle_{K} \right)
%\nonumber\\&
\otimes
\left(\frac{1}{\sqrt{2^{n(Q+Q')}}}\sum_{\bar{m}=1}^{2^{n(Q+Q')}} | \bar{m} \rangle_{\bar{M}} \otimes | \bar{m} \rangle_{\bar{K}} \right)
 \,.
\end{align}
If the entanglement assistance is reliable, then Alice and Bob share the quantum state $|\Phi_{G_A G_B }\rangle$, where $G_A$ and $G_B$ represent the entanglement resources that Alice and Bob share, respectively.
 Then Alice applies an encoding channel $\Fset_{M \bar{M} G_A \to A^n}$ to the quantum message systems $M\bar{M}$ and  her share $G_A$ of the entanglement resources. 
The resulting state is 
\begin{align}%
\varphi_{K\bar{K}     G_B A^n}\equiv
 \Fset_{M \bar{M} G_A \to A^n}( \Phi_{MK } \otimes \Phi_{\bar{M} \bar{K}}\otimes \Phi_{G_A G_B }  
) \,.
\label{eq:QconvI1}
\end{align}%
As Alice sends the systems $A^n$ through the channel, the output state is
\begin{align}
\omega_{K\bar{K}     G_B B^n}\equiv \channel^{\otimes n}_{A\to B} (\sigma_{ K\bar{K}     G_B A^n}) \,.
\end{align}
If the entanglement assistance is present, then Bob can access $G_B$. In this case, Bob performs a decoding channel $\Dset_{G_B B^n \rightarrow \breve{M}}$, hence
\begin{align}
\rho_{ K \bar{K} \breve{M} }\equiv \Dset_{G_B B^n \rightarrow \breve{M}}(\rho_{K\bar{K}     G_B B^n}) \,.
\label{eq:DecConv1QEA}
\end{align}
On the other hand, without assistance,   Bob performs  $\Dset^*_{B^n \rightarrow \hM}$, producing 
\begin{align}
\rho^*_{ K \bar{K} \hM  G_B}\equiv \Dset^*_{B^n \rightarrow \hM}(\rho_{K\bar{K}     G_B B^n}) \,.
\label{eq:DecConv1Q}
\end{align}
Since Bob has not received the entanglement resources, the system $G_B$  is not affected by itself.

Consider a sequence of codes $(\Fset_n,\Dset_n,\Dset_n^*)$ for entanglement generation given unreliable assistance, such that
\begin{align}
\frac{1}{2} \norm{ \rho_{\breve{M} \bar{K}} -\Phi_{\bar{M} \bar{K}} }_1 \leq& \alpha_n \,, \label{eq:randDconvQEA} \\
\frac{1}{2} \norm{ \rho^*_{\hM K} -\Phi_{M K} }_1 \leq& \alpha^*_n \,, \label{eq:randDconvQ} 
\end{align}
where %
$\alpha_n,\alpha^*_n$ tend to zero as $n\rightarrow\infty$.

By the Alicki-Fannes-Winter inequality \cite{AlickiFannes:04p,Winter:16p} \cite[Theorem 11.10.3]{Wilde:17b}, (\ref{eq:randDconvQ}) implies $|H(K|\hM)_{\rho^*} - H(K|M)_{\Phi} |\leq n\eps_n$, or equivalently,
\begin{align}
|I(K \rangle \hM)_{\rho^*} - I(K \rangle M)_{\Phi} |\leq n\eps_n \,,
\label{eq:AFWq}
\end{align}
where $\eps_n$ tends to zero as $n\rightarrow\infty$. 
Observe that $I(K \rangle M)_\Phi=H(M)_\Phi-H(KM)_\Phi=nQ-0=nQ$. Thus,
\begin{align}
nQ=& I(K\rangle M)_\Phi \nonumber\\
  \leq& I(K\rangle \hM)_{\rho^*}+n\eps_n \nonumber\\
	\leq& I(K\rangle B^n)_\omega+n\eps_n \,,
	\label{eq:ineqQuna1}
\end{align}
where the last line follows from (\ref{eq:DecConv1Q}) and the data processing inequality for the coherent information 
\cite[Theorem 11.9.3]{Wilde:17b}.
In addition,
\begin{align}
nQ=& H(K)_\Phi \nonumber\\
  =& H(K|\bar{K}G_B)_{\Phi\otimes\Phi\otimes\Phi} \nonumber\\
	=& H(K|\bar{K}G_B)_\omega \,,
	\label{eq:ineqQuna2}
\end{align}
where the second line follows since $K$, $\bar{K}$, $G_B$ are in a product state.

As for decoding with entanglement assistance,
(\ref{eq:randDconvQEA}) implies $|I(\bar{K};\bar{M})_{\Phi}-I(\bar{K};\breve{M})_{\rho}   |\leq n \bar{\eps}_n$, %
by the Alicki-Fannes-Winter inequality, where $\bar{\eps}_n$ tends to zero as $n\rightarrow\infty$. 
Therefore,
\begin{align}
n(Q+Q')&=\frac{1}{2} I(\bar{K};\bar{M})_\Phi \nonumber\\
&\leq \frac{1}{2} I(\bar{K};\breve{M})_\rho+n\bar{\eps}_n \nonumber\\
&\leq \frac{1}{2} I(\bar{K};G_B B^n)_\omega+n\bar{\eps}_n
\label{eq:ineqQuna3a}
\end{align}
by %
the data processing inequality for the quantum mutual information.
By the chain rule, the mutual information above satisfies
\begin{align}
I(\bar{K};G_B B^n)_\omega&= I(\bar{K}; B^n|G_B)_\omega+I(\bar{K};G_B)_{\Phi\otimes\Phi} \nonumber\\
&= I(\bar{K}; B^n|G_B)_\omega \nonumber\\
&\leq I(\bar{K}G_B; B^n)_\omega \,.
\label{eq:ineqQuna3b}
\end{align}

The regularized converse follows from
(\ref{eq:ineqQuna1}), (\ref{eq:ineqQuna2}), and (\ref{eq:ineqQuna3a})-(\ref{eq:ineqQuna3b}),
as we let $A_1^n$ and $A_2^n$ be quantum systems such that for some isometries $W^{(1)}_{K\rightarrow A_1^n}$ and $W^{(2)}_{\bar{K}G_B\rightarrow A_2^n}$, 
\begin{align}
&\varphi_{A_1^n A_2^n  A^n}=
\left(W^{(1)}_{K\rightarrow A_1^n}\otimes W^{(2)}_{\bar{K}G_B\rightarrow A_2^n} \right)
 \varphi_{K\bar{K}G_B  A^n   } \left(W^{(1)}_{K\rightarrow A_1^n}\otimes W^{(2)}_{\bar{K}\rightarrow A_2^n} \right)^\dagger \,.
\end{align}
This completes the proof of Theorem~\ref{theo:qEA}. %
\qed

}
\end{appendices}

\bibliography{references}{}
\begin{IEEEbiographynophoto}
{Uzi Pereg} 
 received the B.Sc. (summa cum laude) degree in electrical engineering from Azrieli College of Engineering, Jerusalem, Israel, in 2011, and the M.Sc. and Ph.D. degrees from the Technion -- Israel Institute of Technology, Haifa, Israel, in 2015 and 2019, respectively. He was a postdoctoral fellow at the Institute for Communications Engineering of the Technical University of Munich and the Munich Center for Quantum Science and Technology (MCQST), from 2020 to 2022. During this period, he participated in the theory group of the German federal government (BMBF) project for the design and analysis of quantum communication and repeater systems. Currently, he is an associate professor in the Faculty of Electrical and Computer Engineering and the Hellen Diller Quantum Center, Technion -- Israel Institute of Technology, Haifa, Israel (uzipereg@technion.ac.il). His research interests are in the areas of quantum communications, information theory, and coding theory. Uzi is a recipient of the 2018 Pearl award for outstanding research work in the field of communications, 2018 KLA-Tencor Award for Excellent Conference Paper, 2019 Viterbi Ph.D. Fellowship, 2020 Viterbi postdoctoral fellowship, 2020-2021 Israel CHE Fellowship for Quantum Science and Technology, 2022 Seed Funding Grant for Exceptional Junior Researchers of the Munich Center for Quantum Science and Technology, 2022 Israel VATAT Junior Faculty Program for Quantum Science and Technology, and the Chaya Career Advancement Chair.
\end{IEEEbiographynophoto}

\begin{IEEEbiographynophoto}
{Christian Deppe} 
 received the Dipl.-Math. degree in mathematics from the Universit\"at Bielefeld, Bielefeld, Germany, in 1996, and the Dr.-Math. degree in mathematics from the Universit\"at Bielefeld, Bielefeld, Germany, in 1998. He was a Research and Teaching Assistant with the Fakult\"at f\"ur Mathematik, Universit\"at Bielefeld, from 1998 to 2010. From 2011 to 2013 he was the leader of the project ``Sicherheit und Robustheit des Quanten-Repeaters´´ of the Federal Ministry of Education and Research at Fakult\"at f\"ur Mathematik, Universit\"at Bielefeld. In 2014 he was supported by a DFG project at the Institute of Theoretical Information Technology, Technische Universit\"at M\"unchen. In 2015 he held a temporary professorship at the Fakult\"at f\"ur Mathematik und Informatik, Friedrich-Schiller Universit\"at Jena. Since 2018 he is at the Department of Communications Engineering at the Technical University of Munich. He is the leader of several projects funded by the BMBF, the DFG, and the state of Bavaria.
\end{IEEEbiographynophoto}

\begin{IEEEbiographynophoto}%[{\includegraphics[width=1in,height=1.25in,clip,keepaspectratio]{./shell}}]
{Holger Boche}  (Fellow, IEEE) received the Dipl.-Ing. degree in electrical
engineering, the Graduate degree in mathematics, and the Dr.-Ing. degree
in electrical engineering from the Technische Universit\"at Dresden, Dresden,
Germany, in 1990, 1992, and 1994, respectively, the master's degree from
the Friedrich-Schiller Universit\"at Jena, Jena, Germany, in 1997, and the
Dr. rer. nat. degree in pure mathematics from the Technische Universit\"at
Berlin, Berlin, Germany, in 1998.
In 1997, he joined the Fraunhofer Institute for Telecommunications, Heinrich Hertz Institute (HHI), Berlin. From 2002 to 2010, he was a Full Professor
of mobile communication networks with the Institute for Communications
Systems, Technische Universit\"at Berlin. In 2003, he became the Director
of the Fraunhofer German-Sino Laboratory for Mobile Communications,
Berlin. In 2004, he became the Director of the Fraunhofer Institute for
Telecommunications, HHI. He was a Visiting Professor with ETH Z\"urich,
Z\"urich, Switzerland, from 2004 to 2006 (Winter), and with KTH Royal
Institute of Technology, Stockholm, Sweden, in 2005 (Summer). He has been
a member and an Honorary Fellow of the TUM Institute for Advanced Study,
Munich, Germany, since 2014. Since 2018, he has been a Founding Director
of the Center for Quantum Engineering, Technische Universit\"at M\"unchen.
Since 2021, he has been jointly leading the BMBF Research Hub 6G-life with
Frank Fitzek. He is currently a Full Professor with the Institute of Theoretical
Information Technology, Technische Universit\"at M\"unchen, Munich, Germany,
which he joined in October 2010. Among his publications is the recent
book, Information Theoretic Security and Privacy of Information Systems
(Cambridge University Press, 2017).
Prof. Boche is a member of the IEEE Signal Processing Society SPCOM
and SPTM Technical Committees. He was an Elected Member of the German
Academy of Sciences (Leopoldina) in 2008 and to the Berlin Brandenburg
Academy of Sciences and Humanities in 2009. He was a recipient of
the Research Award ``Technische Kommunikation" from the Alcatel SEL
Foundation in October 2003, the ``Innovation Award" from the Vodafone
Foundation in June 2006, and the Gottfried Wilhelm Leibniz Prize from the
Deutsche Forschungsgemeinschaft (German Research Foundation) in 2008.
He was a co-recipient of the 2006 IEEE Signal Processing Society Best Paper
Award and a recipient of the 2007 IEEE Signal Processing Society Best Paper
Award. He was the General Chair of the Symposium on Information Theoretic
Approaches to Security and Privacy at IEEE GlobalSIP 2016.
\end{IEEEbiographynophoto} 
\end{document}